% !TEX spellcheck = en_US
\documentclass[11pt]{article}
\usepackage{epsfig}
\usepackage{amssymb,amsmath,amsthm,url}
\usepackage{multirow}
\usepackage{booktabs}
\usepackage{natbib}
\usepackage{setspace}
\usepackage{dsfont}
\usepackage{amsmath}
\usepackage{amssymb}
\usepackage{amsthm}
\usepackage{upgreek}
\usepackage{mathrsfs}
\usepackage{makecell}
\usepackage{verbatim}
\usepackage[symbol]{footmisc}
\usepackage{lscape}
\usepackage{caption}
\usepackage{enumerate}
\usepackage{hyperref}
\usepackage{times,amsfonts,eucal}
\usepackage{times,amsfonts}
\usepackage{algorithm}
\usepackage{algorithmic}
\usepackage{graphicx,ifthen}
\usepackage{wrapfig}
\usepackage{latexsym}
\usepackage{color}
\usepackage{bm}
\usepackage{bbm}
\usepackage{srctex}
\usepackage[show]{simple_notes}

\newcommand{\ignore}[1]{}
%%%%%%%%%%%%%%%%%%%
\usepackage{bm}
%\usepackage{natbib}
%\bibliographystyle{jasa}
%%%%%%%%%%%%%%%%%%%

\newtheorem{theorem}{Theorem}[section]
\newtheorem{lemma}{Lemma}[section]

\newtheorem{proposition}{Proposition}[section]

\newtheorem{remark}{Remark}[section]

\newtheorem{assumption}{Assumption}[section]
\newtheorem{algo}{Algorithm}[section]
%%%%%%%%%%%%%%%%%%%%%
\numberwithin{equation}{section}
\numberwithin{theorem}{section}
\numberwithin{lemma}{section}
\numberwithin{pro}{section}
\numberwithin{cor}{section}
\numberwithin{definition}{section}
\numberwithin{cons}{section}
\numberwithin{rem}{section}
\numberwithin{exa}{section}
\numberwithin{table}{section}
\numberwithin{figure}{section}
\numberwithin{algo}{section}

\newcommand{\C}{\mathbf{C}}

\newcommand{\ts}{\textstyle}
\newcommand{\ttop}{^{\top}}
\newcommand{\ve}{\varepsilon}
\newcommand{\var}{\text{var}}

\newcommand{\f}{\mathbf{f}}

\newcommand{\tI}{\text{\rm{I}}}
\newcommand{\tII}{\text{\rm{II}}}
\newcommand{\tIII}{\text{\rm{III}}}

%%%%%%%%%%%%%%%%%%%%%%%%
\def\B{{\mathcal{B}}}

\def\C{{\mathbb C}}

\def\P{{\mathbb P}}

\def\R{{\mathbb R}}

\def\F{\mathcal{F}}

\def\P{\mathbb{P}}

\def\cals_+{{\cals_+}}

\def\cals{{\mathcal{S}}}

%%%%%%%%%%%%%%%%%%%%%%%%

%\def\bs{\boldsymbol}

\def\tr{\mathrm{tr}}

\def\beq{\begin{equation}}
\def\eeq{\end{equation}}
\def\bals{\begin{align*}}
\def\eals{\end{align*}}

\def\bal{\begin{align}}
\def\eal{\end{align}}

%%%%%%%%%%%%%%%%%%%%%%%%

\pagetotal=29.7cm \textwidth=16.8cm \textheight=23.2cm
\topmargin=-1.8cm \headheight=0.3cm \headsep=1.6cm
\oddsidemargin=0cm \evensidemargin=0cm \arraycolsep=2pt
\columnsep=0.60cm

\allowdisplaybreaks

\begin{document}

\title{Bootstrapping spectral statistics in high dimensions\footnote{This research was partially supported by NSF grants DMS 1305858, DMS 1407530, and DMS 1613218. Debashis Paul is thanked for helpful feedback.
}}

\author{
Miles Lopes\footnote{Department of Statistics, University of California, One Shields Avenue, Davis, CA 95616, USA,  emails: \tt{melopes,ablandino,aaue@ucdavis.edu}}
\and Andrew Blandino$^\dagger$
\and Alexander Aue$^\dagger$
}

\date{\today}
\maketitle

\begin{abstract}
\setlength{\baselineskip}{1.66em}

\maketitle

Statistics derived from the eigenvalues of sample covariance matrices are called spectral statistics, and they play a central role in multivariate testing. Although bootstrap methods are an established approach to approximating the laws of spectral statistics in low-dimensional problems, these methods are relatively unexplored in the high-dimensional setting. The aim of this paper is to focus on linear spectral statistics as a class of prototypes for developing a new bootstrap in high-dimensions --- and we refer to this method as the Spectral Bootstrap.  In essence, the method originates from the parametric bootstrap, and is motivated by the notion that, in high dimensions, it is difficult to obtain a non-parametric approximation to the full data-generating distribution. 
From a practical standpoint, the method is easy to use, and allows the user to circumvent the difficulties of complex asymptotic formulas for linear spectral statistics. In addition to proving  the consistency of the proposed method, we provide encouraging empirical results in a variety of settings. Lastly, and perhaps most interestingly, we show through simulations that the method can be applied successfully to statistics outside the class of linear spectral statistics, such as the largest sample eigenvalue and others.

\medskip 

\noindent {\bf Keywords:} Bootstrap methods; Central limit theorem, Linear spectral statistics; Mar\v{c}enko--Pastur law; Nonlinear spectral statistics; Spectrum estimation 

\noindent {\bf MSC 2010:} Primary: 62F40, 60B20; Secondary: 62H10, 60F05
\end{abstract}

\setlength{\baselineskip}{1.66em}

%%%%%%%%%%%%%%%%%%%%%%%%
\section{Introduction}
\label{sec:intro}
%%%%%%%%%%%%%%%%%%%%%%%%

This paper is concerned with developing a new method for bootstrapping statistics derived from the eigenvalues of high-dimensional sample covariance matrices ---  referred to as spectral statistics. With regard to the general problem of approximating the distributions of these statistics, random matrix theory and bootstrap methods offer two complementary approaches.  On one hand, random matrix theory makes it possible to understand certain statistics in fine-grained detail, with the help of specialized asymptotic formulas. On the other hand, bootstrap methods offer the prospect of a general-purpose approach that may handle a variety of problems in a streamlined way. 
In recent years, the two approaches have developed along different trajectories, and in comparison to the rapid advances in random matrix theory~\citep{Bai:Silverstein:2010,Paul:Aue:2014,Yao:Zheng:Bai:2015}, relatively little is known about the performance of bootstrap methods in the context of high-dimensional covariance matrices. 

Based on these considerations, it is of basic interest to identify classes of spectral statistics for which bootstrap methods can succeed in high dimensions. For this purpose, linear spectral statistics provide a fairly broad class that may be used as a testing ground. Moreover, linear spectral statistics are an attractive class of prototypes because they are closely related to many classical statistics in multivariate analysis \citep{Muirhead:2005}, and also, because their probabilistic theory is well-developed --- which facilitates the analysis of bootstrap methods.

Much of the modern work on the central limit theorem for linear spectral statistics in high dimensions was initiated in the pioneering  papers~\cite{Jonsson:1982} and \cite{Bai:Silverstein:2004}. In the case when data are generated by an underlying matrix of i.i.d.~random variables, these papers assume that the variables are either Gaussian or have kurtosis equal to 3. Subsequent papers, such as \cite{Pan:Zhou:2008}, \cite{Lytova:Pastur:2009}, \cite{Zheng:2012}, \cite{Wang:Yao:2013} and~\cite{Najim:Yao:2016}, have sought to relax the kurtosis condition at the expense of having to deal with additional non-vanishing higher-order terms that alter the form of the limit. In particular, the central limit theorems derived in these papers lead to intricate expressions for the limit laws of linear spectral statistics. Furthermore, if the kurtosis differs from 3, the existence of a limiting distribution is not assured without extra assumptions on the population eigenvectors --- which is elucidated in the papers~\cite{Pan:Zhou:2008} and \cite{Najim:Yao:2016}.

Although the asymptotic formulas for linear spectral statistics provide valuable insight, it is important to note  that numerically evaluating them can be fairly technical. Typically, this involves plugging parameter estimates into expressions involving complex derivatives, contour integrals, or multivariate polynomials of high degree, which entail non-trivial numerical issues, as discussed in the papers \cite{Rao:Mingo:Speicher:Edelman:2008} and \cite{Dobriban:2015}. By contrast, bootstrap methods have the ability to bypass many of these details, because the formulas will typically be evaluated implicitly by a sampling mechanism.
 Another related benefit is that if the settings of an application are updated, a bootstrap method may often be left unaltered,  whereas formula-based methods may be more sensitive to such changes.

In low dimensions, the bootstrap generally works for smooth functionals of the sample covariance matrix, with difficulties potentially arising in certain cases; for example, if population eigenvalues are tied. An overview of these settings is given in \cite{Hall:2009}, and remedies of various kinds have been proposed in \cite{Beran:1985}, \cite{Dumbgen:1993} and \cite{Hall:2009}, among others. In high-dimensions, there are few contributions to the literature on bootstrap procedures, and those available report mixed outcomes. For example, \cite{Pan:Gao:Yang:2014} briefly discuss a high-dimensional bootstrap method for constructing test statistics based on linear spectral statistics, but a different method is ultimately pursued in that work. Outside of the class of linear spectral statistics, the recent paper \cite{ElKaroui:Purdom:2016} considers both successes and failures of the standard non-parametric bootstrap in high dimensions. Specifically, it is proven that when the population covariance matrix is effectively low-rank, the non-parametric bootstrap can consistently approximate the joint distribution of a fixed set of the largest sample eigenvalues. However, it is also shown that the non-parametric bootstrap can fail to approximate the law of the largest sample eigenvalue when its population counterpart is not well separated from the bulk. Concerning the implementation of the non-parametric bootstrap, the recent paper~\cite{Fisher:2016} develops an efficient algorithm in the context of high-dimensional principal components analysis. Another computationally-oriented work is~\cite{Rao:Mingo:Speicher:Edelman:2008}, which deals with inference procedures based on tracial moments.

The primary methodological contribution of this paper is a bootstrap procedure for linear spectral statistics that is both user-friendly, and consistent, under certain assumptions. In light of the mentioned difficulties of the non-parametric bootstrap in high dimensions, it is natural to consider a different approach inspired by the parametric bootstrap. Specifically, our approach treats the population eigenvalues and kurtosis as the essential parameters for approximating the distributions of linear spectral statistics.
 Likewise, the proposed algorithm involves sampling bootstrap data from a proxy distribution that is parameterized by estimates of the eigenvalues and kurtosis. The approach taken here bears some similarity with the bootstrap method of \cite{Pan:Gao:Yang:2014}, since both may be viewed as relatives of the parametric bootstrap. However, there is no further overlap, as the bootstrap in \cite{Pan:Gao:Yang:2014} is intended to produce a specific type of test statistic, and is not designed to approximate the distributions of general linear spectral statistics, as pursued here. 

The main theoretical contribution of this paper is the verification of bootstrap consistency. To place this result in context, it is worth mentioning that, to the best of our knowledge, a bootstrap consistency result for general linear spectral statistics has not previously been available in high dimensions, even when the true covariance matrix is diagonal, or when the data are Gaussian. Nevertheless, the results here are embedded in a more general setting that allows for non-diagonal covariance matrices with sufficiently regular eigenvectors, as well as \smash{non-Gaussian} data. The proof synthesizes recent results on the central limit theorem for linear spectral statistics and spectrum estimation~\citep{Najim:Yao:2016,Ledoit:Wolf:2015}. Along the way, consistency is also established for a new kurtosis estimator that may be useful in other situations. The theoretical results are complemented by a simulation study for  several types of linear spectral statistics, which indicates that the proposed bootstrap has excellent performance in finite samples, even when the dimension is larger than the sample size. One of the most interesting aspects of the method is that it appears to extend well to various nonlinear spectral statistics, for which asymptotic formulas are more scarce. This fact is highlighted through experiments on three nonlinear spectral statistics: the largest sample eigenvalue, the sum of the top ten sample eigenvalues and the spectral gap statistic. Moreover, the proposed bootstrap leads to favorable results when applied to several classical sphericity tests, including some nonlinear ones.

%%%%%%%%%%%%%%%%%%%%%%%%
\section{Setting and preliminaries}
\label{sec:setting_assumptions}

%\subsection{Model and assumptions}

Our analysis is based on a standard framework for high-dimensional asymptotics, involving a set of $n$ samples in $\R^p$, where the dimension $p=p(n)$ grows at the same rate as $n$. The data-generating model is assumed to satisfy the following conditions, where the samples are represented as the rows of the data matrix $X\in\R^{n\times p}$.  Even though $X$ depends on $n$, this is generally suppressed, except in some technical arguments, and the same convention is applied to a number of other objects.

\begin{assumption}[Data-generating model]
\label{assumption:data}
\def~{\hphantom{0}} For each $n$, the population covariance matrix $\Sigma_n\in \R^{p\times p}$ is positive definite. As $n\to\infty$, the dimension satisfies $p=p(n)\to\infty$, such that $\gamma_n=p/n\to\gamma$ for some constant $\gamma\in (0,\infty)\setminus\{1\}$. For each $n$, the data matrix $X$ can be represented as $X=Z\Sigma_n^{1/2}$, where the matrix $Z\in\R^{n\times p}$
 is the upper-left $n\times p$ block of a doubly-infinite array of i.i.d.~random variables satisfying $E(Z_{11})=0$, $E(Z_{11}^2)=1$, $\kappa:=E(Z_{11}^4)>1$, and $E(Z_{11}^8)<\infty$.
\end{assumption}
 
%\begin{rem}
%{\rm 
\begin{remark}
\textup{The restriction $\gamma\neq 1$ is made for purely technical reasons, in order to use existing theory for spectrum estimation, as discussed in Section \ref{sec:method}. The proposed method can, however, still be implemented when $p=n$.}
\end{remark}
%}
%\end{rem}

Define the sample covariance matrix $\hat\Sigma_n=n^{-1} X^\top X$, and denote its ordered eigenvalues by
\[
\lambda_1(\hat\Sigma_n)\geq\cdots\geq\lambda_p(\hat\Sigma_n).
\] 
In the high-dimensional setting, asymptotic results on the eigenvalues of $\hat\Sigma_n$ are often stated in terms of the empirical spectral distribution $\hat H_n$, defined through
\begin{equation}\label{EQ:ESD}
\hat H_n(\lambda)=p^{-1} \sum_{j=1}^p1\{\lambda_j(\hat\Sigma_n)\leq\lambda\}.
\end{equation}
Denote by $H_n$ the population counterpart of $\hat H_n$ in \eqref{EQ:ESD}, using the population eigenvalues $\lambda_j(\Sigma_n)$ in place of the sample eigenvalues $\lambda_j(\hat\Sigma_n)$. 

The class of linear spectral statistics associated with $\hat\Sigma_n$ consists of statistics of the form
\begin{equation*}\label{EQ:LSS}
T_n(f)=\int f(\lambda)d\hat H_n(\lambda)=p^{-1}\sum_{j=1}^pf\{\lambda_j(\hat\Sigma_n)\},
\end{equation*}
where $f$ is a sufficiently smooth real-valued function on an open interval $\mathcal{I}\subset\R$. It will be sufficient to assume that $f$ has $k$ continuous derivatives, denoted by~$f\in\mathscr{C}^k(\mathcal{I})$, where $k$ will be specified in the pertinent results.
  To specify $\mathcal{I}$ in detail, define an interval $[a,b]$ with endpoints\label{intervalcomments}
\begin{align*}
a &=(1-\sqrt{\gamma})_+^2 \ts\liminf_n \lambda_p(\Sigma_n),\\[0.1cm]
b &=(1+\sqrt{\gamma})^2\ts\limsup_n \lambda_1(\Sigma_n),
\end{align*}
where $x_+^2=(\max\{x,0\})^2$. Note that the boundedness of the interval $[a,b]$ will be implied by Assumption \ref{assumption:esd} below.  In turn, $\mathcal{I}$ is allowed to be any open interval containing $[a,b]$.
The reason for this choice of $\mathcal{I}$ is that  asymptotically, it is wide enough to contain all of the eigenvalues of $\hat\Sigma_n$. More precisely, with probability 1, every eigenvalue of $\hat\Sigma_n$ lies in $\mathcal{I}$ for all large $n$; see~\cite{Bai:Silverstein:1998,Bai:Silverstein:2004}. Lastly, when referring to the joint distribution of linear spectral statistics arising from several functions $\f=(f_1,\dots,f_m)$, the notation %\smash
{$T_n(\f)=(T_n(f_1),\dots,T_n(f_m))$} is used, with $m$ being a fixed number that does not depend on $n$.

The next assumption details additional asymptotic requirements on the population spectrum. Convergence in distribution is denoted as $\Rightarrow$.

\begin{assumption}[Regularity of spectrum]
There is a limiting spectral distribution $H$, so that as $n\to\infty$,
\begin{equation}\label{EQ:LSD}
H_n\Rightarrow H,
\end{equation}
where the support of $H$ is a finite union of closed intervals, bounded away from zero and infinity. Furthermore, there is a fixed compact interval in $(0,\infty)$ containing the support of $H_n$~for all large $n$.
\label{assumption:esd}
\end{assumption}

\noindent The existence of the limit~\eqref{EQ:LSD} is standard for proofs relying on arguments from random matrix theory. Meanwhile, the assumed structure on the support of $H$ allows us to make use of existing theoretical guarantees for estimating the spectrum of $\Sigma_n$, based on the QuEST algorithm of \cite{Ledoit:Wolf:2016}, to be discussed later.

In addition to Assumption ~\ref{assumption:esd}, regularity of the population eigenvectors is needed to establish the consistency of the proposed method in the case of non-Gaussian data, when $\kappa\neq 3$. A similar assumption has also been used previously in~\citep[Theorem 1.4]{Pan:Zhou:2008} in order to ensure the existence of a limiting distribution for linear spectral statistics. To state the assumption, recall some standard terminology. For any distribution function $F$, define its associated Stieltjes transform as the function $z\mapsto \int\frac{1}{\lambda-z}dF(\lambda)$, where $z$ ranges over the set $\C\setminus \R$. A second object to introduce is the Mar\v{c}enko--Pastur map, which describes the limiting spectral distribution. Specifically, under conditions weaker than Assumptions~\ref{assumption:data} and~\ref{assumption:esd}, it is a classical fact that the limit $\hat{H}_n\Rightarrow \mathcal{F}(H,\gamma)$ holds with probability 1, where $\mathcal{F}(H,\gamma)$ is a distribution that only depends on $H$ and $\gamma$. The notation $H_{\gamma}=\F(H,\gamma)$ will be used as a shorthand. The object $\mathcal{F}(\cdot,\cdot)$ was termed the Mar\v{c}enko--Pastur map in~\cite{Dobriban:2016}. Additional background may be found in the references~\cite{Marcenko:Pastur:1967} and \cite{Bai:Silverstein:2010}.

\begin{assumption}[Regularity of eigenvectors]\label{assumption:eigenvec}
Let $z\in\C \setminus \R$, and let $t_n(z)$ be the Stieltjes transform of $\mathcal{F}(H_n,\gamma_n)$.
Also, let $\Sigma_n=U_n\Lambda_nU_n\ttop$ be the spectral decomposition for $\Sigma_n$, and for each $\ell\in\{1,2\}$, define the non-random diagonal matrix
\begin{equation}\label{EQ:GAMMADEF}
\Gamma_{n,\ell}(z)=\Lambda_n^{1/2}\Big[-zI_p +\{(1-\gamma_n)-z\gamma_n t_n(z)\}\Lambda_n\Big]^{\!-\ell}\Lambda_n^{1/2}.
\end{equation}
Then, for any fixed numbers $z_1,z_2\in\C\setminus\R$, and for each $\ell\in\{1,2\}$, the following limit holds as $n\to\infty$:
\begin{equation}\label{EIGVEC1}
 \frac 1p\sum_{j=1}^p \{U_n\Gamma_{n,\ell}(z_1)U_n\ttop \}_{jj} \{U_n\Gamma_{n,2}(z_2)U_n\ttop\}_{jj}
 =  \frac 1p\sum_{j=1}^p \{\Gamma_{n,\ell}(z_1)\}_{jj} \{\Gamma_{n,2}(z_2)\}_{jj}\ +o(1).
\end{equation}
\end{assumption}

Note that the invertibility of the middle factor in the definition of $\Gamma_{n,\ell}(z)$ holds in general for $z\in\C\setminus\R$, and this can be verified from the proof of Lemma~\ref{LEM:STIELTJES} in the supplement. To comment on the condition~\eqref{EIGVEC1}, it is clearly satisfied when $\Sigma_n$ is diagonal, but more importantly, it can also be satisfied when $\Sigma_n$ is non-diagonal. Specific examples are detailed in Propositions~\ref{pro:spiked} and~\ref{pro:rankk} below. In addition, the simulation results reported in Section \ref{sec:emp} include several examples of non-diagonal $\Sigma_n$, as well as some constructed from natural data. One further point to keep in mind is that Assumption~\ref{assumption:eigenvec} will not be necessary when $\kappa=3$.

%%%%%%%%%%%%%%%%%%%%%%%%
\section{Method } \label{sec:method}
%%%%%%%%%%%%%%%%%%%%%%%%

This section details the bootstrap algorithm. At a conceptual level, the approach is rooted in the notion that when $p$ and $n$ are of the same magnitude, it is typically difficult to obtain a non-parametric approximation to the full data-generating distribution. Nevertheless, if the statistic of interest only depends on a relatively small number of  parameters, it may still be feasible to estimate them, and then generate bootstrap data based on the estimated parameters. In this way, the proposed method is akin to the parametric bootstrap even though the model in Assumption~\ref{assumption:data} is non-parametric.

From a technical perspective, the starting point for this method is the fundamental central limit theorem for linear spectral statistics established by \cite{Bai:Silverstein:2004} for the case $\kappa=3$. Their result implies that, under Assumptions~\ref{assumption:data} and~\ref{assumption:esd}, the statistic $p[T_n(f)-E\{T_n(f)\}]$ converges in distribution to $N\{0,\sigma_f^2(H,\gamma)\}$, where the limiting variance $\sigma_f^2(H,\gamma)$ is completely determined by $f$, $H$, and $\gamma$. Consequently, when $\kappa=3$, the eigenvectors of $\Sigma_n$ have no asymptotic effect on a standardized linear spectral statistic. More recently, the advances made by~\citet[Theorem 2]{Najim:Yao:2016} and~\citet[Theorem 1.4]{Pan:Zhou:2008} indicate that this property extends to the case \smash{$\kappa\not=3$,} provided that the eigenvectors of $\Sigma_n$ are sufficiently regular in the sense prescribed by Assumption \ref{assumption:eigenvec}. Likewise, this observation motivates 
a parametric-type bootstrap --- which  involves estimating the parameters $\kappa$ and $H$, and then drawing bootstrap data from a distribution that is parameterized by these estimates. 
More concretely, since $H$ can be approximated in terms of the finite set of population eigenvalues $\lambda_1(\Sigma_n),\dots,\lambda_p(\Sigma_n)$, the bootstrap data will be generated using estimates of these eigenvalues.
%)

\subsection{Bootstrap algorithm}

To introduce the resampling algorithm, again let $\Sigma_n=U_n\Lambda_n U_n^\top$ be the spectral decomposition for $\Sigma_n$. The bootstrap method relies on access to estimators of the spectrum $\Lambda_n$ and the kurtosis $\kappa$, which will be denoted by $\tilde\Lambda_n$ and $\hat\kappa_n$. For the sake of understanding the resampling algorithm, these estimators may for now be viewed as black boxes. Later on, specific methods for obtaining $\tilde\Lambda_n$ and $\hat\kappa_n$ will be introduced and their consistency properties established. Lastly, if $W$ is a scalar random variable, write $W\sim \text{Pearson}(\mu_1,\mu_2,\mu_3,\mu_4)$ to refer to a member of the Pearson system of distributions, which is parameterized by the first four moments $\mu_l=E(W^l)$ with $l=1,\dots,4$~\citep{Becker2017,Pearson:1895}.
\begin{remark}
\textup{If the first three moments satisfy $\mu_1=0$,\, $\mu_2=1$, $\mu_3=0$, then any value $\mu_4>1$ is permitted within the standard definition of the Pearson system. However, as a matter of completeness, the possibility $\mu_4=1$ is included by defining Pearson(0,1,0,1) as the two-point Rademacher distribution placing equal mass at $\pm 1$. This small detail ensures that the distribution Pearson$(0,1,0,\hat\kappa_n)$ makes sense for all possible realizations of the estimator $\hat\kappa_n$ defined below in line~\eqref{EQ:EST-KURT}.}
\end{remark}

%\hrule
\begin{algo}[Spectral Bootstrap]
\label{algorithm:bootstrap}
%{~} \\
%\rm{
%\hrule
%\vspace{0.2cm}
\begin{tabbing} \\
  \enspace For: $b=1$ to $b=B$ \\
   \qquad  Generate a random matrix $Z^*\in\mathbb{R}^{n\times p}$ with i.i.d.~entries drawn from Pearson$(0,1,0,\hat{\kappa}_n)$. \\
   \qquad  Compute the eigenvalues of the matrix $\hat{\Sigma}_n^*=\frac 1n\tilde\Lambda_n^{1/2}(Z^*)\ttop (Z^*)\tilde\Lambda_n^{1/2}$, denoted as $\hat \lambda_1^*,\ldots,\hat \lambda_p^*$. \\
   \qquad Compute the statistic $T_{n,b}^*(f)=\frac 1p\sum_{j=1}^pf(\hat \lambda_j^*)$. \\
   \enspace Output the empirical distribution of the values $T_{n,1}^*(f),\ldots,T_{n,B}^*(f)$.
\end{tabbing} 
\end{algo}
%\vspace{.4cm}

Although the algorithm is presented with a focus on linear spectral statistics, it can be easily adapted to any other type of spectral statistic by merely changing the third step. The performance of the  algorithm may thus be explored in a wide range of situations.
Regarding the task of generating the random matrix $Z^*$, the Pearson system is used only because it offers a convenient way to sample from a distribution with a specified set of moments. Apart from the ability to select the first four moments as $(0,1,0,\hat{\kappa}_n)$, the choice of the distribution is non-essential.

\subsection{Estimating the spectrum} 

To use Algorithm \ref{algorithm:bootstrap} in practice, specific estimators for $H$ and $\kappa$ have to be specified. With regard to the first task of spectrum estimation, this has been an active topic, and several methods are available in the literature~\citep{ElKaroui:2008,Mestre:2008,Rao:Mingo:Speicher:Edelman:2008, Bai:Chen:Yao:2010,Ledoit:Wolf:2015,Valiant:2016}. For the purposes of this paper, a slightly modified version of  the QuEST spectrum estimation method proposed by \citet{Ledoit:Wolf:2015,Ledoit:Wolf:2016} is used. However, the bootstrap procedure does not uniquely rely on QuEST, and any other spectrum estimation method is compatible with the results presented here, as long as it furnishes a weakly consistent estimator of $H$, as in Theorem~\ref{THM:CONSIST} below. In addition to consistency properties, another reason for choosing the QuEST method  is its user-friendly Matlab software~\citep{Ledoit:Wolf:2016}.

For the bootstrap procedure of Algorithm \ref{algorithm:bootstrap}, the QuEST algorithm is used in the following way. Let $\hat{\lambda}_{\text{Q},1},\ldots,\hat{\lambda}_{\text{Q},p}$ denote the estimates of $\lambda_1(\Sigma_n),\ldots,\lambda_p(\Sigma_n)$ output by QuEST, noting that these eigenvalue estimates are obtained as quantiles of the QuEST estimator for $H$.
However, instead of using these eigenvalue estimates directly, the proposed bootstrap uses
\begin{equation}\label{EQ:LAMBDAEST}
   \tilde\lambda_j=\min\big\{\hat{\lambda}_{\text{Q},j} \, , \, \hat{\lambda}_{\text{bound},n}\big\},
   \qquad j=1,\ldots,p,
\end{equation}
where $\hat\lambda_{\text{bound},n}=2\lambda_1(\hat{\Sigma}_n)$. Going forward, the notation $\tilde\Lambda_n=\text{diag}(\tilde\lambda_1,\dots,\tilde\lambda_p)$ will be used in several places. Applying the truncation~\eqref{EQ:LAMBDAEST} ensures that the top estimated eigenvalue $\tilde\lambda_1$ remains asymptotically bounded, which will be useful in proving that the bootstrap method is consistent.  Any fixed number greater than 1 could be used in place of $2$ in the definition of $\hat \lambda_{\text{bound},n}$. Also, the truncation will not affect estimation of the limiting distribution $H$, because it only affects a negligible fraction of top QuEST eigenvalues. In particular, the truncation does not affect the weak convergence of the distribution 
\begin{equation}\label{EQ:TILDEHDEF}
\tilde H_n(\lambda)=\ts p^{-1} \sum_{j=1}^p 1\{\tilde\lambda_j\leq \lambda\}.
\end{equation}
The consistency of $\tilde H_n$ is stated in Theorem~\ref{THM:CONSIST} later on.

\subsection{Estimating the kurtosis} 

It remains to construct an estimator for the kurtosis. This is done by considering an estimating equation for $\kappa$ arising from the variance of a quadratic form~\citep[eqn. 9.8.6,][]{Bai:Silverstein:2010}. Specifically, under the data-generating model in Assumption~\ref{assumption:data}, it is known that
\[
\kappa=3+\frac{\mathrm{var}(\|X_{1\cdot}\|_2^2)-2\|\Sigma_n\|^2_F}{\sum_{j=1}^p\sigma_{j}^4},
\]
where $\|\cdot\|_F$ denotes Frobenius norm, $X_{1\cdot}$ is the first row of $X$,  and $(\sigma_1^2,\dots,\sigma_p^2)=\text{diag}(\Sigma_{p})$. The importance of this equation is that it is possible to obtain ratio-consistent estimates of the three unknown parameters on the right-hand side, even when the dimension is high. Define 
\begin{equation*}\label{EQ:EST_KURT_0}
\tau_n=\|\Sigma_n\|_F^2, \qquad
\nu_n=\mathrm{var}(\|X_{1\cdot}\|^2_2) \qquad\mbox{and}\qquad
\omega_n=\ts\sum_{j=1}^p\sigma_{j}^4,
\end{equation*}
and note that these parameters tend to grow in magnitude as $p$ increases. Define corresponding estimators
\begin{align*}
\hat\tau_n&=\mathrm{tr}(\hat\Sigma_n^2)-\ts\frac 1n\mathrm{tr}(\hat\Sigma_n)^2, \\[0.2cm]
\hat\nu_n&=\ts\frac{1}{n-1}\sum_{i=1}^n\big(\|X_{i\cdot}\|_2^2-\frac 1n\sum_{i^\prime=1}^n\|X_{i^\prime\cdot}\|_2^2\big)^2, \\[0.2cm]
\hat\omega_n&=\ts\sum_{j=1}^p\big(\frac 1n\sum_{i=1}^nX_{ij}^2\big)^2.
\end{align*}
These give rise to the the kurtosis estimator
\begin{equation}\label{EQ:EST-KURT}
\hat\kappa_n=\max\Big( 3+\frac{\hat\nu_n-2\hat\tau_n}{\hat\omega_n},1\Big),
\end{equation}
whenever $\hat\omega_n\not=0$. In the exceptional case when $\hat\omega_n=0$, the estimator $\hat\kappa_n$ is arbitrarily defined to be 3, but this is unimportant from an asymptotic standpoint. Also note that the $\max\{\cdot,1\}$ function in the definition~\eqref{EQ:EST-KURT} enforces the basic inequality $\{E(Z_{11}^4) \}^{1/4}\geq \{E(Z_{11}^2)\}^{1/2}=1$.
To the best of our knowledge, a consistent estimate for $\kappa$ has not previously been established in the high-dimensional setting, although the estimation of related moment parameters has been studied, for instance, in~\cite{Bai:Saranadasa:1996} and~\cite{Rigollet:2015}. Outside the context of linear spectral statistics, the estimator $\hat\kappa_n$ may be independently useful as a diagnostic tool for checking whether or not data are approximately Gaussian. 

\section{Main results}\label{sec:main}

This section collects the asymptotic results, including the consistency of the spectrum and kurtosis estimators, and the consistency of the Spectral Bootstrap procedure. In addition, examples of covariance models are provided that guarantee regularity of eigenvectors. The first result pertains to the consistency of the estimators. Its proof as well as those of all other statements in this section are collected in the supplement.

\begin{theorem}[Consistency of estimators]\label{THM:CONSIST}
Suppose that Assumptions~\ref{assumption:data} and~\ref{assumption:esd} hold. Then, as $n\to\infty$, 
\begin{equation}\label{EQ:KAPPACONSIST}
\hat\kappa_n \xrightarrow{ \ \ } \kappa, 
\end{equation}
in probability, 
\begin{equation}\label{EQ:SPECTRUMCONSIST}
 \tilde H_n\Rightarrow H,
 \end{equation}
almost surely, and
\begin{equation}\label{EQ:SPECBOUND}
\ts\sup_{n}\lambda_1(\tilde\Lambda_n)<\infty ,
\end{equation}
almost surely.
\end{theorem}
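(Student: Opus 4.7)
The plan is to establish the three claims separately, as they rely on distinct mechanisms. The kurtosis consistency is essentially a moment-concentration argument, the spectrum convergence reduces to the QuEST consistency result plus showing that the truncation is asymptotically negligible, and the uniform boundedness of $\tilde\lambda_1$ follows from a Bai--Yin style result applied to $\lambda_1(\hat\Sigma_n)$.

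For the kurtosis estimator, I would exploit the identity $\kappa=3+(\nu_n-2\tau_n)/\omega_n$ together with the fact that under Assumption~\ref{assumption:esd} the diagonal entries of $\Sigma_n$ satisfy $\sigma_j^2=e_j\ttop\Sigma_n e_j\in[\lambda_p(\Sigma_n),\lambda_1(\Sigma_n)]$, so that $\omega_n/p$ is bounded above and bounded away from zero uniformly in $n$. It therefore suffices to show $p^{-1}(\hat\omega_n-\omega_n)\to 0$, $p^{-1}(\hat\tau_n-\tau_n)\to 0$, and $p^{-1}(\hat\nu_n-\nu_n)\to 0$ in probability; then one divides by $\hat\omega_n/p$ and invokes the continuous mapping theorem. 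Each of these three concentration statements is proved by computing the bias and variance using the representation $X=Z\Sigma_n^{1/2}$. The bias terms are straightforward: for instance $E(\hat\omega_n)=\omega_n+O(p/n)$. The variance computations are the main technical obstacle, since $\hat\nu_n$ involves fourth powers of sums over correlated coordinates; the cross-moments are controlled by expanding $\|X_{1\cdot}\|_2^2$ in terms of the $Z_{1k}$ entries, bounding the resulting traces of powers of $\Sigma_n$ by Assumption~\ref{assumption:esd}, and using $E(Z_{11}^8)<\infty$ to ensure the requisite moments exist. The $\max\{\cdot,1\}$ truncation in~\eqref{EQ:EST-KURT} is benign because $\kappa>1$ implies the unconstrained ratio converges to a point in the interior of $[1,\infty)$, so the truncation is inactive on an event of probability tending to one.

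For the spectrum estimator, the QuEST consistency result of \cite{Ledoit:Wolf:2015} gives $\hat H_{\text{Q},n}\Rightarrow H$ almost surely, where $\hat H_{\text{Q},n}$ is the empirical distribution of $\hat\lambda_{\text{Q},1},\ldots,\hat\lambda_{\text{Q},p}$. To pass from this to~\eqref{EQ:SPECTRUMCONSIST}, I would note that $\tilde H_n$ differs from $\hat H_{\text{Q},n}$ only through the truncation~\eqref{EQ:LAMBDAEST}, and that the fraction of indices affected by truncation tends to zero almost surely. Indeed, the support of $H$ is contained in a fixed bounded interval by Assumption~\ref{assumption:esd}, say bounded above by $M$. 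Under Assumption~\ref{assumption:data} with $E(Z_{11}^4)<\infty$, the Bai--Yin theorem gives $\lambda_1(\hat\Sigma_n)\to(1+\sqrt{\gamma})^2\limsup_n\lambda_1(\Sigma_n)$ almost surely, so eventually $\hat\lambda_{\text{bound},n}=2\lambda_1(\hat\Sigma_n)\geq M+1$. Consequently, all $\hat\lambda_{\text{Q},j}$ lying in the support of $H$ are unaffected by truncation, and by the Portmanteau theorem the number of $\hat\lambda_{\text{Q},j}$ exceeding $M$ is $o(p)$ almost surely. The Lévy distance between $\tilde H_n$ and $\hat H_{\text{Q},n}$ is therefore $o(1)$ almost surely, so the two share the same weak limit.

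Finally, the bound~\eqref{EQ:SPECBOUND} is immediate from the construction: $\lambda_1(\tilde\Lambda_n)=\tilde\lambda_1\leq 2\lambda_1(\hat\Sigma_n)$ by definition, and $\sup_n\lambda_1(\hat\Sigma_n)<\infty$ almost surely by the Bai--Yin result, whose moment hypothesis is satisfied since $E(Z_{11}^8)<\infty$. The hardest step overall is the variance control for $\hat\nu_n$, where bookkeeping of eighth-order mixed moments of $Z$ against powers of $\Sigma_n^{1/2}$ must be carried out carefully; everything else reduces to invoking existing random matrix results.
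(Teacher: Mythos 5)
Your decomposition into three claims matches the paper's, and the kurtosis and supremum-bound parts are sound. For kurtosis, your plan of showing the individual deviations $p^{-1}(\hat\omega_n-\omega_n)$, $p^{-1}(\hat\tau_n-\tau_n)$, $p^{-1}(\hat\nu_n-\nu_n)\to 0$ and dividing is equivalent to the paper's ratio-consistency lemmas, and your observation that the $\max\{\cdot,1\}$ truncation is asymptotically inactive because $\kappa>1$ is exactly right; the paper controls the variance of $\hat\nu_n$ more cheaply by invoking the classical formula for $\mathrm{var}(\hat\varsigma^2/\varsigma^2)$ plus the quadratic-form moment bound, rather than direct eighth-order bookkeeping, but either route works. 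The supremum bound $\sup_n\lambda_1(\tilde\Lambda_n)<\infty$ via $\tilde\lambda_1\leq 2\lambda_1(\hat\Sigma_n)$ and Yin--Bai is the paper's argument.

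The gap is in the spectrum-consistency step, specifically the claim that ``the Bai--Yin theorem gives $\lambda_1(\hat\Sigma_n)\to(1+\sqrt{\gamma})^2\limsup_n\lambda_1(\Sigma_n)$ almost surely.'' This is false. Yin--Bai gives $\lambda_1(\tfrac1n Z^\top Z)\to(1+\sqrt\gamma)^2$ for identity covariance; for a nontrivial $\Sigma_n$ the almost-sure limit of $\lambda_1(\hat\Sigma_n)$ is the right edge of the support of the Mar\v{c}enko--Pastur map $\mathcal{F}(H,\gamma)$ (by the ``no eigenvalues outside the support'' theorem), which is not $(1+\sqrt\gamma)^2\limsup_n\lambda_1(\Sigma_n)$ in general --- already a single spiked eigenvalue of $\Sigma_n$ gives a different constant. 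The quantity $(1+\sqrt\gamma)^2\limsup_n\lambda_1(\Sigma_n)$ is an asymptotic \emph{upper} bound on the sample spectrum, which is useful for the third claim but not for the lower bound you need here. What the truncation argument actually requires is a lower bound on $\liminf_n\lambda_1(\hat\Sigma_n)$ showing that $2\lambda_1(\hat\Sigma_n)$ eventually exceeds the region where the bulk of the QuEST estimates sit. The paper gets this via the variational bound $\lambda_1(\hat\Sigma_n)\geq\lambda_1(\Sigma_n)\,u_1^\top(\tfrac1n Z^\top Z)u_1$ together with $u_1^\top(\tfrac1n Z^\top Z)u_1\to 1$ a.s.\ (a law of large numbers for a fixed quadratic form), which yields $\liminf_n\{2\lambda_1(\hat\Sigma_n)-\lambda_1(\Sigma_n)\}\geq\epsilon>0$ and hence that the truncation affects $o(p)$ of the QuEST eigenvalues. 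You should replace the Bai--Yin appeal with an argument of this kind; as written, the lower bound you need is not established.
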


The following propositions discuss specific settings that satisfy Assumption~\ref{assumption:eigenvec} on eigenvector regularity. 
The first example concerns the spiked covariance model introduced by \cite{Johnstone:2001}, which has received considerable attention in the literature. Confer \citep{Baik:Silverstein:2006,Paul:2007,Bai:Yao:2008} as well as~\citep{Bai:Yao:2012} for additional background.
An important feature of this example is that an arbitrary set of eigenvectors will satisfy Assumption~\ref{assumption:eigenvec} when the eigenvalues are spiked.

\begin{proposition}[Non-diagonal spiked covariance models]\label{pro:spiked} Suppose the eigenvalues of~$\Sigma_n$ are given by 
$$\Lambda_n=\text{\emph{diag}}\{ \lambda_1(\Sigma_n),\dots,\lambda_k(\Sigma_n),1,\dots,1\},$$
where $\lambda_1(\Sigma_n),\ldots,\lambda_k(\Sigma_n)>1$,  and $\sup_n \lambda_1(\Sigma_n)<\infty$. In addition, suppose $k=o(p)$. Then, for any $p\times p$ orthogonal matrix $U_n$, the matrix $\Sigma_n=U_n\Lambda_n U_n\ttop$ satisfies Assumption~\ref{assumption:eigenvec}. 
\end{proposition}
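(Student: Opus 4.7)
The plan is to exploit the low-rank structure of $\Lambda_n - I_p$, which has rank at most $k$ by hypothesis. Since $M_n(z) := -zI_p + g_n(z)\Lambda_n$, with $g_n(z) := (1-\gamma_n) - z\gamma_n t_n(z)$, is diagonal, so is $\Gamma_{n,\ell}(z)$, with $(j,j)$-entry $\lambda_j(\Sigma_n)[-z + g_n(z)\lambda_j(\Sigma_n)]^{-\ell}$. Because $\lambda_j(\Sigma_n) = 1$ for $j > k$, this entry equals $h_n(z)^{-\ell}$ off the spiked coordinates, where $h_n(z) := g_n(z) - z$. Hence one obtains the decomposition
\begin{equation*}
\Gamma_{n,\ell}(z) \;=\; h_n(z)^{-\ell}\,I_p + R_{n,\ell}(z),
\end{equation*}
where $R_{n,\ell}(z)$ is diagonal and supported on at most $k$ coordinates (namely the spiked ones).

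For fixed $z_1,z_2 \in \C \setminus \R$, set $c_1 = h_n(z_1)^{-\ell}$, $c_2 = h_n(z_2)^{-2}$, $R_1 = R_{n,\ell}(z_1)$, $R_2 = R_{n,2}(z_2)$, and $\tilde R_i = U_n R_i U_n^\top$. Expanding the left-hand side of~\eqref{EIGVEC1} by multilinearity, the constant term $c_1 c_2$ appears identically on the right, and the cross terms equal $c_1\cdot p^{-1}\mathrm{tr}(\tilde R_2) + c_2\cdot p^{-1}\mathrm{tr}(\tilde R_1) = c_1\cdot p^{-1}\mathrm{tr}(R_2) + c_2\cdot p^{-1}\mathrm{tr}(R_1)$ by invariance of the trace under orthogonal conjugation, again matching the right-hand side exactly. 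Thus~\eqref{EIGVEC1} reduces to
\begin{equation*}
\frac{1}{p}\sum_{j=1}^p (\tilde R_1)_{jj}(\tilde R_2)_{jj} \;-\; \frac{1}{p}\sum_{j=1}^p (R_1)_{jj}(R_2)_{jj} \;=\; o(1).
\end{equation*}
Granted a uniform bound $\max_j |(R_i)_{jj}| \le C$, the second sum has at most $k$ nonzero summands and is $O(k/p)$. For the first sum, Cauchy-Schwarz together with $\sum_j (\tilde R_i)_{jj}^2 \le \|\tilde R_i\|_F^2 = \|R_i\|_F^2 \le C^2 k$ yields the same $O(k/p)$ bound. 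Both are $o(1)$ under the hypothesis $k = o(p)$, and notably the argument holds for \emph{any} orthogonal $U_n$, which is precisely what the proposition asserts.

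The main obstacle is the uniform boundedness step, namely proving that $|h_n(z)|$ and $|{-z + g_n(z)\lambda_j(\Sigma_n)}|$ are bounded away from zero uniformly in $n$ for fixed $z \in \C \setminus \R$, so that $h_n(z)^{-\ell}$ and the finitely many spiked entries of $R_{n,\ell}(z)$ are uniformly controlled. This reduces to tracking imaginary parts: since the support of $\mathcal{F}(H_n,\gamma_n)$ lies in a fixed compact set for large $n$ by Assumption~\ref{assumption:esd}, the Stieltjes transform $t_n(z)$ is uniformly bounded and $\mathrm{Im}(t_n(z))$ has the same sign as $\mathrm{Im}(z)$; coupled with $\sup_n \lambda_1(\Sigma_n) < \infty$ and $\lambda_j(\Sigma_n) > 1$, this yields the needed bounds. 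These properties are the content of Lemma~\ref{LEM:STIELTJES} in the supplement, after which the linear-algebraic reduction above closes the proof.
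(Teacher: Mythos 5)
Your proof is correct and uses essentially the same strategy as the paper: decompose $\Gamma_{n,\ell}(z) = h_n(z)^{-\ell} I_p + R_{n,\ell}(z)$ with $R_{n,\ell}$ diagonal of rank $\leq k$, invoke Lemma~\ref{LEM:STIELTJES} for uniform bounds on the diagonal entries, and conclude the error is $O(k/p) = o(1)$. Your refinement of matching the constant and cross terms exactly via trace invariance (rather than merely bounding all three error contributions, as the paper does with H\"older's inequality) is a slightly cleaner bookkeeping, but the underlying mechanism is the same.
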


For the next example, recall that by definition, a matrix $\Pi$ is an orthogonal projection if it is symmetric and satisfies $\Pi^2=\Pi$. It follows that if $\Pi$ is an orthogonal projection, then any matrix of the form \smash{$U_n=I_p-2\Pi$} satisfies $U_n\ttop U_n=I_p$, and is hence orthogonal. In a sense, the following example is dual to the first example, since it shows that an arbitrary set of eigenvalues are allowed under Assumption~\ref{assumption:eigenvec} if the rank of the perturbing matrix $\Pi$ does not grow too quickly.

\begin{proposition}[Rank-$k$ perturbations with $k\to\infty$]\label{pro:rankk}
Suppose the eigenvectors of $\Sigma_n$ are given by
\smash{$U_n=I_p-2\Pi,$}
where $\Pi$ is a $p\times p$ orthogonal projection matrix with $\text{rank}(\Pi)=o(p)$. Then, for any diagonal matrix $\Lambda_n$ with non-negative entries, the matrix $\Sigma_n=U_n\Lambda_n U_n\ttop$ satisfies Assumption~\ref{assumption:eigenvec}.  
\end{proposition}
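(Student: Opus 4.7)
The plan is to exploit the fact that $U_n=I_p-2\Pi$ differs from the identity only by a matrix of rank $o(p)$, so that conjugation by $U_n$ perturbs each diagonal matrix $\Gamma_{n,\ell}(z)$ by an amount that is $o(1)$ in $L^1$-average; this will in turn force the two sides of~\eqref{EIGVEC1} to agree in the limit.

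First I would record two preliminary facts. Since $\Lambda_n$ is diagonal and the bracketed expression in~\eqref{EQ:GAMMADEF} is a scalar polynomial in $\Lambda_n$, each $\Gamma_{n,\ell}(z)$ is itself diagonal. Moreover, for each fixed $z\in\C\setminus\R$ there will be a constant $M(z)<\infty$ such that $\|\Gamma_{n,\ell}(z)\|_\infty\leq M(z)$ for all large $n$ and $\ell\in\{1,2\}$, where $\|\cdot\|_\infty$ denotes the largest diagonal entry in modulus. This bound should follow from Assumption~\ref{assumption:esd}, which confines $\lambda_j(\Sigma_n)$ to a fixed compact subset of $(0,\infty)$, combined with a standard Stieltjes-transform estimate on the modulus of $-z+\{(1-\gamma_n)-z\gamma_n t_n(z)\}\lambda$, available from the proof of Lemma~\ref{LEM:STIELTJES}. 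Verifying this uniform bound is the only nontrivial analytic step and is the main obstacle in the argument.

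Next I would carry out the key linear-algebra computation. For a generic diagonal matrix $D=\mathrm{diag}(d_1,\dots,d_p)$, direct expansion of $U_n D U_n\ttop=(I_p-2\Pi)D(I_p-2\Pi)$ gives
\begin{align*}
\{U_n D U_n\ttop\}_{jj}-d_j=-4\Pi_{jj}d_j+4\sum_{k=1}^p\Pi_{jk}^2\,d_k.
\end{align*}
Since $\Pi$ is an orthogonal projection, $\Pi=\Pi\ttop=\Pi^2$ and hence $\sum_k\Pi_{jk}^2=(\Pi^2)_{jj}=\Pi_{jj}\geq 0$, yielding
\begin{align*}
\bigl|\{U_n D U_n\ttop\}_{jj}-d_j\bigr|\leq 8\|D\|_\infty\,\Pi_{jj}.
\end{align*}
Summing over $j$ and dividing by $p$ then produces
\begin{align*}
\frac{1}{p}\sum_{j=1}^p\bigl|\{U_n D U_n\ttop\}_{jj}-d_j\bigr|\leq\frac{8\|D\|_\infty\,\tr(\Pi)}{p}=\frac{8\|D\|_\infty\,\mathrm{rank}(\Pi)}{p}=o(1),
\end{align*}
by the assumed rank bound $\mathrm{rank}(\Pi)=o(p)$.

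The final step is to combine these two $L^1$-bounds into the required product identity. Set $D_1=\Gamma_{n,\ell}(z_1)$ and $D_2=\Gamma_{n,2}(z_2)$, and write $a_j=\{U_n D_1 U_n\ttop\}_{jj}$, $b_j=\{U_n D_2 U_n\ttop\}_{jj}$, $\tilde a_j=(D_1)_{jj}$, $\tilde b_j=(D_2)_{jj}$. Since $U_n$ is orthogonal, each of $|a_j|,|\tilde a_j|,|b_j|,|\tilde b_j|$ is dominated by the matching bound $M(z_1)$ or $M(z_2)$ from the first step. The telescoping identity $a_jb_j-\tilde a_j\tilde b_j=(a_j-\tilde a_j)b_j+\tilde a_j(b_j-\tilde b_j)$, combined with the $L^1$-bound above applied once to $D_1$ and once to $D_2$, then gives
\begin{align*}
\Biggl|\frac{1}{p}\sum_{j=1}^p a_j b_j-\frac{1}{p}\sum_{j=1}^p\tilde a_j\tilde b_j\Biggr|\leq M(z_2)\cdot o(1)+M(z_1)\cdot o(1)=o(1),
\end{align*}
which is precisely the conclusion~\eqref{EIGVEC1}.
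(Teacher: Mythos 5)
Your proposal is correct and follows essentially the same route as the paper's proof: expand $(I_p-2\Pi)\Gamma_{n,\ell}(z)(I_p-2\Pi)$, control the perturbation of each diagonal entry by $\Pi_{jj}$ using the uniform Stieltjes-transform bound of Lemma~\ref{LEM:STIELTJES} together with the boundedness of $\lambda_1(\Sigma_n)$ from Assumption~\ref{assumption:esd}, then average over $j$ and invoke $\tr(\Pi)=\mathrm{rank}(\Pi)=o(p)$. The only cosmetic differences are that you write out the expansion entrywise via $\sum_k\Pi_{jk}^2d_k$ and make the product step explicit with a telescoping identity, whereas the paper works in matrix/vector notation and compresses that step into an $\mathcal{O}(\Pi_{jj})$ bound.
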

\noindent To consider the simplest case of the proposition, note that even a rank-1 perturbation $\Pi$ can produce a dense matrix $\Sigma_n$ in which most off-diagonal entries are non-zero. Namely, if $\mathbf{1}\in\R^p$ denotes the all-ones vector and if $\Pi=\ts \mathbf{1}\mathbf{1}\ttop / p$, then it follows that the off-diagonal $(i,j)$ entry of $\Sigma_n$ will be non-zero whenever \smash{$\ts \{ \lambda_i(\Sigma_n)+\lambda_j(\Sigma_n) \}/2$} is different from $\ts \tr(\Sigma_n)/p$.

\begin{remark}\normalfont
The previous examples may be of interest outside the scope of the current paper, since they illustrate some consequences of the central limit theorem derived by~\cite{Najim:Yao:2016}. Specifically, for the choices of $\Sigma_n$ identified in Propositions~\ref{pro:spiked} and~\ref{pro:rankk}, it follows from~\citep[Theorem 2]{Najim:Yao:2016} that even when $\kappa\neq 3$, a limiting distribution exists for the standardized statistic $p[T_n(f)-E\{T_n(f)\} ]$  --- and it seems that this was not previously known. In general, when $\kappa\neq 3$ and $\Sigma_n$ is non-diagonal, a limiting distribution is not guaranteed to exist.
\end{remark}

\subsection{Bootstrap consistency}\label{sec:bootconsist} All ingredients have been collected in order to state bootstrap consistency, which is expressed in terms of the L\'evy--Prohorov metric between probability distributions. Let $\mathcal{L}(U)$ denote the distribution of a random vector $U\in\R^m$, and let $\mathscr{B}$ denote the collection of Borel subsets of $\mathbb{R}^m$. For any $A\subset\mathbb{R}^m$ and any $\delta>0$, define the $\delta$-neighborhood  $A^\delta=\{x\in\mathbb{R}^m\colon \inf_{y\in A} \|x-y\|_2\leq\delta\}$. For any two random vectors $U$ and $V$ in $\R^m$, the LP metric between their probability distributions is defined by
\begin{equation*}\label{EQ:LP}
d_{\rm LP}(\mathcal{L}(U),\mathcal{L}(V))
=\inf\big\{\delta>0\colon\mathbb{P}(U\in A)\leq \mathbb{P}(V\in A^\delta)+\delta~\mbox{for all}~A\in\mathscr{B}\big\}.
\end{equation*}
The LP metric plays a basic role in comparing distributions because convergence with respect to $d_{\rm LP}$ is equivalent to weak convergence. As one more piece of notation, the expression $T_{n,1}^*(\f)$ is understood to represent a bootstrap sample constructed from Algorithm \ref{algorithm:bootstrap} with the estimators $\tilde\Lambda_n$ and $\hat{\kappa}_n$ defined through \eqref{EQ:LAMBDAEST} and \eqref{EQ:EST-KURT}.

\begin{theorem}[Consistency of the spectral bootstrap for linear spectral statistics]\label{THM:BOOTSTRAP_CONSIST}
Suppose that Assumptions~\ref{assumption:data} and~\ref{assumption:esd} hold, and that either $\kappa=3$ or Assumption~\ref{assumption:eigenvec} hold. Let $\f=(f_1,\dots,f_m)$ be fixed functions lying in $\mathscr{C}^3(\mathcal{I})$. Then, as $n\to\infty$,
\begin{equation}\label{EQ:MAINRESULT}
d_{\rm LP}\Big(\mathcal{L}\big(p\{T_n(\f)-\mathbb{E}[T_n(\f)]\}\big) \ , \ \mathcal{L}\big(p\{T_{n,1}^*(\f)-\mathbb{E}[T_{n,1}^*(\f) \mid X]\} \mid X\big)\Big)
\xrightarrow{ \ \ } 0,
\end{equation}
in probability.
\end{theorem}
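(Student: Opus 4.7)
The strategy is to identify a common Gaussian limit for both centered laws and then compare via the triangle inequality for the L\'evy--Prohorov metric. Writing $G_n = p\{T_n(\f) - \E[T_n(\f)]\}$ and $G_n^* = p\{T_{n,1}^*(\f) - \E[T_{n,1}^*(\f)\mid X]\}$, and using that $d_{\rm LP}$ metrizes weak convergence,
\begin{equation*}
d_{\rm LP}\bigl(\mathcal{L}(G_n), \mathcal{L}(G_n^* \mid X)\bigr) \le d_{\rm LP}\bigl(\mathcal{L}(G_n), N(0,\Xi)\bigr) + d_{\rm LP}\bigl(\mathcal{L}(G_n^* \mid X), N(0,\Xi)\bigr),
\end{equation*}
so it suffices to prove both right-hand terms vanish -- the first deterministically, the second in probability -- for an appropriate limiting covariance $\Xi = \Xi(\f, H, \gamma, \kappa)$.

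\textbf{Two applications of the Najim--Yao CLT.} For the unconditional piece, combine Assumptions \ref{assumption:data} and \ref{assumption:esd} with either $\kappa=3$ or Assumption \ref{assumption:eigenvec}, and apply \citep[Theorem 2]{Najim:Yao:2016} coordinate-wise together with the Cram\'er--Wold device to conclude $G_n \Rightarrow N(0, \Xi)$, where $\Xi_{ij}$ is an explicit contour integral in $f_i, f_j$, the Stieltjes transform of $\mathcal{F}(H,\gamma)$, and $\kappa$. Under $\kappa=3$ the eigenvector-dependent correction vanishes; under Assumption \ref{assumption:eigenvec} it coincides with the diagonal case by \eqref{EIGVEC1}; in either scenario $\Xi$ depends on $\Sigma_n$ only through the limit $H$. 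For the bootstrap, condition on $X$ and restrict to the event $\mathcal{E}_n$ of probability tending to $1$ on which (i) $\tilde H_n \Rightarrow H$ with uniformly bounded support, by Theorem \ref{THM:CONSIST} together with \eqref{EQ:SPECBOUND}, and (ii) $\hat\kappa_n$ lies in a fixed compact subset of $(1,\infty)$ containing $\kappa$. On $\mathcal{E}_n$, the bootstrap sample $Z^*\tilde\Lambda_n^{1/2}$ obeys the structure of Assumption \ref{assumption:data} with diagonal population covariance $\tilde\Lambda_n$ -- so Assumption \ref{assumption:eigenvec} is automatic -- kurtosis $\hat\kappa_n$, and Pearson$(0,1,0,\hat\kappa_n)$ entries whose eighth moment is bounded uniformly on $\mathcal{E}_n$ by the smooth dependence of the Pearson family on its four-moment parameters. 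A second, path-wise application of \citep[Theorem 2]{Najim:Yao:2016} for each realization in $\mathcal{E}_n$ yields
\begin{equation*}
\mathcal{L}(G_n^* \mid X) \Rightarrow N\bigl(0, \Xi(\f, \tilde H_n, \gamma_n, \hat\kappa_n)\bigr).
\end{equation*}

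\textbf{Matching the covariances and main obstacle.} The remaining ingredient is
\begin{equation*}
\Xi(\f, \tilde H_n, \gamma_n, \hat\kappa_n) \xrightarrow{\ \ } \Xi(\f, H, \gamma, \kappa) \quad \text{in probability}.
\end{equation*}
Because the supports of $\tilde H_n$ and $H$ are contained in a common compact subinterval of $(0,\infty)$, the Stieltjes transforms of $\mathcal{F}(\tilde H_n, \gamma_n)$ and their derivatives converge uniformly on compact subsets of $\C \setminus \R$ to those of $\mathcal{F}(H, \gamma)$; combined with $\mathscr{C}^3$-smoothness of the $f_i$ and $\hat\kappa_n \to \kappa$, this forces the contour integrals defining $\Xi$ to their target values, completing the argument. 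The principal technical obstacle is precisely this continuity step: one must verify that the Mar\v{c}enko--Pastur map $\mathcal{F}(\cdot,\cdot)$ and the contour integrals in the Najim--Yao variance formula are jointly continuous in $(H, \gamma, \kappa)$ under weak convergence with uniformly bounded support, and must thread a subsequence argument to couple the almost-sure convergence of $\tilde H_n$ with the in-probability convergence of $\hat\kappa_n$. A subsidiary but related difficulty is confirming enough uniform integrability on $\mathcal{E}_n$ to replace the deterministic Mar\v{c}enko--Pastur centering that appears in the Najim--Yao statement by the conditional expectation $\E[T_{n,1}^*(\f)\mid X]$ used in \eqref{EQ:MAINRESULT}.
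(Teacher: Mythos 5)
Your plan is correct in its broad strategy and identifies the same key pillars the paper relies on: the Najim--Yao CLT, the diagonality of $\tilde\Lambda_n$ in the bootstrap world removing the need for Assumption~\ref{assumption:eigenvec} there, the use of Theorem~\ref{THM:CONSIST} to give $\tilde H_n\Rightarrow H$ and $\hat\kappa_n\to\kappa$, and continuity of the limiting covariance in $(H,\gamma,\kappa)$ as the crux of the argument. But your route is genuinely different from the paper's in one structural respect that matters: you insert a single \emph{fixed} Gaussian $N(0,\Xi)$ between the two laws and run a two-term triangle inequality, whereas the paper inserts \emph{finite-$n$ Gaussian process} proxies $G_n(z)$ and $G_n^*(z;X)$ (defined through the Helffer--Sj\"ostrand representation $T_n(f)=\phi_f(\hat m_n)$) and runs a three-term decomposition $\text{I}_n+\text{II}_n(X)+\text{III}_n(X)$, each term controlled by \citep[Lemma 6.3]{Najim:Yao:2016}. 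This is not a cosmetic difference. The Najim--Yao CLT (their Theorem 2) is phrased as an LP comparison between $p\{T_n(\f)-\E[T_n(\f)]\}$ and a Gaussian vector whose covariance is a finite-$n$ quantity $\Xi_n$; it does not by itself deliver the weak limit $G_n\Rightarrow N(0,\Xi)$ you assert in your first step. Getting from the moving target $N(0,\Xi_n)$ to the fixed target $N(0,\Xi)$ requires exactly the covariance-convergence argument you correctly flag as the ``main obstacle'' --- but you invoke it only on the bootstrap side, while it is also needed on the unconditional side. The paper instead keeps both Gaussian targets at the finite-$n$ level and makes the covariance matching the content of verifying condition (v) of Lemma 6.3 for term $\text{II}_n$, which is cleaner because tightness and fdd-convergence of the Stieltjes-transform process are handled once by Lemma 6.3 rather than reproved for the statistics themselves.

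On the bootstrap side, your plan to apply Najim--Yao path-wise on a high-probability event is essentially what the paper does, and your observation that the Pearson$(0,1,0,\hat\kappa_n)$ moments are controlled when $\hat\kappa_n$ is bounded matches the paper's subsequence argument. Two caveats worth adding: (i) the paper extracts a sub-subsequence along which $\hat\kappa_n$ is bounded almost surely precisely because $\hat\kappa_n\to\kappa$ is only in probability --- you mention the subsequence threading as a difficulty, and it is genuinely needed, not optional; (ii) the domination argument for the Helffer--Sj\"ostrand contour integral (paper's condition (iv), the polynomial/$\Im(z)^4$ bound) is nontrivial and is what makes $\mathscr{C}^3(\mathcal{I})$ suffice --- simply appealing to ``smooth dependence'' of the contour integrals on $(H,\gamma,\kappa)$ glosses over where the regularity budget of $\f$ is spent. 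Your proposal would reach the same conclusion, but you would essentially have to rebuild the pieces of Lemma 6.3 (tightness on $D_\varepsilon$, the variance bounds, the fdd convergence via covariance matching) that the paper gets for free by invoking it three times.
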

Note that the result allows for the approximation of the joint distribution of several linear spectral statistics, which is of interest, since a variety of classical statistics can be written as a non-linear function of several linear spectral statistics~\citep[Sec.~3.2]{Dobriban:2016}. 
A second point to mention is that even if $\kappa\neq 3$ and Assumption~\ref{assumption:eigenvec} fails, then the limit~\eqref{EQ:MAINRESULT} may still remain approximately valid if $\kappa$ is close to 3. 
The reason is based on the fact that the mean and variance of $T_n(f)$ can be expressed in the form
\begin{align*}
E \{ T_n(f) \}&=\mu_{n,1}(f)+(\kappa-3)\mu_{n,2}(f),\\[0.3cm]
\var\{T_n(f)\}&=\nu_{n,2}(f)+(\kappa-3)\nu_{n,2}(f),
\end{align*}
for some terms $\mu_{n,1}(f)$ and $\nu_{n,1}(f)$ that only depend on $\Sigma_n$ through its eigenvalues, as well as some other terms $\mu_{n,2}(f)$ and $\nu_{n,2}(f)$ that may depend on the eigenvectors of $\Sigma_n$. See part 2 of Theorem 1 in~\cite{Najim:Yao:2016}, or formulas 1.19 and 1.20 in~\cite{Pan:Zhou:2008}. Hence, if $\kappa$ is close to 3, then the factor $(\kappa-3)$ will reduce the effect of the eigenvectors on $E\{T_n(f)\}$ and $\var\{T_n(f)\}$, which thus reduces the importance of Assumption~\ref{assumption:eigenvec}.

The proof of Theorem \ref{THM:BOOTSTRAP_CONSIST} is given in the supplement. At a high level, the proof leverages recent progress on the central limit theorem for linear spectral statistics~\citep{Najim:Yao:2016}, as well as a consistency guarantee for spectrum estimation~\citep{Ledoit:Wolf:2015}. In particular, there are two ingredients from~\cite{Najim:Yao:2016} that are helpful in analyzing the bootstrap in the high-dimensional setting. The first is the use of the LP metric for quantifying distributional approximation, which differs from previous formulations of the central limit theorem for linear spectral statistics that have usually been stated in terms of weak limits.
Secondly, \cite{Najim:Yao:2016} make use of the Helffer--Sj\"ostrand formula~\citep{Helffer:Sjostrand:1989}, which allows $T_n(f)$ to be analyzed with $f\in \mathscr{C}^3(\mathcal{I})$, as opposed to the more stringent smoothness assumptions placed on $f$ in previous works. This formula is also convenient to work with, since it allows any linear spectral statistic to be represented as a linear functional of the empirical Stieltjes transform $\ts \mathrm{tr}\{(\hat{\Sigma}_n -zI_p)^{-1} \} / p$, viewed as a process indexed by $z$. Hence, when comparing $T_n(f)$ with its bootstrap analogue, it is enough to compare the empirical Stiltjes transform with its bootstrap analogue, and in turn, these have the virtue of being approximable with Gaussian processes.

%\paragraph{Bias estimation.}
The statistic $T_n(\f)$ is a natural estimate of the parameter $\vartheta(\f)=(\vartheta_n(f_1),\dots,\vartheta_n(f_m))$, defined by
\begin{equation}\label{EQ:BIASMEAN}
\vartheta_n(f) =\int f(\lambda)dH_{n,\gamma_n}\!(\lambda)
\end{equation}
where $f:[0,\infty)\to\R$ is a given function, and $H_{n,\gamma_n}=\F(H_n,\gamma_n)$. From the existing literature on linear spectral statistics, it is known that the bias 
$$b_n(f)=E \{T_n(f)\}-\vartheta_n(f),$$ 
has a magnitude comparable to the standard deviation of $T_n(f)$. Indeed, for a suitable $f$, the rescaled bias $p b_n(f)$ is known to converge to a non-zero limit under Assumptions~\ref{assumption:data}--\ref{assumption:eigenvec}~\citep{Pan:Zhou:2008,Najim:Yao:2016}. For this reason, it is of interest to know if the bootstrap can consistently estimate the bias. The purpose of Theorem~ \ref{THM:BOOTSTRAP_BIAS} below is to answer this question in the affirmative.

To define the bootstrap estimate of bias, note that the analogue of $\vartheta_n(f)$ in the bootstrap world is given by 
%\begin{equation}
$\tilde\vartheta_n(f)=\int f(\lambda) d\tilde H_{n,\gamma_n}(\lambda)$, where the integral is taken with respect to the distribution $\tilde H_{n,\gamma_n}:=\F(\tilde H_n,\gamma_n)$. Note that the value $\tilde\vartheta_n(f)$ is a deterministic function function of $\tilde H_n$ and $\gamma_n$, which can be computed with a variety of techniques, such as direct Monte-Carlo approximation, specialized algorithms~\citep{Jing:2010,Dobriban:2015}, or asymptotic formulas~\citep{Wang:2014}.
In turn, for a given vector of functions $\f$, the bootstrap estimate of $b_n(\f)$ is defined as the difference 
$$\hat b_n(\f)=E\{T_{n,1}^*(\f) \mid X \}-\tilde\vartheta_n(\f).$$
In addition to showing that $\hat b_n(\f)$ consistently estimates $b_n(\f)$, the following result also shows that the uncentered bootstrap distribution $\mathcal{L}\big[p\{T_{n,1}^*(\f)-\tilde \vartheta_n(\f)\} \mid X\big]$ consistently approximates $\mathcal{L}\big[p\{T_n(\f)-\vartheta_n(\f)\}\big]$.
 
\begin{theorem}[Consistency of bootstrap bias estimate]\label{THM:BOOTSTRAP_BIAS}
Suppose that Assumptions~\ref{assumption:data} and~\ref{assumption:esd} hold, and that either $\kappa=3$ or Assumption~\ref{assumption:eigenvec} hold. Let $\f=(f_1,\dots,f_m)$ be fixed functions lying in $\mathscr{C}^{18}(\mathcal{I})$.  Then, as $n\to\infty$,
\begin{equation}\label{EQ:BIASRESULT}
  p \{ b_n(\f)-\hat b_n(\f) \} \
\xrightarrow{ \  \ } \  0,
\end{equation}
in probability. Furthermore,
\begin{equation}\label{EQ:BIASMETRIC}
d_{\rm LP}\Big(\mathcal{L}\big(p\{T_n(\f)-\vartheta_n(\f)\}\big) \ , \ \mathcal{L}\big(p\{T_{n,1}^*(\f)-\tilde \vartheta_n(\f)\} \mid X\big)\Big)
\xrightarrow{ \  \ } 0, 
\end{equation}
in probability.
\end{theorem}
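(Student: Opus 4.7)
The plan is to reduce both assertions of the theorem to the single convergence \eqref{EQ:BIASRESULT}, and then establish the latter by combining a refined bias expansion for linear spectral statistics with the consistency results of Theorem~\ref{THM:CONSIST}. For the reduction, write
\[
T_n(\f) - \vartheta_n(\f) = \{T_n(\f) - E[T_n(\f)]\} + b_n(\f),
\]
with the analogous decomposition $T_{n,1}^*(\f) - \tilde\vartheta_n(\f) = \{T_{n,1}^*(\f) - E[T_{n,1}^*(\f)\mid X]\} + \hat b_n(\f)$ in the bootstrap world. Since $d_{\rm LP}$ is continuous under deterministic translations, \eqref{EQ:BIASMETRIC} follows from \eqref{EQ:BIASRESULT} combined with Theorem~\ref{THM:BOOTSTRAP_CONSIST} via the triangle inequality. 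Hence both claims collapse to \eqref{EQ:BIASRESULT}.

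For \eqref{EQ:BIASRESULT}, I would invoke the bias expansion accompanying the Najim--Yao central limit theorem (specifically part 2 of Theorem 1 in \cite{Najim:Yao:2016}; see also equations 1.19--1.20 of \cite{Pan:Zhou:2008}): under Assumptions~\ref{assumption:data}, \ref{assumption:esd}, and either $\kappa=3$ or Assumption~\ref{assumption:eigenvec},
\[
p\, b_n(f) = \Psi_f(H_n,\gamma_n) + (\kappa-3)\,\Phi_f(H_n,\gamma_n) + o(1),
\]
where $\Psi_f$ and $\Phi_f$ are explicit deterministic functionals built from contour integrals of $f$ against the companion Stieltjes transform of $\F(H_n,\gamma_n)$. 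The hypothesis $\f\in\mathscr{C}^{18}(\mathcal{I})$ enters so that the Helffer--Sj\"ostrand representation used in the proof can be controlled uniformly over spectral distributions supported in a fixed compact subset of $(0,\infty)$.

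The same expansion then applies conditionally on $X$ to the bootstrap sample: the bootstrap ``population'' $\tilde\Lambda_n$ is diagonal, so Assumption~\ref{assumption:eigenvec} is trivially satisfied; the Pearson variates generated in Algorithm~\ref{algorithm:bootstrap} have moments matched exactly to $(0,1,0,\hat\kappa_n)$, so Assumption~\ref{assumption:data} holds with kurtosis $\hat\kappa_n$; and Theorem~\ref{THM:CONSIST} delivers $\sup_n\lambda_1(\tilde\Lambda_n)<\infty$ together with $\tilde H_n\Rightarrow H$ almost surely, which gives Assumption~\ref{assumption:esd} for the bootstrap spectrum. Therefore, on an event of probability tending to one,
\[
p\,\hat b_n(\f) = \Psi_{\f}(\tilde H_n,\gamma_n) + (\hat\kappa_n-3)\,\Phi_{\f}(\tilde H_n,\gamma_n) + o_P(1).
\]
Subtracting the two expansions and invoking Theorem~\ref{THM:CONSIST} ($\hat\kappa_n\to\kappa$ in probability, $\tilde H_n\Rightarrow H$ a.s.) together with Assumption~\ref{assumption:esd} ($H_n\Rightarrow H$, $\gamma_n\to\gamma$) and the joint continuity of $\Psi_{\f}$, $\Phi_{\f}$ under weak convergence of their argument with uniformly bounded support yields \eqref{EQ:BIASRESULT}.

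The main obstacle is establishing the required continuity of $\Psi_{\f}$ and $\Phi_{\f}$ under weak convergence of spectral distributions. Because these functionals are built from Stieltjes transforms whose integrands blow up near the real line, proving continuity requires shrinking the contour toward the spectral support while controlling the Helffer--Sj\"ostrand pseudo-analytic extension of $\f$; this is precisely why $\mathscr{C}^{18}$ regularity (rather than the $\mathscr{C}^3$ sufficient for Theorem~\ref{THM:BOOTSTRAP_CONSIST}) is imposed. A secondary technical point is that the bootstrap-side expansion must hold with an $o_P(1)$ error that is uniform on the high-probability event on which $\tilde H_n$ and $\hat\kappa_n$ lie in the regular range guaranteed by Theorem~\ref{THM:CONSIST}; this will be handled by restricting to that event and using the uniformity in $H$ of the Najim--Yao estimates over the relevant compact class of spectra.
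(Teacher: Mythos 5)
Your proposal follows the same architecture as the paper's proof: reduce \eqref{EQ:BIASMETRIC} to \eqref{EQ:BIASRESULT} via LP-translation invariance and Theorem~\ref{THM:BOOTSTRAP_CONSIST}, then obtain \eqref{EQ:BIASRESULT} by applying the Najim--Yao bias expansion in both the original and bootstrap worlds and letting Theorem~\ref{THM:CONSIST} close the gap. The paper implements what you call ``continuity of $\Psi_{\f},\Phi_{\f}$ under weak convergence'' concretely through a Helffer--Sj\"ostrand bias approximant $\mathfrak{b}_n(\f)$ together with pointwise convergence of the explicit integrand $\mathcal{B}_n(z)$ and a dominated-convergence bound (Propositions D.1 and D.2); one small clarification is that the $\mathscr{C}^{18}$ hypothesis is imposed for the approximation step $p\{b_n(\f)-\mathfrak{b}_n(\f)\}\to 0$ coming from Najim--Yao's Theorem 3, while the convergence/continuity step only needs $\mathscr{C}^8$, so $\mathscr{C}^{18}$ is not, as you suggest, driven by the contour-shrinking continuity argument itself.
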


If Assumption~\ref{assumption:eigenvec} does not hold, it is possible that the limits~\eqref{EQ:BIASRESULT} and~\eqref{EQ:BIASMETRIC} may remain approximately valid if $\kappa$ is sufficiently close to 3, for reasons similar to those discussed with regard to Theorem~\ref{THM:BOOTSTRAP_CONSIST}. More specifically, the vector $b_n(\f)$ can be approximated by an expression of the form $b_n'(\f)+(\kappa-3)b_n''(\f)$, where $b_n'(\f)$ only depends on $\Sigma_n$ through its eigenvalues, but $b_n''(\f)$ may depend on the eigenvectors of $\Sigma_n$. Hence, if $(\kappa-3)$ is small, then the influence of the eigenvectors on $b_n(\f)$ will be reduced. Further details may be found in the supplement. Lastly, the requirement that the component functions of $\f$ lie in $\mathscr{C}^{18}(\mathcal{I})$ arises from technical considerations explained in Remark 4.4 of~\citet{Najim:Yao:2016}.

%%%%%%%%%%%%%%%%%%%%%%%%
\section{Numerical experiments}
\label{sec:emp}
%%%%%%%%%%%%%%%%%%%%%%%%

This section highlights the empirical performance of the proposed Spectral Bootstrap procedure in a variety of settings. The performance for three generic choices of linear spectral statistics is reported in Section \ref{subsec:emp:LSS}, while  Section \ref{subsec:emp:NLSS}  discusses how the bootstrap performs for a number of nonlinear spectral statistics that are not covered by the theory. Next, Sections \ref{subsec:emp:hypo}  and~\ref{subsec:realdata} show how the proposed method can be applied to some popular multivariate hypothesis tests, in the context of both synthetic and natural datasets. Lastly, the code used for the bootstrap algorithm can be found online at https://github.com/AndoBlando/LSS\_Bootstrap.

\subsection{Simulation settings}
\label{subsec:emp:setting}

Three types of non-diagonal covariance matrices $\Sigma_n$ were considered. In each case, data were generated according to $X=Z\Sigma_n^{1/2}$, where the entries of $Z\in\mathbb{R}^{n\times p}$ are i.i.d.\ random variables. The specifications for each type of covariance matrix, labeled (a), (b), and (c) are given below:

%
%\begin{itemize}\itemsep-.5ex
%\item[(a)] {Spiked covariance model:}
(a). Spiked covariance model: The eigenvalues were chosen as $\lambda_1(\Sigma_n)=\cdots=\lambda_{10}(\Sigma_n)=3$ and $\lambda_j(\Sigma_n)=1$ for $j=11,\ldots,p$. The eigenvector matrix $U_n$ for $\Sigma_n$ was generated from the uniform Haar distribution on the set of orthogonal matrices.%Further background on this model may be found in the paper \citep{Johnstone:2001}. 
%\item[(b)] {Spread eigenvalues:}
%

(b). Spread eigenvalues:  Substantial variation among the eigenvalues was introduced by the choice $\lambda_j(\Sigma_n)=j^{-1/2}$ for $j=1,\ldots,p$. This choice is of special interest, since it violates the condition that the bottom eigenvalue is bounded away from 0 as $p\to\infty$, which is commonly relied upon in random matrix theory. In addition, the eigenvectors of $\Sigma_n$ were generated as in the spiked case.
%\item[(c)] {Real data:}
%

(c). Real data: The population matrix $\Sigma_n$ was constructed
with the help of the `DrivFace' dataset in the~\cite{UCI} repository.
After centering the rows and standardizing the columns, the rows were projected onto the first $p$ principal components, with $p=200,400$, or 600.
If the resulting transformed data matrix is denoted $\tilde{X}$, then the matrix $\Sigma_n=\tilde{X}^T\tilde{X} /n $ was used, for each choice of $p$, as a population covariance matrix for generating new data in the simulations.

With regard to the entries of the matrix $Z$, they were drawn from the following three distributions, and then standardized to have mean 0 and variance 1:

 (1). Gaussian, for which $\kappa=3$.
 
 (2). beta(6,6), for which $\kappa=2.6$.
 
 (3). Student $t$-distribution with 9 degrees of freedom, for which $\kappa=4.2$.\\
\noindent The beta distribution is an example of a platykurtic distribution, while the $t$-distribution is leptokurtic. This allows for a meaningful assessment of the bootstrap for various choices of kurtosis. 
 For each combination of settings (a)--(c) and (1)--(3), simulation results are reported for sample size $n=500$ and dimensions $p=200$, $400$ and $600$, leading to aspect ratios $\gamma_n=0.4$, $0.8$ and $1.2$, respectively.

\subsection{Simulations for linear spectral statistics}
\label{subsec:emp:LSS}

The Spectral Bootstrap's ability to approximate the distribution of $p\{T_n(f)-\vartheta_n(f)\}$ was studied with several choices of $f$, namely: $f(x)=x$, corresponding to $T_n(f)=\mathrm{tr}(\hat\Sigma_n)$, $f(x)=x^2$, corresponding to $T_n(f)=\mathrm{tr}(\hat\Sigma_n^2)=\|\hat\Sigma_n\|_F^2$, and $f(x)=\log x$, corresponding to $T_n(f)=\log\det(\hat\Sigma_n)$.
%\begin{itemize}\itemsep-.5ex
%\item[] $f(x)=x$, corresponding to $T_n(f)=\mathrm{tr}(\hat\Sigma_n)$;
%\item[] $f(x)=x^2$, corresponding to $T_n(f)=\mathrm{tr}(\hat\Sigma_n^2)=\|\hat\Sigma_n\|_F^2$;
%\item[] $f(x)=\log x$, corresponding to $T_n(f)=\log\det(\hat\Sigma_n)$.
%\end{itemize}

%Tracial moments such as (1) and (2) were also considered in \citep{Rao:Mingo:Speicher:Edelman:2008}. 
%\paragraph{Population-level computations.}
For each setting corresponding to (a)--(c) and (1)--(3), a set of 50,000 realizations of $X$ were generated, and for each one, the statistic $p(T_n(f)-\vartheta_n(f))$ was computed. From this set of 50,000 realizations, the sample mean, standard deviation, and 0.95 quantile were recorded. These three values are viewed as a proxy for ground truth, and are reported in the first row corresponding to each choice of $\gamma_n$ in the tables below. With regard to the centering constant $\vartheta_n(f)$, it was 
computed by direct Monte-Carlo approximation, by averaging 50 realizations of $\frac{1}{40n}\sum_{j=1}^{40p}f\{\lambda_j(\hat{\mathsf{\Sigma}})\}$ where $\hat{\mathsf{\Sigma}}=\frac{1}{40n}\mathsf{X}\ttop \mathsf{X}\in\R^{40p\times 40p}$, and the matrix  $\mathsf{X}\in\R^{40n\times 40p}$ was generated as $\mathsf{X}=\mathsf{Z}\mathsf{\Sigma}^{1/2}$, with  $\mathsf{Z}\in\R^{40n\times 40p}$ consisting of i.i.d.~samples from $\text{Pearson}(0,1,0,\kappa)$, and $\mathsf{\Sigma}= I_{40}\otimes \Sigma_n\in\R^{40p\times 40p}$. The approaches in~\citep{Jing:2010,Dobriban:2015} were also considered for approximating $\vartheta_n(f)$, but the direct Monte-Carlo approximation seemed to provide the most favorable results overall.

%\paragraph{Bootstrap computations.} 
From the 50,000 realizations of $X$ described in the previous paragraph, the following procedure was applied to the first 1,000 such matrices. The bootstrap method, as in Algorithm~\ref{algorithm:bootstrap}, was used to obtain $B=500$ replicates, $p\{T_{n,1}^*(f)-\tilde\vartheta_n(f)\},\dots,  p\{T_{n,B}^*(f)-\tilde\vartheta_n(f)\}$. With these 500 values, the sample mean, standard deviation, and 0.95 quantile were recorded as estimates of the population counterparts. Hence, 1,000 bootstrap estimates were obtained for each parameter, since 1,000 realizations of $X$ were used. The quantity $\tilde\vartheta_n(f)$ was approximated by analogy with method used for $\vartheta_n(f)$, with the only differences being that $\hat\kappa_n$ was used in place of $\kappa$, and $\tilde\Lambda_n$ was used in place of $\Sigma_n$. 
In the tables below, the sample means of these 1,000 estimates are reported in the second row corresponding to each choice of $\gamma_n$, with the sample standard deviation in parentheses.

The central limit theorem for linear spectral statistics ensures that under Assumptions~\ref{assumption:data},~\ref{assumption:esd}, and~\ref{assumption:eigenvec}, there are limiting mean and variance parameters $\eta(f)$ and $v(f)$ such that \mbox{$p\{T_n(f)-\vartheta_n(f)\}\Rightarrow N\{\eta(f),v(f)\}$} as $n\to\infty$. Using formulas given in~\cite{Pan:Zhou:2008}, it is possible to estimate~$\eta(f)$ and $v(f)$ by replacing all asymptotic quantities with finite-sample analogues. 
With regard to $\eta(f)$, this approach leads to the estimate
%Specifically, this approach leads to the following estimate of $\eta(f)$,
%\begin{equation}
$\hat\eta_n(f)=\hat\eta_{n,1}(f)+\hat\eta_{n,2}(f)$
%\end{equation}
where
\begin{equation}
\hat\eta_{n,1}(f)=\ts\frac{-1}{2\pi\sqrt{-1}}\displaystyle\int_{\mathcal{C}_1} f(z)\frac{\gamma_n\int \hat{\underline{m}}^3(z)t^2d\tilde H_n(t)/\{1+t\hat{\underline{m}}(z)\}^3}{[1-\gamma_n\int \hat{\underline{m}}^2(z)t^2dH(t)/\{1+t\hat{\underline{m}}(z)\}^2]^2}dz,
\end{equation}
\begin{equation}
\hat\eta_{n,2}(f)=\ts\frac{3-\hat\kappa_n}{2\pi\sqrt{-1}}\displaystyle\int_{\mathcal{C}_1} f(z)\frac{\gamma_n \int \hat{\underline{m}}^3(z) t^2 d\tilde H_n(t)/\{1+t\hat{\underline{m}}(z)\}^3}{1-\gamma_n\int\hat{\underline{m}}^2(z)t^2d\tilde H_n(t)/\{1+t\hat{\underline{m}}(z)\}^2}dz,
\end{equation}
and the function $\hat{\underline{m}}(z)$ is defined by
$$\hat{\underline{m}}(z)=-\ts\frac{1-\gamma_n}{z}+\gamma_n\int \ts\frac{1}{\lambda -z}d\hat H_n(\lambda).$$
Likewise, an estimate of $v(f)$ may be obtained as
%
%
%\begin{equation}
$\hat v_n(f)=\hat v_{n,1}(f)+\hat v_{n,2}(f)$
%\end{equation}
where
\begin{equation}
\hat v_{n,1}(f)=\ts\frac{-1}{2\pi^2}\displaystyle \int_{\mathcal{C}_1}\int_{\mathcal{C}_2}\ts\frac{f(z_1)f(z_2)}{\{\hat{\underline{m}}(z_1)-\hat{\underline{m}}(z_1)\}^2}\frac{d}{dz_1}\hat{\underline{m}}(z_1)\frac{d}{dz_2}\hat{\underline{m}}(z_2)dz_1dz_2,
\end{equation}
and
\begin{equation}
\hat v_{n,2}(f)=\ts\frac{\gamma_n(3-\hat\kappa_n)}{4\pi^2}\displaystyle \int_{\mathcal{C}_1}\int_{\mathcal{C}_2}f(z_1)f(z_2)\ts\frac{d^2}{dz_1dz_2}\Big[\hat{\underline{m}}(z_1)\hat{\underline{m}}(z_2)\displaystyle\int\ts\frac{t^2 d\tilde H_n(t)}{\{\hat{\underline{m}}(z_1)t+1\} \{\hat{\underline{m}}(z_2)t+1\}}\Big]dz_1dz_2.
\end{equation}
In the integrals above, the contours $\mathcal{C}_1$ and $\mathcal{C}_2$ are disjoint, oriented in the positive direction in the complex plane, and  enclose the support of $\hat H_n$. 
The integrals were computed using the integral and integral2 functions in MATLAB.

For each of the 1,000 realizations of $X$ in the bootstrap computations, the quantities $\hat\eta_n(f)$ and $\hat v_n(f)$ were computed as above. In turn, the mean, standard deviation, and 0.95 quantile of the distribution $N\{\hat\eta_n(f),\hat v_n(f)\}$ were recorded as the formula-based estimates of the three population counterparts. In the tables,
the sample mean and standard deviation of the 1,000 estimates are reported in the third row corresponding to each choice of $\gamma_n$.

\begin{remark}\textup{The application of the previous formulas when $\kappa\neq 3$ is a novel aspect of the current paper, since this is not possible without the estimate $\hat\kappa_n$.}
\end{remark}

%\paragraph{Computational cost.}
\begin{remark}\textup{To address computational cost, the formula-based estimates do not require repeated computations as the bootstrap does, but they can still incur a non-negligible cost for two reasons. First, the contour integrals should be computed to very high precision --- for otherwise the formula-based estimates can have high variance in some particular cases, as discussed in the simulation results below. Second, the formulas still require estimates of the population eigenvalues, which are obtained by solving a large optimization problem in the case of the QuEST method. In particular, this optimization problem does not lend itself to parallelization. On the other hand, the bootstrap replicates are trivial to compute in parallel after the eigenvalue estimates have been obtained. Hence, if the user works within a distributed computing environment, then the extra cost of bootstrap replication is not necessarily a bottleneck in comparison to the other computations.}
\end{remark}

%\paragraph{LSS simulation results.}
The results corresponding to the Gaussian, beta, and $t$-distributions are given in Tables \ref{tab:CLT-Gaussian}, \ref{tab:CLT-beta}  and  \ref{tab:CLT-t9}, respectively. Overall, both the bootstrap and the formula-based estimates show very good agreement with the population values. Nevertheless, there are some advantages and disadvantages of the two approaches. With regard to the bootstrap estimates, their standard errors tend to be a bit larger than those of the formula-based estimates. However, this excess variance can be  reduced by increasing the number of bootstrap samples $B$. Next, observe that when $f(x)=\log(x)$, $\gamma_n=0.8$, and $\Sigma_n$ is obtained from the `real data' case (c), the formula-based estimates have very high variance --- which seems to be due to numerical instabilities arising from very small eigenvalues. Furthermore, this occurs for all three choices of the $Z_{ij}$ distribution, whereas the bootstrap is unaffected by this issue. In any case, these particular differences are relatively minor in comparison to the overall similarity of the results.

\begin{table}
\caption{Case (1): Standard Gaussian variables $Z_{ij}$. Summary statistics for the distribution \mbox{$\mathcal{L}[p\{T_n(f)-\vartheta_n(f)\} ]$} with (a) spiked (b) spread and (c) real data covariance matrices $\Sigma_n$, and various aspect ratios $\gamma_n$.}{
\scriptsize
\begin{tabular}{cc@{\quad}lll@{\quad}lll@{\quad}lll}
	&& \multicolumn{3}{c}{\!\!\!\!\!\!\!\!\!\!\!\!\!\!\!\!\!\!\!\!\!\!\!\! \makecell{~\\[-0.3cm]$p\{T_n(f)-\vartheta(f)\}$\\[0.0cm]  with $f(x)=x\phantom{\displaystyle\frac 1n}$}} & \multicolumn{3}{c}{\!\!\!\!\!\!\!\!\!\!\!\!\!\!\!\!\!\!\!\!\!\!\!\! \makecell{$p\{T_n(f)-\vartheta(f)\}$\\[0.2cm]  with  $f(x)=x^2$}} & \multicolumn{3}{c}{\!\!\!\!\!\!\!\!\!\!\!\!\!\!\!\!\!\!\!\!\!\!\!\! \makecell{$p\{T_n(f)-\vartheta(f)\}$\\[0.2cm]  with  $f(x)=\log(x)$}} \\[.7cm]
%	&  &  & $x$ &  &  & $x^2$ &  &  & $\log(x)$ &  \\[.1cm]
	$\Sigma_n$ & $\gamma_n$ & mean & sd & 95th & mean & sd & 95th & mean & sd & 95th \\[.1cm] %\hline
(a) &0.4 & 0.00 & 1.06 & 1.74 & 0.51 & 4.99 & 8.82 & -0.24 & 1.01 & 1.42\\ 
& &  {0.00} ({0.09}) & {1.06} ({0.05}) & {1.74} ({0.14}) & {0.51} ({0.42}) & {4.99} ({0.28}) & {8.79} ({0.76}) & {-0.24} ({0.09}) & {1.01} ({0.04}) & {1.43}  ({0.12}) \\ 
& &  {0.00} ({0.00}) & {1.06} ({0.04}) & {1.74} ({0.07}) & {0.56} ({0.08}) & {4.97} ({0.22}) & {8.74} ({0.44}) & {-0.25} ({0.03}) & {1.01} ({0.03}) & {1.41} ({0.02}) \\[.1cm] 
&0.8 & 0.00 & 1.26 & 2.07 & 0.76 & 4.83 & 8.73 & -0.78 & 1.80 & 2.16\\ 
& &  {0.00} ({0.10}) & {1.27} ({0.06}) & {2.08} ({0.16}) & {0.76} ({0.40}) & {4.86} ({0.21}) & {8.76} ({0.65}) & {-0.76} ({0.15}) & {1.80} ({0.06}) & {2.20}  ({0.21}) \\ 
& &  {0.00} ({0.00}) & {1.27} ({0.04}) & {2.08} ({0.07}) & {0.81} ({0.10}) & {4.85} ({0.15}) & {8.78} ({0.34}) & {-0.81} ({0.05}) & {1.81} ({0.23}) & {2.17} ({0.38}) \\[.1cm] 
&1.2 & 0.00 & 1.65 & 2.71 & 1.32 & 9.00 & 16.16 &  & &  \\ 
& &  {0.00}  ({0.14}) &  {1.65}  ({0.07}) &  {2.72}  ({0.21}) &  {1.28}  ({0.75}) &  {9.04}  ({0.41}) &  {16.20}  ({1.26}) &  & &  \\ 
& &  {0.00}  ({0.00}) & {1.65}  ({0.05}) & {2.72}  ({0.09}) & {1.37}  ({0.18}) & {9.02}  ({0.30}) & {16.21}  ({0.65}) & & & \\[.1cm] 
%\hline
(b) &0.4 & 0.00 & 0.15 & 0.25 & 0.01 & 0.18 & 0.31 & -0.24 & 1.01 & 1.42\\ 
& &  {0.00} ({0.01}) & {0.15} ({0.01}) & {0.25} ({0.02}) & {0.01} ({0.01}) & {0.18} ({0.02}) & {0.31} ({0.04}) & {-0.23} ({0.09}) & {1.01} ({0.05}) & {1.42}  ({0.12}) \\ 
& &  {0.00} ({0.00}) & {0.15} ({0.01}) & {0.25} ({0.01}) & {0.01} ({0.00}) & {0.18} ({0.02}) & {0.30} ({0.03}) & {-0.25} ({0.04}) & {1.01} ({0.04}) & {1.40} ({0.03}) \\[.1cm] 
&0.8 & 0.00 & 0.16 & 0.27 & 0.01 & 0.18 & 0.32 & -0.77 & 1.81 & 2.17\\ 
& &  {0.00} ({0.01}) & {0.16} ({0.01}) & {0.27} ({0.02}) & {0.01} ({0.02}) & {0.18} ({0.02}) & {0.32} ({0.04}) & {-0.75} ({0.17}) & {1.80} ({0.08}) & {2.21}  ({0.21}) \\ 
& &  {0.00} ({0.00}) & {0.16} ({0.01}) & {0.27} ({0.02}) & {0.01} ({0.00}) & {0.18} ({0.02}) & {0.31} ({0.03}) & {-0.80} ({0.09}) & {1.79} ({0.05}) & {2.14} ({0.01}) \\[.1cm] 
&1.2 & 0.00 & 0.17 & 0.27 & 0.01 & 0.19 & 0.34 &  & &  \\ 
& &  {0.00}  ({0.01}) &  {0.17}  ({0.01}) &  {0.27}  ({0.03}) &  {0.01}  ({0.02}) &  {0.19}  ({0.02}) &  {0.33}  ({0.04}) &  & &  \\ 
& &  {0.00}  ({0.00}) & {0.17}  ({0.01}) & {0.27}  ({0.02}) & {0.01}  ({0.00}) & {0.19}  ({0.02}) & {0.32}  ({0.03}) & & & \\[.1cm] 
%\hline
(c) &0.4 & 0.00 & 0.07 & 0.12 & 0.00 & 0.13 & 0.23 & -0.25 & 1.01 & 1.41\\ 
& &  {0.00} ({0.01}) & {0.07} ({0.01}) & {0.12} ({0.01}) & {0.00} ({0.01}) & {0.13} ({0.02}) & {0.23} ({0.04}) & {-0.24} ({0.09}) & {1.01} ({0.05}) & {1.42}  ({0.12}) \\ 
& &  {0.00} ({0.00}) & {0.07} ({0.01}) & {0.12} ({0.01}) & {0.00} ({0.00}) & {0.13} ({0.02}) & {0.22} ({0.03}) & {-0.25} ({0.04}) & {0.73} ({0.13}) & {0.94} ({0.21}) \\[.1cm] 
&0.8 & 0.00 & 0.07 & 0.11 & 0.00 & 0.13 & 0.22 & -0.78 & 1.79 & 2.16\\ 
& &  {0.00} ({0.01}) & {0.07} ({0.01}) & {0.11} ({0.01}) & {0.00} ({0.01}) & {0.13} ({0.02}) & {0.22} ({0.04}) & {-0.75} ({0.16}) & {1.79} ({0.07}) & {2.18}  ({0.21}) \\ 
& &  {0.00} ({0.00}) & {0.07} ({0.01}) & {0.11} ({0.01}) & {0.00} ({0.00}) & {0.13} ({0.02}) & {0.21} ({0.03}) & {-0.82} ({0.08}) & {6.90} ({4.17}) & {10.53} ({6.86}) \\[.1cm] 
&1.2 & 0.00 & 0.07 & 0.11 & 0.00 & 0.13 & 0.22 &  & &  \\ 
& &  {0.00}  ({0.01}) &  {0.07}  ({0.01}) &  {0.11}  ({0.01}) &  {0.00}  ({0.01}) &  {0.13}  ({0.02}) &  {0.22}  ({0.04}) &  & &  \\ 
& &  {0.00}  ({0.00}) & {0.07}  ({0.01}) & {0.11}  ({0.01}) & {0.00}  ({0.00}) & {0.13}  ({0.02}) & {0.21}  ({0.03}) & & & \\[.1cm]
%	\hline
\end{tabular} 
}

\caption{
 For each value of $\gamma_n$, the three associated rows are labeled as follows. The first row corresponds to the population quantities. The second row corresponds to the mean and standard deviation (in parentheses) for the bootstrap estimates. The third row corresponds to the mean and standard deviation (in parentheses) for the formula-based estimates.
}
\label{tab:CLT-Gaussian}	
\end{table}

\begin{table} 
\caption{Case (2): Standardardized beta(6,6) variables $Z_{ij}$. Results are displayed as in Table \ref{tab:CLT-Gaussian}.}{
\scriptsize
\begin{tabular}{cc@{\quad}lll@{\quad}lll@{\quad}lll}
			&& \multicolumn{3}{c}{\!\!\!\!\!\!\!\!\!\!\!\!\!\!\!\!\!\!\!\!\!\!\!\! \makecell{~\\[-0.3cm]$p\{T_n(f)-\vartheta(f)\}$\\[0.0cm]  with $f(x)=x\phantom{\displaystyle\frac 1n}$}} & \multicolumn{3}{c}{\!\!\!\!\!\!\!\!\!\!\!\!\!\!\!\!\!\!\!\!\!\!\!\! \makecell{$p\{T_n(f)-\vartheta(f)\}$\\[0.2cm]  with  $f(x)=x^2$}} & \multicolumn{3}{c}{\!\!\!\!\!\!\!\!\!\!\!\!\!\!\!\!\!\!\!\!\!\!\!\! \makecell{$p\{T_n(f)-\vartheta(f)\}$\\[0.2cm]  with  $f(x)=\log(x)$}} \\[.7cm]
			$\Sigma_n$ & $\gamma_n$ & mean & sd & 95th & mean & sd & 95th & mean & sd & 95th \\[.1cm]
(a) &0.4 & -0.01 & 0.97 & 1.59 & 0.33 & 4.73 & 8.20 & -0.18 & 0.93 & 1.35\\ 
& &  {0.01} ({0.07}) & {0.95} ({0.05}) & {1.56} ({0.12}) & {0.34} ({0.36}) & {4.48} ({0.24}) & {7.78} ({0.66}) & {-0.16} ({0.08}) & {0.93} ({0.04}) & {1.37}  ({0.11}) \\ 
& &  {0.00} ({0.00}) & {0.95} ({0.04}) & {1.56} ({0.06}) & {0.34} ({0.07}) & {4.48} ({0.20}) & {7.70} ({0.39}) & {-0.18} ({0.02}) & {0.93} ({0.03}) & {1.35} ({0.02}) \\[.1cm] 
&0.8 & 0.01 & 1.25 & 2.06 & 0.63 & 6.36 & 11.12 & -0.62 & 1.70 & 2.16\\ 
& &  {0.00} ({0.10}) & {1.24} ({0.06}) & {2.04} ({0.16}) & {0.55} ({0.51}) & {6.16} ({0.30}) & {10.74} ({0.85}) & {-0.60} ({0.14}) & {1.70} ({0.06}) & {2.20}  ({0.19}) \\ 
& &  {0.00} ({0.00}) & {1.24} ({0.05}) & {2.04} ({0.07}) & {0.58} ({0.11}) & {6.14} ({0.24}) & {10.68} ({0.49}) & {-0.42} ({0.38}) & {1.70} ({0.03}) & {2.37} ({0.37}) \\[.1cm] 
&1.2 & 0.00 & 1.48 & 2.45 & 0.82 & 8.29 & 14.46 &  & &  \\ 
& &  {0.00}  ({0.12}) &  {1.47}  ({0.07}) &  {2.43}  ({0.19}) &  {0.77}  ({0.67}) &  {8.14}  ({0.37}) &  {14.19}  ({1.12}) &  & &  \\ 
& &  {0.00}  ({0.00}) & {1.48}  ({0.05}) & {2.43}  ({0.08}) & {0.82}  ({0.15}) & {8.13}  ({0.27}) & {14.20}  ({0.59}) & & & \\[.1cm] 
(b) &0.4 & 0.00 & 0.14 & 0.24 & 0.01 & 0.17 & 0.31 & -0.15 & 0.93 & 1.33\\ 
& &  {0.00} ({0.01}) & {0.14} ({0.01}) & {0.23} ({0.02}) & {0.01} ({0.01}) & {0.16} ({0.02}) & {0.28} ({0.03}) & {-0.16} ({0.08}) & {0.93} ({0.05}) & {1.36}  ({0.11}) \\ 
& &  {0.00} ({0.00}) & {0.14} ({0.01}) & {0.23} ({0.01}) & {0.01} ({0.00}) & {0.16} ({0.02}) & {0.27} ({0.03}) & {-0.17} ({0.04}) & {0.92} ({0.04}) & {1.35} ({0.03}) \\[.1cm] 
&0.8 & 0.00 & 0.15 & 0.26 & 0.01 & 0.18 & 0.32 & -0.73 & 1.72 & 2.35\\ 
& &  {0.00} ({0.01}) & {0.14} ({0.01}) & {0.24} ({0.02}) & {0.01} ({0.01}) & {0.17} ({0.02}) & {0.28} ({0.04}) & {-0.60} ({0.15}) & {1.70} ({0.07}) & {2.20}  ({0.19}) \\ 
& &  {0.00} ({0.00}) & {0.14} ({0.01}) & {0.24} ({0.02}) & {0.01} ({0.00}) & {0.16} ({0.02}) & {0.28} ({0.03}) & {-0.64} ({0.08}) & {1.96} ({0.12}) & {2.58} ({0.19}) \\[.1cm] 
&1.2 & 0.00 & 0.16 & 0.26 & 0.01 & 0.19 & 0.33 &  & &  \\ 
& &  {0.00}  ({0.01}) &  {0.15}  ({0.01}) &  {0.25}  ({0.02}) &  {0.01}  ({0.01}) &  {0.17}  ({0.02}) &  {0.29}  ({0.04}) &  & &  \\ 
& &  {0.00}  ({0.00}) & {0.15}  ({0.01}) & {0.25}  ({0.02}) & {0.01}  ({0.00}) & {0.17}  ({0.02}) & {0.29}  ({0.03}) & & & \\[.1cm] 
(c) &0.4 & 0.00 & 0.06 & 0.11 & 0.00 & 0.12 & 0.20 & -0.17 & 0.93 & 1.33\\ 
& &  {0.00} ({0.01}) & {0.06} ({0.00}) & {0.11} ({0.01}) & {0.00} ({0.01}) & {0.12} ({0.01}) & {0.20} ({0.03}) & {-0.17} ({0.08}) & {0.93} ({0.04}) & {1.36}  ({0.11}) \\ 
& &  {0.00} ({0.00}) & {0.06} ({0.00}) & {0.11} ({0.01}) & {0.00} ({0.00}) & {0.12} ({0.01}) & {0.19} ({0.02}) & {-0.18} ({0.02}) & {2.05} ({0.91}) & {3.19} ({1.50}) \\[.1cm] 
&0.8 & 0.00 & 0.06 & 0.11 & 0.01 & 0.12 & 0.22 & -0.73 & 1.72 & 2.35\\ 
& &  {0.00} ({0.00}) & {0.06} ({0.00}) & {0.10} ({0.01}) & {0.00} ({0.01}) & {0.11} ({0.01}) & {0.20} ({0.03}) & {-0.61} ({0.14}) & {1.71} ({0.06}) & {2.20}  ({0.20}) \\ 
& &  {0.00} ({0.00}) & {0.06} ({0.00}) & {0.10} ({0.01}) & {0.00} ({0.00}) & {0.11} ({0.01}) & {0.19} ({0.02}) & {-0.66} ({0.05}) & {7.22} ({5.03}) & {11.22} ({8.27}) \\[.1cm] 
&1.2 & 0.00 & 0.06 & 0.10 & 0.00 & 0.11 & 0.19 &  & &  \\ 
& &  {0.00}  ({0.00}) &  {0.06}  ({0.00}) &  {0.10}  ({0.01}) &  {0.00}  ({0.01}) &  {0.12}  ({0.01}) &  {0.20}  ({0.03}) &  & &  \\ 
& &  {0.00}  ({0.00}) & {0.06}  ({0.00}) & {0.10}  ({0.01}) & {0.00}  ({0.00}) & {0.11}  ({0.01}) & {0.19}  ({0.02}) & & & \\[.1cm] 
		\end{tabular}
	}
\label{tab:CLT-beta}
\end{table}

\begin{table}
\caption{Case (3): Standard t-9 variables $Z_{ij}$. Results are displayed as in Table  \ref{tab:CLT-Gaussian}.}{
\scriptsize
		\begin{tabular}{cc@{\quad}lll@{\quad}lll@{\quad}lll}
			&& \multicolumn{3}{c}{\!\!\!\!\!\!\!\!\!\!\!\!\!\!\!\!\!\!\!\!\!\!\!\! \makecell{~\\[-0.3cm]$p\{T_n(f)-\vartheta(f)\}$\\[0.0cm]  with $f(x)=x\phantom{\displaystyle\frac 1n}$}} & \multicolumn{3}{c}{\!\!\!\!\!\!\!\!\!\!\!\!\!\!\!\!\!\!\!\!\!\!\!\! \makecell{$p\{T_n(f)-\vartheta(f)\})$\\[0.2cm]  with  $f(x)=x^2$}} & \multicolumn{3}{c}{\!\!\!\!\!\!\!\!\!\!\!\!\!\!\!\!\!\!\!\!\!\!\!\! \makecell{$p\{T_n(f)-\vartheta(f)\}$\\[0.2cm]  with  $f(x)=\log(x)$}} \\[.7cm]
			$\Sigma_n$ & $\gamma_n$ & mean & sd & 95th & mean & sd & 95th & mean & sd & 95th \\[.1cm]
(a) &0.4 & 0.00 & 1.31 & 2.18 & 1.08 & 5.67 & 10.52 & -0.47 & 1.23 & 1.55\\ 
& &  {0.00} ({0.11}) & {1.33} ({0.06}) & {2.21} ({0.18}) & {1.14} ({0.52}) & {6.30} ({0.36}) & {11.64} ({0.98}) & {-0.45} ({0.11}) & {1.22} ({0.05}) & {1.55}  ({0.14}) \\ 
& &  {0.00} ({0.00}) & {1.33} ({0.05}) & {2.19} ({0.09}) & {1.22} ({0.14}) & {6.23} ({0.30}) & {11.47} ({0.61}) & {-0.49} ({0.05}) & {1.22} ({0.04}) & {1.52} ({0.02}) \\[.1cm] 
&0.8 & 0.00 & 1.73 & 2.84 & 1.97 & 7.96 & 15.03 & -1.25 & 2.06 & 2.16\\ 
& &  {0.00} ({0.14}) & {1.75} ({0.09}) & {2.88} ({0.22}) & {1.97} ({0.74}) & {8.55} ({0.46}) & {16.15} ({1.31}) & {-1.19} ({0.18}) & {2.04} ({0.08}) & {2.16}  ({0.22}) \\ 
& &  {0.00} ({0.00}) & {1.75} ({0.06}) & {2.88} ({0.10}) & {2.11} ({0.22}) & {8.49} ({0.35}) & {16.08} ({0.77}) & {-1.28} ({0.09}) & {2.04} ({0.04}) & {2.08} ({0.02}) \\[.1cm] 
&1.2 & 0.00 & 2.06 & 3.39 & 2.80 & 10.68 & 20.43 &  & &  \\ 
& &  {-0.01}  ({0.17}) &  {2.08}  ({0.10}) &  {3.41}  ({0.27}) &  {2.74}  ({0.96}) &  {11.24}  ({0.55}) &  {21.36}  ({1.65}) &  & &  \\ 
& &  {0.00}  ({0.00}) & {2.08}  ({0.07}) & {3.42}  ({0.12}) & {2.96}  ({0.29}) & {11.18}  ({0.39}) & {21.35}  ({0.92}) & & & \\[.1cm] 
(b) &0.4 & 0.00 & 0.18 & 0.29 & 0.02 & 0.19 & 0.34 & -0.46 & 1.22 & 1.53\\ 
& &  {0.00} ({0.02}) & {0.19} ({0.01}) & {0.32} ({0.03}) & {0.02} ({0.02}) & {0.23} ({0.02}) & {0.41} ({0.05}) & {-0.46} ({0.11}) & {1.22} ({0.06}) & {1.55}  ({0.14}) \\ 
& &  {0.00} ({0.00}) & {0.19} ({0.01}) & {0.32} ({0.02}) & {0.03} ({0.00}) & {0.22} ({0.02}) & {0.40} ({0.04}) & {-0.49} ({0.06}) & {1.22} ({0.05}) & {1.52} ({0.02}) \\[.1cm] 
&0.8 & 0.00 & 0.19 & 0.30 & 0.02 & 0.19 & 0.34 & -1.25 & 2.03 & 2.06\\ 
& &  {0.00} ({0.02}) & {0.20} ({0.01}) & {0.34} ({0.03}) & {0.03} ({0.02}) & {0.24} ({0.02}) & {0.43} ({0.05}) & {-1.19} ({0.19}) & {2.04} ({0.09}) & {2.16}  ({0.23}) \\ 
& &  {0.00} ({0.00}) & {0.21} ({0.01}) & {0.34} ({0.02}) & {0.03} ({0.00}) & {0.23} ({0.02}) & {0.41} ({0.04}) & {-1.28} ({0.13}) & {2.04} ({0.06}) & {2.07} ({0.03}) \\[.1cm] 
&1.2 & 0.00 & 0.19 & 0.32 & 0.02 & 0.20 & 0.35 &  & &  \\ 
& &  {0.00}  ({0.02}) &  {0.21}  ({0.01}) &  {0.35}  ({0.03}) &  {0.03}  ({0.02}) &  {0.24}  ({0.02}) &  {0.44}  ({0.05}) &  & &  \\ 
& &  {0.00}  ({0.00}) & {0.21}  ({0.01}) & {0.35}  ({0.02}) & {0.03}  ({0.00}) & {0.24}  ({0.02}) & {0.42}  ({0.04}) & & & \\[.1cm] 
(c) &0.4 & 0.00 & 0.09 & 0.16 & 0.01 & 0.16 & 0.30 & -0.47 & 1.22 & 1.55\\ 
& &  {0.00} ({0.01}) & {0.09} ({0.01}) & {0.15} ({0.02}) & {0.01} ({0.01}) & {0.16} ({0.04}) & {0.29} ({0.07}) & {-0.44} ({0.15}) & {1.20} ({0.10}) & {1.53}  ({0.14}) \\ 
& &  {0.00} ({0.00}) & {0.09} ({0.01}) & {0.15} ({0.02}) & {0.01} ({0.00}) & {0.16} ({0.03}) & {0.27} ({0.06}) & {-0.48} ({0.14}) & {1.00} ({0.15}) & {1.17} ({0.16}) \\[.1cm] 
&0.8 & 0.00 & 0.09 & 0.15 & 0.01 & 0.16 & 0.30 & -1.25 & 2.03 & 2.08\\ 
& &  {0.00} ({0.01}) & {0.08} ({0.01}) & {0.14} ({0.02}) & {0.01} ({0.01}) & {0.16} ({0.04}) & {0.29} ({0.07}) & {-1.16} ({0.30}) & {2.02} ({0.14}) & {2.15}  ({0.24}) \\ 
& &  {0.00} ({0.00}) & {0.08} ({0.01}) & {0.14} ({0.02}) & {0.00} ({0.00}) & {0.16} ({0.04}) & {0.27} ({0.06}) & {-1.35} ({0.31}) & {21.01} ({11.72}) & {33.20} ({19.26}) \\[.1cm] 
&1.2 & 0.00 & 0.09 & 0.15 & 0.01 & 0.16 & 0.29 &  & &  \\ 
& &  {0.00}  ({0.01}) &  {0.08}  ({0.01}) &  {0.14}  ({0.03}) &  {0.00}  ({0.01}) &  {0.16}  ({0.05}) &  {0.29}  ({0.08}) &  & &  \\ 
& &  {0.00}  ({0.00}) & {0.08}  ({0.01}) & {0.14}  ({0.02}) & {0.01}  ({0.00}) & {0.16}  ({0.04}) & {0.27}  ({0.07}) & & & \\[.1cm] 
		\end{tabular}
	}
	\label{tab:CLT-t9}
	\end{table}

\subsection{Simulations for nonlinear spectral statistics}
\label{subsec:emp:NLSS}

This section shows that the proposed bootstrap procedure can work for statistics beyond the class of linear spectral statistics. Recall that in order to apply Algorithm \ref{algorithm:bootstrap} to a generic nonlinear spectral statistic, say $\psi\{\lambda_1(\hat\Sigma_n),\ldots,\lambda_p(\hat\Sigma_n)\}$, it suffices to change only the third step to compute bootstrap samples of the form $\psi(\hat\lambda_1^*,\ldots,\hat\lambda_p^*)$. Even though a theoretical assessment for nonlinear spectral statistics is not feasible in the present paper, simulations have been conducted with the following examples: $T_{\max}=\lambda_1(\hat\Sigma_n)$, the largest sample eigenvalue, $T_{10}=\lambda_1(\hat\Sigma_n)+\cdots+\lambda_{10}(\hat\Sigma_n)$, the sum of the top ten sample eigenvalues, and $T_{\text{gap}}=\lambda_1(\hat\Sigma_n)-\lambda_2(\hat\Sigma_n)$, the spectral gap.

Note that asymptotic formulas for the distributions of these statistics are difficult to come by in many situations, especially if the matrix $Z$ is non-Gaussian, or if the matrix $\Sigma_n$ is non-diagonal.  The simulations were set up in essentially the same way as in the previous subsection for linear spectral statistics, except  that results are reported for  $\{T_{\max}-E(T_{\max})\}$, $\{T_{10}-E(T_{10})\}$, and $\{T_{\text{gap}}-E[T_{\text{gap}}]\}$. Note that unlike the simulations for linear spectral statistics, a factor of $p$ is omitted so that results are displayed on a convenient scale. Also, in this context, the bootstrap samples are centered by their empirical mean, rather than $\tilde{\vartheta}_n(f)$. The results are displayed in Tables \ref{tab:NLSS-Gaussian}--\ref{tab:NLSS-t9}, and they show an exciting picture: The proposed bootstrap algorithm worked well for each of the cases considered, which indicates the potential applicability of the proposed method beyond the class of linear spectral statistics. 
~\\
~\\
~\\
\begin{table}
\caption{Case (1): Standard Gaussian variables $Z_{ij}$.  Results are displayed as in Table \ref{tab:CLT-Gaussian} but with various nonlinear spectral statistics.}{
\scriptsize
\begin{tabular}{cc@{\quad}lll@{\quad}lll@{\quad}lll}
&& \multicolumn{3}{c}{\!\!\!\!\!\!\!\!\!\!\!\!\!\!\!\!\!\!\!\!\!\!\!\!$T_{\max}-E(T_{\max})\phantom{\displaystyle\frac 1n}$} & \multicolumn{3}{c}{\!\!\!\!\!\!\!\!\!\!\!\!\!\!\!\!\!\!\!\!\!\!\!\!$T_{10}-E(T_{10})$} & \multicolumn{3}{c}{\!\!\!\!\!\!\!\!\!\!\!\!\!\!\!\!\!\!\!\!\!\!\!\!$T_{\text{gap}}-E(T_{\text{gap}})$} \\[.3cm]
$\Sigma_n$ & $\gamma_n$ & sd & 95th & 99th & sd & 95th & 99th & sd & 95th & 99th \\[.1cm] 
(a) & 0.4 & 0.13 & 0.22 & 0.30 & 0.58 & 0.95 & 1.30 & 0.12 & 0.23 & 0.31 \\
&& {0.15} ({0.02}) & {0.25} ({0.05}) & {0.38} ({0.08}) & {0.57} ({0.05}) & {0.94} ({0.11}) & {1.34} ({0.20}) & {0.14} ({0.03}) & {0.25} ({0.05}) & {0.39} ({0.09}) \\[.1cm]
& 0.8 & 0.13 & 0.25 & 0.33 & 0.52 & 0.89 & 1.19 & 0.13 & 0.24 & 0.31 \\
&& {0.14} ({0.02}) & {0.25} ({0.05}) & {0.37} ({0.08}) & {0.53} ({0.05}) & {0.88} ({0.11}) & {1.26} ({0.19}) & {0.14} ({0.03}) & {0.25} ({0.06}) & {0.38} ({0.09}) \\[.1cm]
& 1.2 & 0.12 & 0.20 & 0.33 & 0.51 & 0.84 & 1.28 & 0.11 & 0.19 & 0.31 \\
&& {0.14} ({0.02}) & {0.24} ({0.04}) & {0.36} ({0.07}) & {0.49} ({0.04}) & {0.81} ({0.10}) & {1.18} ({0.19}) & {0.13} ({0.02}) & {0.24} ({0.05}) & {0.36} ({0.08}) \\[.1cm]
(b) & 0.4 & 0.06 & 0.11 & 0.15 & 0.10 & 0.17 & 0.26 & 0.07 & 0.12 & 0.17 \\
&& {0.06} ({0.01}) & {0.10} ({0.02}) & {0.15} ({0.03}) & {0.10} ({0.01}) & {0.17} ({0.02}) & {0.25} ({0.04}) & {0.07} ({0.01}) & {0.12} ({0.02}) & {0.17} ({0.03}) \\[.1cm]
& 0.8 & 0.06 & 0.10 & 0.14 & 0.10 & 0.17 & 0.23 & 0.07 & 0.12 & 0.17 \\
&& {0.06} ({0.01}) & {0.10} ({0.01}) & {0.15} ({0.03}) & {0.10} ({0.01}) & {0.17} ({0.02}) & {0.24} ({0.04}) & {0.07} ({0.01}) & {0.12} ({0.02}) & {0.18} ({0.03}) \\[.1cm]
& 1.2 & 0.06 & 0.11 & 0.14 & 0.10 & 0.17 & 0.25 & 0.08 & 0.13 & 0.20 \\
&& {0.06} ({0.01}) & {0.10} ({0.02}) & {0.15} ({0.03}) & {0.10} ({0.01}) & {0.17} ({0.02}) & {0.25} ({0.04}) & {0.07} ({0.01}) & {0.12} ({0.02}) & {0.17} ({0.03}) \\[.1cm]
(c) & 0.4 & 0.06 & 0.10 & 0.16 & 0.07 & 0.12 & 0.16 & 0.07 & 0.11 & 0.16 \\
&& {0.06} ({0.01}) & {0.11} ({0.01}) & {0.15} ({0.03}) & {0.07} ({0.01}) & {0.12} ({0.02}) & {0.17} ({0.03}) & {0.07} ({0.01}) & {0.11} ({0.01}) & {0.16} ({0.03}) \\[.1cm]
& 0.8 & 0.06 & 0.10 & 0.18 & 0.07 & 0.11 & 0.18 & 0.07 & 0.11 & 0.17 \\
&& {0.06} ({0.01}) & {0.11} ({0.01}) & {0.15} ({0.03}) & {0.07} ({0.01}) & {0.11} ({0.01}) & {0.16} ({0.03}) & {0.07} ({0.01}) & {0.11} ({0.01}) & {0.16} ({0.03}) \\[.1cm]
& 1.2 & 0.06 & 0.11 & 0.14 & 0.07 & 0.11 & 0.15 & 0.06 & 0.11 & 0.15 \\
&& {0.06} ({0.01}) & {0.11} ({0.02}) & {0.15} ({0.02}) & {0.07} ({0.01}) & {0.11} ({0.02}) & {0.16} ({0.03}) & {0.07} ({0.01}) & {0.11} ({0.02}) & {0.16} ({0.03}) \\[.1cm]
\end{tabular}
}
\caption{
 For each value of $\gamma_n$, the first row corresponds to the population quantities, and the second row corresponds to the mean and standard deviation (in parentheses) for the bootstrap estimates.
}
\label{tab:NLSS-Gaussian}
\end{table}

\begin{table}
\caption{Case (2): Standardized beta(6,6) variables $Z_{ij}$. Results are displayed as in Table  \ref{tab:NLSS-Gaussian}}{
\scriptsize
\begin{tabular}{cc@{\quad}lll@{\quad}lll@{\quad}lll}
&& \multicolumn{3}{c}{\!\!\!\!\!\!\!\!\!\!\!\!\!\!\!\!\!\!\!\!\!\!\!\!$T_{\max}-E(T_{\max})\phantom{\displaystyle\frac 1n}$} & \multicolumn{3}{c}{\!\!\!\!\!\!\!\!\!\!\!\!\!\!\!\!\!\!\!\!\!\!\!\!$T_{10}-E(T_{10})$} & \multicolumn{3}{c}{\!\!\!\!\!\!\!\!\!\!\!\!\!\!\!\!\!\!\!\!\!\!\!\!$T_{\text{gap}}-E(T_{\text{gap}})$} \\[.3cm]
$\Sigma_n$ & $\gamma_n$ & sd & 95th & 99th & sd & 95th & 99th & sd & 95th & 99th \\[.1cm] 
(a) & 0.4 & 0.13 & 0.23 & 0.32 & 0.55 & 0.95 & 1.29 & 0.11 & 0.22 & 0.35 \\
&& {0.14} ({0.02}) & {0.24} ({0.04}) & {0.35} ({0.06}) & {0.51} ({0.04}) & {0.84} ({0.10}) & {1.21} ({0.18}) & {0.13} ({0.02}) & {0.24} ({0.05}) & {0.37} ({0.08}) \\[.1cm]
& 0.8 & 0.13 & 0.21 & 0.31 & 0.53 & 0.89 & 1.22 & 0.12 & 0.20 & 0.36 \\
&& {0.14} ({0.02}) & {0.23} ({0.04}) & {0.35} ({0.07}) & {0.48} ({0.04}) & {0.80} ({0.10}) & {1.15} ({0.17}) & {0.13} ({0.03}) & {0.24} ({0.05}) & {0.37} ({0.09}) \\[.1cm]
& 1.2 & 0.12 & 0.21 & 0.33 & 0.50 & 0.79 & 1.23 & 0.12 & 0.22 & 0.34 \\
&& {0.12} ({0.02}) & {0.22} ({0.03}) & {0.32} ({0.06}) & {0.41} ({0.03}) & {0.68} ({0.08}) & {0.97} ({0.14}) & {0.12} ({0.02}) & {0.23} ({0.04}) & {0.34} ({0.07}) \\[.1cm]
(b) & 0.4 & 0.06 & 0.11 & 0.15 & 0.10 & 0.16 & 0.22 & 0.08 & 0.13 & 0.17 \\
&& {0.06} ({0.01}) & {0.09} ({0.01}) & {0.13} ({0.02}) & {0.09} ({0.01}) & {0.16} ({0.02}) & {0.22} ({0.03}) & {0.07} ({0.01}) & {0.11} ({0.01}) & {0.16} ({0.03}) \\[.1cm]
& 0.8 & 0.06 & 0.10 & 0.15 & 0.10 & 0.17 & 0.23 & 0.08 & 0.13 & 0.18 \\
&& {0.06} ({0.01}) & {0.09} ({0.01}) & {0.13} ({0.02}) & {0.09} ({0.01}) & {0.15} ({0.02}) & {0.22} ({0.03}) & {0.07} ({0.01}) & {0.11} ({0.01}) & {0.16} ({0.03}) \\[.1cm]
& 1.2 & 0.06 & 0.11 & 0.14 & 0.10 & 0.17 & 0.23 & 0.07 & 0.13 & 0.17 \\
&& {0.06} ({0.01}) & {0.09} ({0.01}) & {0.13} ({0.02}) & {0.09} ({0.01}) & {0.16} ({0.02}) & {0.22} ({0.03}) & {0.07} ({0.01}) & {0.11} ({0.02}) & {0.16} ({0.03}) \\[.1cm]
(c) & 0.4 & 0.06 & 0.09 & 0.14 & 0.06 & 0.11 & 0.16 & 0.06 & 0.10 & 0.15 \\
&& {0.06} ({0.01}) & {0.09} ({0.01}) & {0.14} ({0.02}) & {0.06} ({0.01}) & {0.11} ({0.01}) & {0.15} ({0.02}) & {0.06} ({0.01}) & {0.10} ({0.01}) & {0.14} ({0.02}) \\[.1cm]
& 0.8 & 0.06 & 0.11 & 0.14 & 0.06 & 0.11 & 0.14 & 0.06 & 0.11 & 0.14 \\
&& {0.06} ({0.01}) & {0.09} ({0.01}) & {0.14} ({0.02}) & {0.06} ({0.01}) & {0.10} ({0.01}) & {0.14} ({0.02}) & {0.06} ({0.01}) & {0.10} ({0.01}) & {0.14} ({0.02}) \\[.1cm]
& 1.2 & 0.06 & 0.10 & 0.15 & 0.06 & 0.11 & 0.15 & 0.06 & 0.10 & 0.15 \\
&& {0.06} ({0.01}) & {0.09} ({0.01}) & {0.13} ({0.02}) & {0.06} ({0.01}) & {0.10} ({0.01}) & {0.14} ({0.02}) & {0.06} ({0.01}) & {0.10} ({0.01}) & {0.14} ({0.02}) \\[.1cm]
\end{tabular}
}
\label{tab:NLSS-beta}
\end{table}

\begin{table}[h]
\caption{Case (3): Standardized t9 variables $Z_{ij}$. Results are displayed as in Table \ref{tab:NLSS-Gaussian}.}{
\scriptsize
		\begin{tabular}{cc@{\quad}lll@{\quad}lll@{\quad}lll}
			&& \multicolumn{3}{c}{\!\!\!\!\!\!\!\!\!\!\!\!\!\!\!\!\!\!\!\!\!\!\!\!$T_{\max}-E(T_{\max})\phantom{\displaystyle\frac 1n}$} & \multicolumn{3}{c}{\!\!\!\!\!\!\!\!\!\!\!\!\!\!\!\!\!\!\!\!\!\!\!\!$T_{10}-E(T_{10})$} & \multicolumn{3}{c}{\!\!\!\!\!\!\!\!\!\!\!\!\!\!\!\!\!\!\!\!\!\!\!\!$T_{\text{gap}}-E(T_{\text{gap}})$} \\[.3cm]
			$\Sigma_n$ & $\gamma_n$ & sd & 95th & 99th & sd & 95th & 99th & sd & 95th & 99th \\[.1cm] 
			(a)& 0.4 & 0.13 &  1.53 &  1.67 & 0.57 &  6.71 &  7.07 & 0.12 &  0.44 &  0.55 \\ 
			&& {0.17} ({0.03}) & {1.48} ({0.15}) & {1.64} ({0.13}) & {0.71} ({0.04}) & {6.97} ({0.12}) & {7.48} ({0.16}) & {0.15} ({0.03}) & {0.42} ({0.18}) & {0.58} ({0.16}) \\[.1cm] 
			& 0.8 & 0.13 &  2.07 &  2.16 & 0.53 &  12.64 &  12.98 & 0.11 &  0.41 &  0.54  \\ 
			&& {0.17} ({0.02}) & {2.02} ({0.15}) & {2.18} ({0.13}) & {0.65} ({0.04}) & {13.07} ({0.27}) & {13.54} ({0.29}) & {0.15} ({0.03}) & {0.40} ({0.17}) & {0.56} ({0.16}) \\[.1cm] 
			& 0.4 & 0.12 &  2.65 &  2.76 & 0.49 &  18.68 &  19.02 & 0.11 &  0.41 &  0.53 \\ 
			&& {0.16} ({0.03}) & {2.56} ({0.15}) & {2.71} ({0.14}) & {0.60} ({0.04}) & {19.30} ({0.46}) & {19.75} ({0.46}) & {0.14} ({0.03}) & {0.39} ({0.18}) & {0.54} ({0.16}) \\[.1cm] 		
			(b)& 0.4 & 0.06 &  0.18 &  0.22 & 0.10 &  0.89 &  0.98 & 0.07 &  0.13 &  0.19 \\ 
			&& {0.08} ({0.01}) & {0.20} ({0.01}) & {0.27} ({0.02}) & {0.13} ({0.01}) & {0.95} ({0.04}) & {1.05} ({0.05}) & {0.09} ({0.01}) & {0.15} ({0.02}) & {0.22} ({0.03}) \\[.1cm] 
			& 0.8 & 0.06 &  0.21 &  0.25 & 0.11 &  1.16 &  1.24 & 0.07 &  0.12 &  0.17  \\ 
			&& {0.08} ({0.01}) & {0.23} ({0.01}) & {0.29} ({0.03}) & {0.13} ({0.01}) & {1.22} ({0.04}) & {1.32} ({0.05}) & {0.09} ({0.01}) & {0.15} ({0.02}) & {0.22} ({0.03}) \\[.1cm] 
			& 0.4 & 0.06 &  0.21 &  0.24 & 0.10 &  1.34 &  1.41 & 0.07 &  0.11 &  0.15 \\ 
			&& {0.08} ({0.01}) & {0.25} ({0.01}) & {0.31} ({0.02}) & {0.13} ({0.01}) & {1.42} ({0.04}) & {1.52} ({0.05}) & {0.09} ({0.01}) & {0.14} ({0.02}) & {0.21} ({0.03}) \\[.1cm] 
			(c)& 0.4 & 0.08 &  0.14 &  0.20 & 0.09 &  0.17 &  0.24 & 0.08 &  0.13 &  0.20 \\ 
			&& {0.08} ({0.00}) & {0.14} ({0.01}) & {0.20} ({0.02}) & {0.09} ({0.00}) & {0.17} ({0.01}) & {0.24} ({0.02}) & {0.08} ({0.01}) & {0.13} ({0.01}) & {0.20} ({0.02}) \\[.1cm] 
			& 0.8 & 0.08 &  0.14 &  0.21 & 0.09 &  0.16 &  0.23 & 0.08 &  0.15 &  0.21  \\ 
			&& {0.08} ({0.00}) & {0.14} ({0.01}) & {0.20} ({0.02}) & {0.08} ({0.00}) & {0.16} ({0.01}) & {0.23} ({0.02}) & {0.08} ({0.00}) & {0.14} ({0.01}) & {0.20} ({0.02}) \\[.1cm] 
			& 0.4 & 0.09 &  0.16 &  0.24 & 0.09 &  0.18 &  0.27 & 0.09 &  0.16 &  0.24 \\ 
			&& {0.08} ({0.00}) & {0.14} ({0.01}) & {0.20} ({0.03}) & {0.08} ({0.00}) & {0.16} ({0.01}) & {0.23} ({0.03}) & {0.08} ({0.00}) & {0.14} ({0.01}) & {0.21} ({0.03}) \\[.1cm] 
		\end{tabular}
	}
	\label{tab:NLSS-t9}
\end{table}

\subsection{Application to hypothesis testing}
\label{subsec:emp:hypo}

The bootstrap procedure may be applied to compute critical values for sphericity tests of the null hypothesis $H_0\colon\Sigma_n=I_p$.
Three types of test statistics were considered in the simulations: Likelihood ratio test statistic, $\mathrm{tr}(\hat\Sigma_n)-\log\det\hat\Sigma_n-p$, John's test statistic, $\mathrm{tr}[\{p\hat\Sigma_n/\mathrm{tr}(\hat\Sigma_n)-I_p\}^2]$, and Condition number, $\lambda_1(\hat\Sigma_n)/\lambda_p(\hat\Sigma_n)$.
The latter two examples are nonlinear spectral statistics, whereas the likelihood ratio test is a linear spectral statistic based on $f(x)=x-\log(x)-1$. Further background may be found in~\cite{Muirhead:2005} and~\citet[Sec.~10.8]{Anderson:2003}. 
 
To explain how the critical values are obtained via the bootstrap, suppose that the observations $X_1,\dots,X_n$ are generated under $H_0$. In this case, the eigenvalues of $\Sigma_n$ are known, with \smash{$\lambda_1(\Sigma)=\cdots=\lambda_p(\Sigma)=1$.} Consequently, the matrix $\tilde\Lambda_n$  in Algorithm 3.1 may be replaced with the identity matrix $I_p$. Meanwhile, the kurtosis estimate $\hat\kappa_n$ is still computed with the proposed formula~\eqref{EQ:EST-KURT}. In summary, for any test statistic of the form $T=\psi \{ \lambda_1(\hat\Sigma_n),\dots,\lambda_p(\hat\Sigma_n) \}$ for some generic function $\psi$, the $1-\alpha$ quantile may be estimated as follows:

\begin{algo}[Bootstrap critical values]
\begin{tabbing} \\
  \enspace For: $b=1$ to $b=B$ \\
   \qquad Generate a random matrix $Z^*\in\mathbb{R}^{n\times p}$ with i.i.d.~entries drawn from Pearson$(0,1,0,\hat{\kappa}_n)$. \\
   \qquad Compute the eigenvalues of the matrix $\hat{\Sigma}_n^*=\frac 1n (Z^*)\ttop (Z^*)$, and denote them by $\lambda_1^*,\dots,\lambda_p^*$.\\
   \qquad Compute the statistic $T_{n,b}^*=\psi(\lambda_1^*,\dots,\lambda_p^*)$. \\
   \enspace Output the empirical $1-\alpha$ quantile of the values $T_{n,1}^*,\ldots,T_{n,B}^*$.
\end{tabbing} 
\end{algo}

%\noindent {Results on type I error.} \ \
The procedure above was applied to data drawn from the three distributions (1)--(3), with $\alpha=0.05$ or $\alpha=0.01$. The simulations were organized analogously to those in Section~\ref{subsec:emp:LSS}. Table~\ref{tab:hypo:type1} below shows that the bootstrap leads to type-I error rates very close to the nominal ones.
%\noindent {Results on power.} \ \

To examine the power of the tests when bootstrap critical values are used, another set of simulations were carried out under the spiked alternative $\Sigma=\text{diag}\{ \lambda_1(\Sigma_n),\dots,\lambda_{10}(\Sigma_n),1,\dots,1\}$ with \smash{$\lambda_j(\Sigma_n)= \phi>1$} for each $j=1,\dots,10$. The value $\phi$ was chosen separately for each setting and test, so that the test achieved a power of either $90\%$ or $80\%$ with the true $5\%$ critical value. Table~\ref{tab:hypo:power} below shows that the bootstrap critical values and the true critical values led to nearly the same power.

\begin{table}
\begin{center}
\caption{Type I errors of the bootstrap procedure at the 5\% and 1\% nominal levels for various tests of sphericity under various distributions and aspect ratios.}{
\begin{tabular}{lc@{\qquad}ll@{\qquad}ll@{\qquad}ll}
&& \multicolumn{2}{l}{\!\!\!LRT$\phantom{\displaystyle\frac 1n}$} & \multicolumn{2}{l}{\!\!\!John} & \multicolumn{2}{l}{\!\!\!CN} \\[.1cm]
$Z$ & $\gamma_n$ & 0.05 & 0.01 & 0.05 & 0.01 & 0.05 & 0.01 \\[.1cm]
Gaussian & 0.4 & 0.0452 & 0.0092 & 0.0471 & 0.0090 & 0.0477 & 0.0094 \\
& 0.8 & 0.0460 & 0.0086 & 0.0479 & 0.0119 & 0.0536 & 0.0117 \\
& 1.2 & & & 0.0530 & 0.0108\\[.1cm]
beta(6,6) & 0.4 & 0.0544 & 0.0133 & 0.0529 & 0.0137 & 0.0506 & 0.0092\\
& 0.8 & 0.0519 & 0.0090 & 0.0516 & 0.0116 & 0.0547 & 0.0107 \\
& 1.2 & & & 0.0512 & 0.0085 \\[.1cm]
t9 & 0.4 & 0.0497 & 0.0104 & 0.0524 & 0.0090 & 0.0505 & 0.0093 \\ 
& 0.8 & 0.0499 & 0.0103 & 0.0505 & 0.0106 & 0.0514 & 0.0109 \\ 
& 1.2 &  &  & 0.0516 & 0.0106 & &   \\[.1cm] 
\end{tabular}
}
\caption{
LRT and CN stands for likelihood ratio test and condition number, respectively. Note that Gaussian, beta(6,6), and t9 refer to the standardized versions of these distributions, with mean 0 and variance 1.
}
\label{tab:hypo:type1}
\end{center}
\end{table}

\begin{table}
	\begin{center}
	\caption{Power results}{
	\begin{tabular}{lc@{\qquad}ll@{\qquad}ll@{\qquad}ll}
		&& \multicolumn{2}{l}{\!\!\!LRT$\phantom{\displaystyle\frac 1n}$} & \multicolumn{2}{l}{\!\!\!John} & \multicolumn{2}{l}{\!\!\!CN} \\[.1cm]
		$Z$ & $\gamma_n$ & 0.90 & 0.80 & 0.90 & 0.80 & 0.90 & 0.80 \\[.1cm]
		Gaussian &0.4& 0.8972 & 0.7931 & 0.9074 & 0.7920 & 0.9088 & 0.8167 \\  
		& 0.8& 0.8880 & 0.7750 & 0.8859 & 0.7739 & 0.9112 & 0.8232 \\  
		& 1.2&  & & 0.8916 & 0.7915 &  &  \\[.1cm]
		beta(6,6) &0.4& 0.8764 & 0.7713 & 0.8988 & 0.7848 & 0.8902 & 0.7887 \\  
		& 0.8& 0.8982 & 0.8046 & 0.8946 & 0.7917 & 0.9065 & 0.8187 \\  
		& 1.2&  & & 0.8882 & 0.7859 &  &  \\[.1cm]
t9 & 0.4 & 0.8829 & 0.7873 & 0.8920 & 0.7971 & 0.9191 & 0.8119 \\ 
& 0.8 & 0.8959 & 0.8039 & 0.8877 & 0.7980 & 0.8677 & 0.7233 \\ 
& 1.2 &  &  & 0.8927 & 0.7919 & &   \\[.1cm] 
	\end{tabular}
	}
	\label{tab:hypo:power}
	\end{center}
\end{table}

\subsection{Protein data example}
\label{subsec:realdata}

The tumor suppressor protein p53 plays a fundamental role in human cancer research. Due to the fact that thousands of mutations of this protein have been observed in cancer patients, it is of interest to know how the properties of the protein vary across mutations. To address this general question, the paper~\cite{Danziger:2006} proposed a method to assign biophysical features to a large collection of p53 mutations. In particular, the authors used the method to produce a dataset of 31,159 mutations, with 5,408 features per mutation --- which is accessible as the `p53 mutants' dataset in the~\cite{UCI} repository. As an illustration, the following experiments consider the problem of detecting correlations among these features in a variety of scenarios.

%\paragraph{Design of experiments.}
The 31,159 rows of the p53 dataset were viewed as a finite population, and new datasets of varying sizes were obtained by sampling from this population.  More specifically, for each of the pairs $(n,p)\in\{150,250,500\}\times \{25,75,125\}$, a dataset of size $n\times p$ was obtained by sampling $n$ rows without replacement. In order to make the problem of detecting correlations more challenging, the $p$ columns corresponded to the variables $j\in\{1,\dots,5,\!408\}$ with the $p$ smallest correlation scores $\rho(j)$, defined by $\rho(j)=\sum_{l=1}^{5,408} |R_{jl}|$, where $R\in\R^{5,408\times 5,408}$ denotes the sample correlation matrix of the full p53 dataset. After each  $n\times p$ matrix was drawn, it was then standardized, and each of the three sphericity tests were applied to compute p-values. The calculations were done in the same manner as in Section~\ref{subsec:emp:hypo}, except that a large choice of $B=10^4$ bootstrap replicates was used in order to resolve very small p-values. Finally, in order to illustrate the typical performance of the tests, this entire process was repeated 500 times for each pair $(n,p)$, and then the 500 p-values from each test statistic were respectively averaged. The results are displayed in Table~\ref{tab:protein} below.

\begin{table}[h]
\begin{center}
\caption{Averaged p-values obtained from the sphericity tests (John, LRT, CN) over $500$ repeated experiments.}{
\begin{tabular}{c|ccc} 
	$(n,p)$ & John & LRT & CN \\  
	(500,125) & 0.001211 & 0.000009 & 0.000000 \\ 
	(500,100) & 0.001668 & 0.000020 & 0.000000 \\ 
	(500,75) & 0.002405 & 0.000044 & 0.000000 \\ 
	(250,125) & 0.002023 & 0.000009 & 0.000000 \\ 
	(250,100) & 0.002835 & 0.000020 & 0.000000 \\ 
	(250,75) & 0.004005 & 0.000055 & 0.000000 \\ 
	(150,125) & 0.002660 & 0.000012 & 0.000000 \\ 
	(150,100) & 0.003543 & 0.000018 & 0.000000 \\ 
	(150,75) & 0.005271 & 0.000054 & 0.000000 \\ 
\end{tabular}
}
\label{tab:protein}
\end{center}
\end{table}
%\noindent {Protein data results.}
Some interesting patterns are apparent in Table~\ref{tab:protein}. For a fixed $n$, the p-values become monotonically larger as $p$ decreases, which is intuitive because there are fewer possible correlations to detect. 
Another pattern is that for every pair $(n,p)$, all three tests obey the same ordering of power --- with the CN test being most powerful, and the John test being least powerful. Values of $p$ larger than 125 were also considered, but the cases of $p=75,100,$ and $125$ were selected for presentation, since they reveal prominent differences among the tests.

%%%%%%%%%%%%%%%%%%%%%%%%
\section{Discussion}
\label{sec:conclusions}
%%%%%%%%%%%%%%%%%%%%%%%%
In this paper, the Spectral Bootstrap procedure was proposed for approximating the distributions of spectral statistics in the high-dimensional setting. 
While the method is conceptually based on ideas from random matrix theory, the method is user-friendly since its implementation requires no knowledge of this subject matter. The main theoretical contribution states the consistency of the Spectral Bootstrap for linear spectral statistics. Simulation studies with a number of linear spectral statistics indicate that the method has excellent finite sample behavior for a range of distributions and varying kurtosis. Moreover, the method has the promise of being applicable beyond the class of linear spectral statistics, as evidenced through experiments with several nonlinear spectral statistics. This may be particularly useful in applications where formulas for limit laws do not yet exist. Future research may look into theoretically and computationally extending the scope of the proposed bootstrap. 

\section*{Acknowledgements} Lopes was partially supported by NSF grant DMS 1613218.  Aue was partially supported by NSF grants DMS 1305858 and DMS 1407530. We thank Debashis Paul for helpful feedback.

%\section*{Supplementary material}
%Supplementary material available at \Bka\ online includes the proofs of all theoretical results.

\bibliographystyle{rss}
\bibliography{LSS_boot_bib.bib}

\appendix
~\\

\section*{Appendices} The proofs are organized according to the order of the results in the main text.
%All results are proved in the same order as they appear in the main text.\\
~\\

\noindent \emph{Notation.} Before presenting the proofs, we first mention a few notational items. If $a_n$ and $b_n$ are numerical sequences, we write $a_n\lesssim b_n$, or equivalently $a_n=\mathcal{O}(b_n)$,  if there is a positive constant $c$ such that $|a_n|\leq c |b_n|$ for all large $n$ (where $a_n$ and $b_n$ are allowed to be complex). Likewise, the expression $a_n\asymp b_n$ means $a_n\lesssim b_n$ and $b_n\lesssim a_n$. Lastly, define the upper and lower complex half-planes $\C^+=\{z\in\C\colon \Im(z)>0\}$ and $\C^-=\{z\in\C\colon \Im(z)<0\}$.
%Unlike the main text, we will sometimes write $X_n$ instead of $X$.

%%%%%%%%%%%%%%%%%%%%%%%%
\section{Proof of Theorem \ref{THM:CONSIST}}
\label{sec:proofs:kappa}
%%%%%%%%%%%%%%%%%%%%%%%%
Below, we prove the consistency of $\hat\kappa_n$ in Section~\ref{sec:kappaconsist}, and the consistency of $\tilde H_n$ in Section~\ref{sec:Hconsist}.
\subsection{Consistency of kurtosis estimator}\label{sec:kappaconsist}

Recall that a generic estimator $\hat\theta_n$ for a parameter $\theta_n$ is said to be ratio-consistent if $\hat\theta_n/\theta_n\xrightarrow{ \ \P \ } 1$. Lemmas~\ref{lem:omega_consist},~\ref{lem:nu_consist}, and~\ref{LEM:TAU_CONSIST} will establish the ratio-consistency of $\hat\omega_n$, $\hat\nu_n$, and $\hat\tau_n$ respectively. In proving these lemmas, we will rely on some facts about random quadratic forms, which are summarized in the following lemma obtained from~\citet[Lemma B.26]{Bai:Silverstein:2010} and~\citet[eqn.\,1.15]{Bai:Silverstein:2004}.

\begin{lemma}\label{lemma:qforms}
Let $A\in \R^{p\times p}$ be a non-random matrix, and let $V\in\R^p$ be  a random vector with independent entries satisfying $E(V_j)=0$, $E(V_j^2)=1$, $E(V_j^4)=\kappa$. Also, let $r\in [1,\infty)$ be fixed, and suppose $E(|V_j|^s)\leq c_s$ for $1\leq s\leq 2r$. Then,
\begin{equation}
E\Big( \big|V\ttop A V -\text{\emph{tr}}(A)\big|^r\Big) \leq C_r \Big[\kappa^{r/2}\,\|A\|_F^r+c_{2r}\tr\big\{(AA\ttop)^{r/2}\big\}\Big],
\end{equation}
where $C_r>0$ is a number depending only on $r$. Furthermore, in the case $r=2$, the following formula holds:
\begin{equation}\label{EQ:KURTOSISFORMULA}
\text{\emph{var}}(V\ttop AV) = 2\|A\|_F^2+(\kappa-3)\ts\sum_{j=1}^p A_{jj}^2.
\end{equation}
 
\end{lemma}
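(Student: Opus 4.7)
The two parts of the lemma call for different strategies, and I would treat them in sequence. As a preliminary reduction, since $V\ttop A V$ depends only on the symmetric part $S = (A+A\ttop)/2$, I may assume without loss of generality that $A$ is symmetric: the diagonal entries agree with those of $S$, and $\|S\|_F \leq \|A\|_F$, so this replacement is harmless for the stated bounds.

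For the variance identity, the plan is a direct moment calculation. Expanding the square,
\[
E[(V\ttop A V)^2] = \sum_{i,j,k,l} A_{ij} A_{kl}\, E[V_i V_j V_k V_l],
\]
one uses independence together with $E V_j = 0$, $E V_j^2 = 1$, $E V_j^4 = \kappa$ to observe that $E[V_i V_j V_k V_l]$ is supported on four patterns: all four indices equal; and the three disjoint matchings $i=j,\,k=l$; $i=k,\,j=l$; $i=l,\,j=k$. Collecting contributions, using $A = A\ttop$ and $\sum_{i \neq j} A_{ij}^2 = \|A\|_F^2 - \sum_j A_{jj}^2$, and subtracting $(E V\ttop A V)^2 = \tr(A)^2$, leaves exactly $2\|A\|_F^2 + (\kappa - 3)\sum_j A_{jj}^2$.

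For the $L^r$ bound, the plan is a martingale-difference decomposition. Filtering by $\mathcal{F}_k = \sigma(V_1,\dots,V_k)$ and setting $D_k = E[V\ttop A V \mid \mathcal{F}_k] - E[V\ttop A V \mid \mathcal{F}_{k-1}]$, a short computation (using independence and $E V_j = 0$) yields
\[
D_k = A_{kk}(V_k^2 - 1) + 2 V_k \sum_{j<k} A_{jk} V_j, \qquad \sum_{k=1}^{p} D_k = V\ttop A V - \tr(A).
\]
Burkholder's inequality then gives $E[|V\ttop AV - \tr(A)|^r] \leq C_r E[(\sum_k D_k^2)^{r/2}]$, and I would estimate the sum of squares by splitting into the off-diagonal and diagonal parts. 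Conditional on $\mathcal{F}_{k-1}$, the off-diagonal piece $2V_k\sum_{j<k} A_{jk}V_j$ has second moment $4\sum_{j<k}A_{jk}^2$, whose sum over $k$ is bounded by $2\|A\|_F^2$; a Rosenthal-type inequality with the fourth-moment parameter $\kappa$ then produces the term $\kappa^{r/2}\|A\|_F^r$. The diagonal pieces $A_{kk}(V_k^2-1)$ are independent with $E|V_k^2 - 1|^r \leq c_{2r}$, and applying Marcinkiewicz--Zygmund together with a Schatten-norm bound on the diagonal vector yields the companion term $c_{2r}\tr\bigl((AA\ttop)^{r/2}\bigr)$.

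The main obstacle I expect lies in the last step: refining the diagonal bound from the crude $c_{2r}\|A\|_F^r$ to the sharper Schatten-type $c_{2r}\tr\bigl((AA\ttop)^{r/2}\bigr)$ on the right-hand side requires a careful application of Marcinkiewicz--Zygmund combined with a Minkowski-type inequality in Schatten norms, and it is here that the heavier moment assumption $E|V_j|^{2r} \leq c_{2r}$ enters in its full strength rather than through the lighter $\kappa$. Since this lemma is precisely Lemma B.26 of Bai and Silverstein, in practice I would record only the preliminary symmetrization and the direct variance computation above, and then invoke that reference for the clean form of the general $L^r$ bound.
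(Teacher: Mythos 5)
Your proposal takes essentially the same route as the paper, which records no proof of its own but cites Bai and Silverstein directly (Lemma~B.26 of the 2010 book for the $L^r$ bound and equation~(1.15) of the 2004 paper for the variance identity); your martingale-difference sketch via Burkholder and Rosenthal is precisely the machinery underlying that Lemma~B.26, so the two agree in substance. One small caution on the symmetrization step: for general (not necessarily symmetric) $A$ the second-moment computation actually yields $\mathrm{var}(V^\top A V) = \|A\|_F^2 + \mathrm{tr}(A^2) + (\kappa - 3)\sum_j A_{jj}^2$, which matches the stated formula only when $A = A^\top$; so passing to $S = (A+A^\top)/2$ is genuinely harmless for the $L^r$ inequality (since $\|S\|_{S_r} \le \|A\|_{S_r}$ and $\|S\|_F \le \|A\|_F$) but silently replaces $\|A\|_F^2$ by $\|S\|_F^2$ in the exact variance formula --- a distinction that is moot in this paper because the lemma is only ever invoked on symmetric matrices ($v_j v_j^\top$ and $\Sigma_n$), but worth stating explicitly if you want the identity as written.
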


\begin{lemma}\label{lem:omega_consist}
Suppose Assumption~\ref{assumption:data} holds. Then, as $n\to\infty$,
\begin{equation}
\ts\frac{1}{\omega_n}E(|\hat\omega_n-\omega_n|)\to 0.
\end{equation}
\end{lemma}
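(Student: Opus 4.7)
The plan is to decompose $\hat\omega_n - \omega_n$ into a zeroth-order fluctuation plus a quadratic correction, bound each via variance computations built on Lemma~\ref{lemma:qforms}, and finally use Assumption~\ref{assumption:esd} to translate the resulting $o(\omega_n)$ bound into the desired ratio statement. Write $W_j=\frac1n\sum_i X_{ij}^2$ so that $\hat\omega_n=\sum_j W_j^2$ and $E(W_j)=\sigma_j^2$. The key identity is
\[
W_j^2-\sigma_j^4 = (W_j-\sigma_j^2)^2+2\sigma_j^2(W_j-\sigma_j^2),
\]
which splits $\hat\omega_n-\omega_n$ into a non-negative quadratic piece $A=\sum_j(W_j-\sigma_j^2)^2$ and a linear piece $B=\sum_j\sigma_j^2(W_j-\sigma_j^2)$. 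It will suffice to show $E(A)=O(\omega_n/n)$ and $E|B|=O(\sqrt{\omega_n/n})$.

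For the quadratic piece, I would observe that $X_{ij}=Z_i\ttop s_j$ with $s_j$ the $j$-th column of $\Sigma_n^{1/2}$, so $nW_j=\sum_i Z_i\ttop(s_js_j\ttop)Z_i$. Applying the variance formula \eqref{EQ:KURTOSISFORMULA} with $A=s_js_j\ttop$ gives $\mathrm{var}(W_j)\le C_\kappa\sigma_j^4/n$, where I have used $\|s_js_j\ttop\|_F^2=\sigma_j^4$ and $\sum_k(s_js_j\ttop)_{kk}^2=\sum_k s_{jk}^4\le\sigma_j^4$. Summing in $j$ yields $E(A)=\sum_j \mathrm{var}(W_j)\le C_\kappa\omega_n/n$.

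For the linear piece, interchanging sums gives $B=\frac1n\sum_i[Z_i\ttop M Z_i-\mathrm{tr}(M)]$ with $M=\Sigma_n^{1/2}D\Sigma_n^{1/2}$ and $D=\mathrm{diag}(\sigma_1^2,\dots,\sigma_p^2)$, since $\mathrm{tr}(M)=\sum_j\sigma_j^2\|s_j\|^2=\omega_n$. By independence across $i$ and \eqref{EQ:KURTOSISFORMULA}, $\mathrm{var}(B)=\frac1n\{2\|M\|_F^2+(\kappa-3)\sum_k M_{kk}^2\}$. A short computation shows $\|M\|_F^2=\sum_{ij}\sigma_i^2\sigma_j^2(\Sigma_n)_{ij}^2\le\|D\|_{\rm op}^2\|\Sigma_n\|_F^2$ and $M_{kk}\le\|D\|_{\rm op}\,\sigma_k^2$; under Assumption~\ref{assumption:esd} both $\|D\|_{\rm op}$ and $\|\Sigma_n\|_{\rm op}$ are bounded, and $\|\Sigma_n\|_F^2\le p\lambda_1(\Sigma_n)^2=O(\omega_n)$. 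Thus $\mathrm{var}(B)=O(\omega_n/n)$, and Jensen gives $E|B|=O(\sqrt{\omega_n/n})$.

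Combining, $E|\hat\omega_n-\omega_n|\le E(A)+2E|B|=O(\omega_n/n)+O(\sqrt{\omega_n/n})$. Dividing by $\omega_n$, the bound becomes $O(1/n)+O(1/\sqrt{n\omega_n})$. Under Assumption~\ref{assumption:esd} the bottom eigenvalue satisfies $\lambda_p(\Sigma_n)\ge c>0$, so $\sigma_j^2\ge c$ and hence $\omega_n\ge c^2 p\to\infty$; both terms vanish. The only slightly delicate point is the bookkeeping for $\|M\|_F^2$ and $M_{kk}$ in terms of $\omega_n$, but once $D$ and $\Sigma_n$ are controlled in operator norm these follow from elementary inequalities. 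No assumption on $\kappa$ beyond Assumption~\ref{assumption:data} is used.
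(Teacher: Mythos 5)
Your decomposition $\hat\omega_n-\omega_n = A+2B$ with $A=\sum_j(W_j-\sigma_j^2)^2$ and $B=\sum_j\sigma_j^2(W_j-\sigma_j^2)$ is sound, and your bound $E(A)\lesssim\omega_n/n$ via Lemma~\ref{lemma:qforms} is correct and uses only Assumption~\ref{assumption:data}. The gap is in the treatment of $B$: you invoke Assumption~\ref{assumption:esd} three separate times --- to bound $\|D\|_{\rm op}$, to bound $\|\Sigma_n\|_{\rm op}$ and hence claim $\|\Sigma_n\|_F^2=O(\omega_n)$, and at the very end to argue $\omega_n\gtrsim p\to\infty$ so that $1/\sqrt{n\omega_n}\to 0$ --- but the lemma is stated under Assumption~\ref{assumption:data} alone, and the paper's proof indeed uses only that. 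Without Assumption~\ref{assumption:esd}, the quantities $\max_j\sigma_j^2$ and $\|\Sigma_n\|_F^2/\omega_n$ need not be bounded, so your claimed $\mathrm{var}(B)=O(\omega_n/n)$ is not established, and the final vanishing of $1/\sqrt{n\omega_n}$ is not guaranteed. Your closing remark that ``no assumption beyond Assumption~\ref{assumption:data} is used'' is therefore inaccurate as the proof stands.

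Both problems are easily repaired without Assumption~\ref{assumption:esd}. Either bound $B$ term by term, $E|B|\le\sum_j\sigma_j^2\,E|W_j-\sigma_j^2|\le\sum_j\sigma_j^2\sqrt{\mathrm{var}(W_j)}\lesssim n^{-1/2}\sum_j\sigma_j^4=n^{-1/2}\omega_n$, which gives $E|B|/\omega_n=O(n^{-1/2})$ directly; or, if you keep the quadratic-form route, replace the operator-norm bounds by the elementary Cauchy--Schwarz inequality $(\Sigma_n)_{ij}^2\le\sigma_i^2\sigma_j^2$, which yields $\|M\|_F^2=\sum_{ij}\sigma_i^2\sigma_j^2(\Sigma_n)_{ij}^2\le\bigl(\sum_j\sigma_j^4\bigr)^2=\omega_n^2$ and likewise $\sum_kM_{kk}^2\le\|M\|_F^2\le\omega_n^2$, so that $\mathrm{var}(B)\lesssim\omega_n^2/n$ and $E|B|/\omega_n=O(n^{-1/2})$. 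Either fix removes the dependence on Assumption~\ref{assumption:esd}, and the former collapses your argument essentially to the paper's proof: the paper never introduces the $A/B$ split, instead bounding $\sum_j E|(\hat\sigma_j^2)^2-\sigma_j^4|$ directly via Cauchy--Schwarz and $\mathrm{var}(\hat\sigma_j^2)\lesssim n^{-1}\sigma_j^4$, arriving at the same $O(n^{-1/2})$ rate.
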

\begin{proof} For each $j=1,\dots,p$, define the estimator
\[
\ts\hat\sigma_j^2= n^{-1} \sum_{i=1}^nX_{ij}^2,
\]
which clearly satisfies $E(\hat\sigma_j^2)=\sigma_j^2$, and allows $\hat\omega_n$ to be written as
$$\hat\omega_n= \sum_{j=1}^p (\hat\sigma_j^2)^2.$$
Considering the bound
\begin{equation}\label{EQ:TRIANGLEBOUND}
E\big(|\hat\omega_n-\omega_n|\big) \leq \sum_{j=1}^p E\big\{|(\hat\sigma_j^2)^2 -\sigma_j^4|\big\},
\end{equation}
we concentrate on the $j$th term,
\begin{equation}\label{EQ:TERMWISEBOUND}
 \begin{split}
 E\Big\{\big|(\hat\sigma_j^2)^2 -\sigma_j^4\big| \Big\}&=E\Big\{|\hat\sigma_j^2-\sigma_j^2|\cdot(\hat\sigma_j^2+\sigma_j^2)\Big\}\\[0.2cm]
  &\leq \sqrt{\var(\hat\sigma_j^2)}\cdot \sqrt{E\Big[\big\{(\hat\sigma_j^2-\sigma_j^2)+2\sigma_j^2)\big\}^2\Big]}\\[0.2cm]
 &\leq \sqrt{\var(\hat\sigma_j^2)}\cdot \sqrt{2 \var(\hat\sigma_j^2)+2 (2\sigma_j^2)^2},
 \end{split}
\end{equation}
where in the last step we have used the general inequality $E\{(U+V)^2\}\leq 2E(U^2)+2E(V^2)$. 
We now deal with the problem of bounding $\var(\hat\sigma_j^2)$. Let $v_j=\Sigma^{1/2}e_j\in\mathbb{R}^p$, and define the rank-1 matrix $A^{(j)}=v_jv_j\ttop$. In turn, letting $Z_{i\cdot}$ denote the $i$th row of $Z$, we have
\[
X_{ij}^2
=\big(e_i^\top Z\Sigma^{1/2}e_j\big)^2
=\big(Z_{i\cdot}^\top v_j\big)^2
=Z_{i\cdot}^\top A^{(j)} Z_{i\cdot},
\]
and since the matrix $Z$ has i.i.d.~entries, it follows from Lemma~\ref{lemma:qforms} that
\begin{align*}
\mathrm{var}(\hat\sigma_j^2)
&=\ts n^{-1} \mathrm{var}(Z_{1\cdot}^\top A^{(j)}Z_{1\cdot})  \    \lesssim \ \ts n^{-1} \|A^{(j)}\|_F^2 \ 
= \ \ts n^{-1} \|v_j\|_2^4  \ =\ts n^{-1}\sigma_j^4.
\end{align*}
Now, returning to the bounds~\eqref{EQ:TRIANGLEBOUND} and~\eqref{EQ:TERMWISEBOUND}, we see that
\begin{equation*}
\begin{split}
\ts\frac{1}{\omega_n}E\big(|\hat\omega_n-\omega_n|\big) \ & \lesssim \  \ts \frac{1}{\omega_n}\displaystyle  \sum_{j=1}^p \ts\frac{1}{\sqrt{n}} \sigma_j^2 \cdot \sqrt{\ts\frac{2}{n}\sigma_j^4+8\sigma_j^4}\\[0.2cm]
& \lesssim  \ts \frac{1}{\omega_n}\displaystyle  \sum_{j=1}^p \ts\frac{1}{\sqrt{n}} \sigma_j^4\\[0.2cm]
&= \ts\frac{1}{\sqrt{n}},
\end{split}
\end{equation*}
which completes the proof.
\end{proof}

\begin{lemma}\label{lem:nu_consist}
Suppose Assumption~\ref{assumption:data} holds. Then, $E(\hat\nu_n)=\nu_n$, and as $n\to\infty$,
\[
\mathrm{var}\bigg(\frac{\hat\nu_n}{\nu_n}\bigg)\xrightarrow{ \ \ } 0.
\]
\end{lemma}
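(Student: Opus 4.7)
The plan is to observe that $\hat\nu_n$ is nothing but the textbook unbiased sample variance of the i.i.d.\ scalars $Y_i:=\|X_{i\cdot}\|_2^2=Z_{i\cdot}\Sigma_n Z_{i\cdot}\ttop$, so the unbiasedness $E(\hat\nu_n)=\nu_n$ is immediate from $\mathrm{var}(Y_1)=\nu_n$. For the variance claim, I will use the classical identity for the sample variance of an i.i.d.\ sample, namely
\[
\mathrm{var}(\hat\nu_n)=\frac{1}{n}\Big(\mu_4(Y_1)-\ts\frac{n-3}{n-1}\nu_n^2\Big)\leq \frac{\mu_4(Y_1)}{n},
\]
where $\mu_4(Y_1)=E[(Y_1-\mathrm{tr}(\Sigma_n))^4]$. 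This reduces the goal to showing that $\mu_4(Y_1)/\nu_n^2$ stays bounded in $n$, which would give $\mathrm{var}(\hat\nu_n/\nu_n)=O(1/n)\to 0$.

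To bound the numerator, I will apply Lemma~\ref{lemma:qforms} with $r=4$ and $A=\Sigma_n$ to the quadratic form $Z_{1\cdot}\Sigma_n Z_{1\cdot}\ttop-\mathrm{tr}(\Sigma_n)$. This yields
\[
\mu_4(Y_1)\lesssim \kappa^2\|\Sigma_n\|_F^4+\mathrm{tr}(\Sigma_n^4),
\]
with an implicit constant depending only on the first eight moments of $Z_{11}$, all of which are finite by Assumption~\ref{assumption:data}. To bound the denominator from below, I will invoke formula~\eqref{EQ:KURTOSISFORMULA} from the same lemma, giving the exact expression $\nu_n=2\tau_n+(\kappa-3)\omega_n$ with $\tau_n=\|\Sigma_n\|_F^2$ and $\omega_n=\sum_j\sigma_j^4$.

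The crux is a uniform lower bound $\nu_n\gtrsim \tau_n$. The elementary inequality $\omega_n\leq \tau_n$ (since $\sigma_j^4=\Sigma_{jj}^2\leq \sum_l\Sigma_{jl}^2$ when summed over $j$), combined with $\kappa>1$ from Assumption~\ref{assumption:data}, yields $\nu_n\geq(\kappa-1)\tau_n$. Under Assumption~\ref{assumption:esd}, the eigenvalues of $\Sigma_n$ lie in a fixed compact subinterval of $(0,\infty)$ for all large $n$, so $\tau_n\asymp p$ and $\mathrm{tr}(\Sigma_n^4)\leq \|\Sigma_n\|^2\mathrm{tr}(\Sigma_n^2)\lesssim p$. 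Plugging these in gives $\mu_4(Y_1)\lesssim p^2$ and $\nu_n^2\asymp p^2$, hence $\mu_4(Y_1)/\nu_n^2=O(1)$ uniformly in $n$, which closes the argument.

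The main obstacle I anticipate is precisely the lower bound on $\nu_n$ in the platykurtic regime $\kappa<3$, where the second term in $\nu_n=2\tau_n+(\kappa-3)\omega_n$ is negative and could in principle cancel the first. The bound $\omega_n\leq \tau_n$, which relies only on the PSD structure of $\Sigma_n$, together with the standing assumption $\kappa>1$, is what prevents this cancellation and pins $\nu_n$ down to order $p$.
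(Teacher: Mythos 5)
Your plan follows essentially the same route as the paper: bound $\mathrm{var}(\hat\nu_n)$ by the fourth central moment of $Y_1=\|X_{1\cdot}\|_2^2$ divided by $n$, bound that moment via Lemma~\ref{lemma:qforms}, and lower-bound $\nu_n$ via formula~\eqref{EQ:KURTOSISFORMULA} together with $\omega_n\leq\tau_n$ and $\kappa>1$. The identification of the platykurtic regime as the crux, and the resolution via $\omega_n\leq\tau_n$, is exactly what the paper relies on (though it leaves that step implicit).

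There is, however, one mismatch with the lemma as stated. You invoke Assumption~\ref{assumption:esd} to get $\tau_n\asymp p$ and $\mathrm{tr}(\Sigma_n^4)\lesssim p$, but the lemma only assumes Assumption~\ref{assumption:data}, so your proof uses a hypothesis the statement does not grant. The detour is unnecessary: for any positive semidefinite $\Sigma_n$ one has $\mathrm{tr}(\Sigma_n^4)=\sum_j\lambda_j(\Sigma_n)^4\leq\big(\sum_j\lambda_j(\Sigma_n)^2\big)^2=\|\Sigma_n\|_F^4=\tau_n^2$, so your bound on $\mu_4(Y_1)$ becomes $\mu_4(Y_1)\lesssim\tau_n^2$ directly, and combined with $\nu_n\geq(\kappa-1)\tau_n$ this yields $\mu_4(Y_1)/\nu_n^2=O(1)$ without any appeal to the regularity of the spectrum. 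This is precisely the bound the paper uses. With that substitution your proof is complete and matches the lemma's stated hypotheses.
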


\begin{proof} The unbiasedness of $\hat{\nu}_n$ is clear. It is a classical fact that for a generic i.i.d.~sample $Y_1,\ldots,Y_n$ of scalar variables, the sample variance $\hat{\varsigma}^2=\frac{1}{n-1}\sum_{i=1}^n (Y_i-\bar{Y})^2$ satisfies
\begin{equation}\label{EQ:VARVAR}
\mathrm{var}\Big(\frac{\hat{\varsigma}^{\, 2}}{\varsigma^2}\Big) \, \lesssim \, \frac{1}{n}\frac{\mu_4}{\varsigma^4}
\end{equation}
where $\mu_4$ is the fourth central moment of $Y_1$, and $\varsigma^2=\mathrm{var}(Y_1)$~\citep[p.\,164]{Kenney:1951}. If we let $Y_i=\|X_{i\cdot}\|_2^2$, where $X_{i\cdot}$ denotes the $i$th row of $X$,  then we have $\varsigma^2=\nu_n$. Using the formula~\eqref{EQ:VARVAR}, it remains to show that
\begin{equation} \label{FOURTHPOWERLIMIT}
\frac{1}{n \nu_n^2} E\Big[\big \{\|X_{1\cdot}\|_2^2-\mathrm{tr}(\Sigma_n)\big\}^4\Big] \xrightarrow{ \ } 0.
\end{equation}
Noting that $\|X_{1\cdot}\|_2^2=Z_{1\cdot}\ttop \Sigma Z_{1\cdot}$, and that $\tr(\Sigma_n^4)\leq \|\Sigma_n\|_F^4$, we may apply Lemma~\ref{lemma:qforms} to conclude that
\begin{equation*}
E\Big[\big\{\|X_{1\cdot}\|_2^2-\mathrm{tr}(\Sigma_n)\big\}^4\Big]\lesssim \|\Sigma_n\|_F^4.
\end{equation*}
Furthermore,  since we assume $\kappa>1$, the formula~\eqref{EQ:KURTOSISFORMULA} in Lemma~\ref{lemma:qforms} implies that $\nu_n \gtrsim \|\Sigma_n\|_F^2$.
Hence, the limit~\eqref{FOURTHPOWERLIMIT} holds with rate $\mathcal{O}(1/n)$.
\end{proof}

\begin{lemma}\label{LEM:TAU_CONSIST}
Suppose that Assumptions~\ref{assumption:data}and~\ref{assumption:esd} hold. Then, as $n\to\infty$
\begin{equation}
\hat\tau_n/\tau_n\xrightarrow{ \ \P \ } 1. 
\end{equation}
\end{lemma}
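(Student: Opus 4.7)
The plan is to establish ratio-consistency $\hat\tau_n/\tau_n \xrightarrow{ \ \P \ } 1$ by the usual bias--variance argument: verify (i)~$E(\hat\tau_n)/\tau_n \to 1$ and (ii)~$\var(\hat\tau_n)/\tau_n^2 \to 0$, then conclude via Chebyshev's inequality. The two workhorses will be row-wise independence of $X_{i\cdot} = \Sigma_n^{1/2}Z_{i\cdot}\ttop$ and the quadratic-form moment bounds in Lemma~\ref{lemma:qforms}, which apply because $E(Z_{11}^8)<\infty$. Assumption~\ref{assumption:esd} confines the eigenvalues of $\Sigma_n$ to a fixed compact subinterval of $(0,\infty)$ for all large $n$, so that $\tau_n\asymp p$, $\tr(\Sigma_n)\asymp p$, $\omega_n\lesssim p$, and $\nu_n = 2\tau_n+(\kappa-3)\omega_n \lesssim p$.

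For the bias, row-wise independence of the summands and the identity $E(\|X_{1\cdot}\|_2^4)=\nu_n+\tr(\Sigma_n)^2$ yield
\bals
E\{\tr(\hat\Sigma_n^2)\}&= \tfrac{n-1}{n}\tau_n + \tfrac{1}{n}\{\nu_n+\tr(\Sigma_n)^2\},\\
E\{\tr(\hat\Sigma_n)^2\}&=\tr(\Sigma_n)^2+\nu_n/n,
\eals
which combine to $E(\hat\tau_n)=\tfrac{n-1}{n}(\tau_n+\nu_n/n)$. Hence $|E(\hat\tau_n)-\tau_n|\lesssim p/n=o(\tau_n)$, which gives (i).

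For the variance, split $\var(\hat\tau_n)\lesssim \var\{\tr(\hat\Sigma_n^2)\}+n^{-2}\var\{\tr(\hat\Sigma_n)^2\}$. The second piece is routine: since $S:=\tr(\hat\Sigma_n)$ is an average of i.i.d.\ quadratic forms in the rows $Z_{i\cdot}$, Lemma~\ref{lemma:qforms} (used with $r=2$ and $r=4$) gives $\var(S)\lesssim p/n$ and $E\{(S-ES)^4\}\lesssim p^2/n^2$, and expanding $S^2$ around its mean shows $n^{-2}\var(S^2)=\mathcal{O}(p^3/n^3)$, which is $o(\tau_n^2)$.

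The main obstacle is controlling $\var\{\tr(\hat\Sigma_n^2)\}$. Writing
\[
\tr(\hat\Sigma_n^2)=\tfrac{1}{n^2}\ts\sum_i\|X_{i\cdot}\|_2^4+\tfrac{2}{n^2}\ts\sum_{i<j}(Z_{i\cdot}\Sigma_n Z_{j\cdot}\ttop)^2,
\]
I will classify covariances of two cross-term summands by how many row-indices they share. Pairs sharing no index are independent and contribute nothing. For a pair sharing exactly one row-index, conditioning on the shared row $Z_{i\cdot}$ reduces the covariance to $\var(Z_{i\cdot}\Sigma_n^2 Z_{i\cdot}\ttop)\lesssim\|\Sigma_n^2\|_F^2\lesssim p$ by Lemma~\ref{lemma:qforms}, and there are $\mathcal{O}(n^3)$ such triples, contributing $\mathcal{O}(p/n)$. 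Pairs sharing both indices have variance $\mathcal{O}(p^2)$ over $\mathcal{O}(n^2)$ terms, contributing $\mathcal{O}(p^2/n^2)$; an analogous fourth-moment bound handles the $i=j$ diagonal piece. Summing, $\var\{\tr(\hat\Sigma_n^2)\}=\mathcal{O}(1)$ under $p/n\to\gamma$, so the ratio against $\tau_n^2\asymp p^2$ is $\mathcal{O}(1/p^2)=o(1)$. Combining (i) and (ii) via Chebyshev completes the proof. As a possible shortcut, one may observe that $\hat\tau_n$ coincides up to a $1+o(1)$ prefactor with the Bai--Saranadasa unbiased estimator of $\tr(\Sigma_n^2)$ and cite the ratio-consistency proved in~\cite{Bai:Saranadasa:1996} under conditions no stronger than Assumptions~\ref{assumption:data}--\ref{assumption:esd}.
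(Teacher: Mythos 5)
The paper's entire proof of Lemma~\ref{LEM:TAU_CONSIST} is a one-line citation: ``We refer to~\citet[Section A.3]{Bai:Saranadasa:1996} for the proof.'' You flag this route yourself in your final sentence as a ``possible shortcut,'' but your main contribution is a full self-contained argument, which is a genuinely different (and more informative) route.

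Your bias computation is exact and checks out: with $Y_i=\|X_{i\cdot}\|_2^2$ one indeed gets $E\{\mathrm{tr}(\hat\Sigma_n^2)\}=\tfrac{n-1}{n}\tau_n+\tfrac{1}{n}\{\nu_n+\mathrm{tr}(\Sigma_n)^2\}$ and $E\{\mathrm{tr}(\hat\Sigma_n)^2\}=\mathrm{tr}(\Sigma_n)^2+\nu_n/n$, yielding $E(\hat\tau_n)=\tfrac{n-1}{n}(\tau_n+\nu_n/n)$, so the bias is $\mathcal{O}(p/n)=o(\tau_n)$. For the variance, the decomposition of $\mathrm{tr}(\hat\Sigma_n^2)$ into $\tfrac{1}{n^2}\sum_i\|X_{i\cdot}\|_2^4$ plus off-diagonal cross terms, and the bookkeeping by number of shared row indices, is the standard and correct U-statistic–type argument: pairs with disjoint indices contribute zero; those sharing one index reduce (by conditioning and conditional independence) to $\mathrm{var}(Z_{i\cdot}^\top\Sigma_n^2 Z_{i\cdot})\lesssim\mathrm{tr}(\Sigma_n^4)\lesssim p$ over $\mathcal{O}(n^3)$ triples; those sharing two indices contribute $\mathcal{O}(p^2)$ over $\mathcal{O}(n^2)$ pairs; the diagonal piece needs $E(\|X_{1\cdot}\|_2^8)$, covered by $E(Z_{11}^8)<\infty$ via Lemma~\ref{lemma:qforms} with $r=4$. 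Each contribution is $\mathcal{O}(1)$ under $p\asymp n$, giving $\mathrm{var}\{\mathrm{tr}(\hat\Sigma_n^2)\}=\mathcal{O}(1)=o(\tau_n^2)$. The $n^{-2}\mathrm{var}\{\mathrm{tr}(\hat\Sigma_n)^2\}=\mathcal{O}(p^3/n^3)$ bound is likewise correct. What you gain over the paper's approach is a transparent, self-contained verification that Assumptions~\ref{assumption:data} and~\ref{assumption:esd} suffice, without the reader having to check that the hypotheses in Bai--Saranadasa's Section A.3 are implied by the present setup; what the citation buys is brevity. Both are valid, and your final remark correctly identifies the equivalence between $\hat\tau_n$ and the Bai--Saranadasa estimator up to a $1+o(1)$ factor.
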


\begin{proof}
We refer to~\citet[Section A.3]{Bai:Saranadasa:1996} for the proof.
\end{proof}

%\paragraph{Proof of Theorem~\ref{THM:CONSIST}, line~(\ref{EQ:KAPPACONSIST}).}
We now assemble the previous lemmas to show that $\hat \kappa_n\xrightarrow{\P}\kappa$.
 %\emph{Proof of Theorem\ref{THM:CONSIST}, line~(\ref{EQ:KAPPACONSIST}).} 
 As a preliminary step, we check that each of the quantities $\nu_n$, $\tau_n$, and $\omega_n$ are of the same order. In the case of $\tau_n$, we have $\tau_n\asymp p$, since $\tau_n=\sum_{j=1}^p\lambda_j^2(\Sigma_n)$, and each eigenvalue is bounded away from 0 and $\infty$ by Assumption~\ref{assumption:esd}. For the same reason, we have $\omega_n=\sum_{j=1}^p \sigma_j^4\asymp p$, since 
$$\lambda_p(\Sigma_n)\ \leq  \ \min_{1\leq j\leq p}\sigma_j^2 \ \leq \ \max_{1\leq j\leq p}\sigma_j^2 \ \leq \  \lambda_1(\Sigma_n).$$
 Lastly, to check $\nu_n\asymp p$, recall the identity
\begin{equation}\label{EQN:TEMPIDENTITY}
\kappa=3+\ts\frac{\nu_n}{\omega_n}-\frac{2\tau_n}{\omega_n}.
\end{equation}
Since $\kappa$ is fixed and $\tau_n/\omega_n\asymp 1$, we have $\nu_n/\omega_n=\mathcal{O}(1)$. On the other hand, we can also see that $\nu_n/\omega_n$ is bounded below by a positive constant, due to the assumption that $\kappa>1$, and the fact that $\tau_n\geq \omega_n$. Thus $\nu_n/\omega_n\asymp 1$, and $\nu_n\asymp p$.
%and the assumption $\kappa>1$. It is clear that $\tau_n\geq \omega_n$, and 
% implies $\nu_n/\omega_n\asymp 1$. Thus $\nu_n\asymp p$.

To proceed, define the quantity
\begin{equation*}
\breve{\kappa}_n:=3+\ts\frac{\hat\nu_n-2\hat\tau_n}{\hat\omega_n}.
\end{equation*}
Since the function  $\max\{\cdot,1\}$ is continuous, the proof may be completed by showing that $\breve \kappa_n\xrightarrow{\P}\kappa$. Using the fact the parameter estimates $\hat \tau_n$, $\hat\nu_n$, and $\hat\omega_n$ are individually ratio-consistent, it follows that if we fix $\epsilon\in (0,1)$, then the following event has probability tending to 1,
\begin{equation}
\begin{split}
\breve\kappa_n &\leq 3+ \ts\frac{(1+\epsilon)\nu_n}{(1-\epsilon)\omega_n}-\ts\frac{(1-\epsilon)2\tau_n}{(1+\epsilon)\omega_n}.
\end{split}
\end{equation}
Consequently, the identity~\eqref{EQN:TEMPIDENTITY} implies there is an absolute constant $C>0$ such that the event
\begin{equation}
\breve \kappa_n \leq \kappa+ C \cdot\epsilon\cdot\Big(\ts\frac{\nu_n+\tau_n}{\omega_n}\Big)
\end{equation}
has probability tending to 1. Moreover, since our earlier work ensures $\ts\frac{\nu_n+\tau_n}{\omega_n}\asymp 1$, there is a possibly larger absolute constant $C>0$, such that the event
$\{\breve\kappa_n\leq \kappa+C\epsilon\}$
has probability tending to 1. Finally, a symmetric argument shows that the event $\{\breve\kappa_n \geq \kappa-C\epsilon\}$ also has probability tending to 1, which completes the proof.\qed

\subsection{Consistency of spectrum estimator}\label{sec:Hconsist}

%\paragraph{Proof of Theorem~\ref{THM:CONSIST}, line~(\ref{EQ:SPECTRUMCONSIST}).}
%in Theorem~\ref{THM:CONSIST}, 
Define the empirical distribution function associated with the QuEST eigenvalues,
\begin{equation*}
\hat H_{\text{Q},n}(\lambda)=\ts\frac 1p \sum_{j=1}^p 1\{\hat\lambda_{\text{Q},j}\leq \lambda\}.
\end{equation*}
Under our assumptions, the proof of Theorem 2.2 in~\cite{Ledoit:Wolf:2015} shows that the following limit holds almost surely
\begin{equation}\label{EQ:QUESTCONSIST}
 \hat H_{\text{Q},n} \Rightarrow H.
\end{equation}
To prove the almost-sure limit $\tilde H_n\Rightarrow H$,  let the random variable $N_n$ denote the number of values $\tilde\lambda_j$ that differ from their QuEST counterpart $\hat\lambda_{\text{Q},j}$. In this notation, it is sufficient to show that $N_n=o(p)$ almost surely, because this implies that for any fixed $\lambda$, the following relation holds almost surely,
 $$\tilde H_n(\lambda) = \hat H_{\text{Q},n}(\lambda)+o(1),$$
and it then follows from a short argument that $\tilde H_n\Rightarrow H$ almost surely.
 
 To show that $N_n=o(p)$ almost surely, first note that $N_n$ can be written as
 \begin{equation}\label{EQ:NDEF}
 N_n=\ts\sum_{j=1}^p1\big\{\hat\lambda_{\text{Q},j} > \hat \lambda_{\text{bound},n}\big\},\end{equation}
 where we recall $\hat\lambda_{\text{bound},n}=2\lambda_1(\hat\Sigma_n)$.
 Next, we claim it is sufficient to show that 
 \begin{equation}\label{EQ:TEMPCLAIM}
 \liminf_{n\to\infty}\big\{2\lambda_1(\hat\Sigma_n)-\lambda_1(\Sigma_n)\big\} \geq \epsilon
 \end{equation}
  holds almost surely, for some positive number $\epsilon$. To see why, consider the random variable
 $$N_n'=\sum_{j=1}^p1\{\hat\lambda_{\textup{Q},j}>\lambda_1(\Sigma_n)+\ts\frac{\epsilon}{2}\},$$
and note that~\eqref{EQ:TEMPCLAIM} implies the following asymptotic bound holds almost surely,
 $$N_n\leq N_n'+o(1).$$
 In turn, the condition~\eqref{EQ:QUESTCONSIST} and the assumption $H_n\Rightarrow H$ imply $N_n'=o(p)$ almost surely, which leads to the desired conclusion that $N_n=o(p)$ almost surely.

 To prove~\eqref{EQ:TEMPCLAIM}, let $u_1$ denote the top eigenvector of $\Sigma_n$. Then, we have the lower bound
 \begin{equation}\label{EQ:BOUNDING}
 \begin{split}
 \lambda_1(\hat\Sigma_n) &= \sup_{\|u\|_2=1} u\ttop \hat\Sigma_n u\\[0.2cm]
 %
 %&\geq u_1\ttop \hat\Sigma_n u_1\\[0.3cm]
 %
% &= u_1\ttop \Sigma_n^{1/2}\big(\ts\frac 1n Z\ttop Z \big) \Sigma_n^{1/2} u_1\\[0.2cm]
 %
 &\geq \lambda_1(\Sigma_n)\cdot  u_1\ttop\big(\ts\frac 1n Z\ttop Z\big)u_1,
 \end{split}
 \end{equation}
 which leads to
$$2\lambda_1(\hat\Sigma_n)-\lambda_1(\Sigma_n) \ \geq \ \Big\{2u_1\ttop\big(\ts\frac 1n Z\ttop Z\big)u_1 -1\Big\}\lambda_1(\Sigma_n).$$
Under our data-generating moel, it can be checked that $u_1\ttop\big(\ts n^{-1} Z\ttop Z\big)u_1\to 1$ almost surely. This can be done with the help of a strong law of large numbers for triangular arrays~\citep[Corollary 1]{Hu:1989}, as well as the moment bound in Lemma~\ref{lemma:qforms}. 
Meanwhile, due to Assumption~\ref{assumption:esd}, we know that $\liminf_{n\to\infty}\lambda_1(\Sigma_n)$ is bounded below by a positive constant, and so the last few steps imply~\eqref{EQ:TEMPCLAIM}.
 
% \paragraph{Proof of Theorem~\ref{THM:CONSIST}, line~(\ref{EQ:SPECBOUND}).}

 Finally, we prove that $\sup_n \lambda_1(\tilde\Lambda_n)<\infty$. Note that by the sub-multiplicative property of the operator norm,
 $$\hat\lambda_{\text{bound},n}\leq 2\cdot\lambda_1(\Sigma_n)\cdot \lambda_1(\ts\frac 1n Z\ttop Z).$$
Due to Assumption~\ref{assumption:esd}, we have $\sup_n\lambda_1(\Sigma_n)<\infty$. Also, under Assumption~\ref{assumption:data}, it is known from~\cite{Yin:Bai:1988} that $\sup_n\lambda_1(\ts\frac 1n Z\ttop Z)<\infty$ almost surely.\qed

\section{Proofs of Propositions~\ref{pro:spiked} and~\ref{pro:rankk}}

\begin{lemma}\label{LEM:STIELTJES}
For any $z\in \C\setminus\R$, any $j=1,\dots,p$, and any $\ell\in\{1,2\}$, the following bound holds,
$$\big|\{\Gamma_{n,\ell}(z)\}_{jj}\big| \ \leq \ \frac{\lambda_1(\Sigma_n)}{|\Im(z)|^{\ell}}.$$
\end{lemma}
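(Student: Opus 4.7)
The plan is to exploit the diagonal structure of $\Lambda_n$ together with the explicit representation of $t_n(z)$ as a Stieltjes transform of a measure on $[0,\infty)$. Since $\Lambda_n$ is diagonal, the middle matrix
$$M_n(z) := -zI_p + \{(1-\gamma_n) - z\gamma_n t_n(z)\}\Lambda_n$$
is also diagonal, with $j$-th entry $-z + c_n(z)\lambda_j(\Sigma_n)$, where $c_n(z) := (1-\gamma_n) - z\gamma_n t_n(z)$. Hence $\Gamma_{n,\ell}(z) = \Lambda_n^{1/2} M_n(z)^{-\ell} \Lambda_n^{1/2}$ is diagonal, with
$$\{\Gamma_{n,\ell}(z)\}_{jj} = \frac{\lambda_j(\Sigma_n)}{\{-z + c_n(z)\lambda_j(\Sigma_n)\}^{\ell}}.$$
Since $0<\lambda_j(\Sigma_n)\leq \lambda_1(\Sigma_n)$, it suffices to establish the lower bound $|{-z + c_n(z)\lambda_j(\Sigma_n)}|\geq |\Im(z)|$, which will simultaneously certify the invertibility of $M_n(z)$ referenced in the main text.

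For $z\in\C^+$, I would estimate this modulus from below by its imaginary part:
$$\Im\{-z + c_n(z)\lambda_j(\Sigma_n)\} \,=\, -\Im(z) \,-\, \gamma_n\,\lambda_j(\Sigma_n)\,\Im\{z t_n(z)\}.$$
The crux is to show $\Im\{z t_n(z)\}\geq 0$. Since $t_n$ is the Stieltjes transform of $\mathcal{F}(H_n,\gamma_n)$, whose support is contained in $[0,\infty)$, one can rewrite
$$z t_n(z) \,=\, \int \frac{\lambda}{\lambda - z}\,d\F(H_n,\gamma_n)(\lambda) \,-\, 1,$$
and so $\Im\{z t_n(z)\} = \Im(z) \int \lambda\,|\lambda - z|^{-2}\,d\F(H_n,\gamma_n)(\lambda) \geq 0$ for $z\in\C^+$. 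Because $\lambda_j(\Sigma_n) \geq 0$, the displayed imaginary part is at most $-\Im(z)$, hence its absolute value is at least $|\Im(z)|$. The bound $|w|\geq |\Im(w)|$ closes the estimate for $z\in\C^+$, and the case $z\in\C^-$ follows by complex conjugation, using $t_n(\bar z) = \overline{t_n(z)}$ and the fact that $\lambda_j(\Sigma_n)$ is real.

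I do not foresee a serious obstacle here; the argument is a standard imaginary-part computation once the diagonal structure is noted. The only small subtlety worth flagging is that $\F(H_n,\gamma_n)$ may carry an atom at zero when $\gamma_n>1$, but this does not alter the sign of the integrand defining $\Im\{z t_n(z)\}$, so the proof goes through uniformly in $\gamma_n\in(0,\infty)\setminus\{1\}$.
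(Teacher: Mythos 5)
Your proof is correct, and it reaches the same key estimate
\[
\big|{-z} + \{(1-\gamma_n)-z\gamma_n t_n(z)\}\lambda_j(\Sigma_n)\big| \;\geq\; |\Im(z)|
\]
as the paper, but by a genuinely more self-contained route. The paper argues indirectly: it forms the auxiliary quantity $s_{n,j}(z) = -1/\big[z\{1+\lambda_j(\Sigma_n)\gamma_n t_n(z)\}\big]$, invokes \citet[Cor.~3.1]{Couillet:Debbah:2011} to assert that $s_{n,j}$ is itself a Stieltjes transform of a probability measure, and then applies the general inequality $\Im\{1/s(z)\}\leq -\Im(z)$ for Stieltjes transforms \citep[Thm.~3.2]{Couillet:Debbah:2011}. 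You instead bypass both external citations by computing $\Im\{z t_n(z)\}$ directly from the Stieltjes representation $z t_n(z) = -1 + \int \lambda(\lambda-z)^{-1}\,d\F(H_n,\gamma_n)(\lambda)$, and using nonnegativity of the support of $\F(H_n,\gamma_n)$ to conclude $\Im\{z t_n(z)\}\geq 0$ on $\C^+$; this immediately yields $\Im\{-z + c_n(z)\lambda_j\}\leq -\Im(z)$ because the only $z$-dependent imaginary contribution is $-\gamma_n\lambda_j\Im\{z t_n(z)\}\leq 0$. Your version has the advantage of being fully elementary and of simultaneously supplying the invertibility of the middle factor that the paper only asserts; the paper's version, while relying on a reference, packages the sign calculation into a reusable black-box Stieltjes-transform fact. (Incidentally, the paper's displayed expression $z\{1+\lambda_j(\Sigma_n)\gamma_n t_n(z)\}$ appears to carry a small slip: expanding the $(j,j)$ entry of the middle factor gives $-z\{1+\lambda_j(\Sigma_n)\underline{t}_n(z)\}$ with $\underline{t}_n(z)=-\tfrac{1-\gamma_n}{z}+\gamma_n t_n(z)$, not $\gamma_n t_n(z)$ alone; your parametrization via $c_n(z)=(1-\gamma_n)-z\gamma_n t_n(z)$ sidesteps that issue entirely and is literally what the definition~\eqref{EQ:GAMMADEF} gives.)
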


\begin{proof} Since $|w|\geq |\Im(w)|$ for any complex number $w$, the definition of $\Gamma_n(z)$ implies
$$\big| \{ \Gamma_{n,\ell}(z) \}_{jj}\big| \ \leq \   \frac{\lambda_j(\Sigma)}{\big|\Im[z\big\{1+\lambda_j(\Sigma_n)\gamma_n t_n(z)\big\}\big]\big|^{\ell}}.$$
Next, we define the function
$$s_{n,j}(z)=\frac{-1}{z\{1+\lambda_j(\Sigma_n) \gamma_n t_n(z)\}}.$$
Because the function $t_n(z)$ is a Stieltjes transform, it is a fact that $s_{n,j}(z)$ is a Stieltjes transform of some distribution, as shown in the proof of Corollary 3.1 in the book~\cite{Couillet:Debbah:2011}. This implies that for any $z\in \C^+$,
$$\Im\{\ts\frac{1}{s_{n,j}(z)}\} \leq -\Im(z),$$ 
which is explained in
\citep[Theorem 3.2]{Couillet:Debbah:2011}. When the number $\Im(z)$ is positive, we have $|\Im\{\ts\frac{1}{s_{n,j}(z)}\}| \geq |\Im(z)|$, and hence
$$\big|\Im\big[z\{1+\lambda_j(\Sigma_n)\gamma_n t_n(z)\}\big]\big| \ \geq  \ |\Im(z)|,$$
yielding
$$\frac{\lambda_j(\Sigma_n)}{\big|\Im\big[z\{1+\lambda_j(\Sigma_n)\gamma_n t_n(z)\}\big]\big|^{\ell}} \  \leq \ \frac{\lambda_1(\Sigma_n)}{|\Im(z)|^{\ell}}.$$
The proof can be essentially repeated in the case when $\Im(z)$ is negative by using $\overline{s_{n,j}(z)}=s_{n,j}(\bar{z})$.
\end{proof}

\emph{Proof of Proposition~\ref{pro:spiked}.}
%We now show the proof of Proposition~\ref{pro:spiked}. 
Let $z\in\C\setminus\R$ be fixed. Also, let the diagonal entries of $\Gamma_{n,\ell}(z)$ be written as $\{d_{1,\ell}(z),\dots,d_{p,\ell}(z)\}$, and let the columns of $U_n$ be denoted as $u_1,\dots,u_p$. Then, for any fixed $1\leq j\leq p$,
$$\{U_n\Gamma_{n,\ell}(z) U_n\ttop\}_{jj} = \sum_{l=1}^p d_{l,\ell}(z) (u_l\ttop e_j)^2.$$
Under a spiked covariance model, note that the entries $d_{k+1,\ell}(z),\dots, d_{p,\ell}(z)$ are all equal to $d_{p,\ell}(z)$. From the identity $\sum_{l=1}^p (u_l\ttop e_j)^2=1$, it follows that, for each $1\leq j\leq p$,
$$\{U_n\Gamma_{n,\ell}(z) U_n\ttop\}_{jj}= d_{p,\ell}(z) +a_{j,\ell}(z), \ \ \ \ \ \  \text{where }  \ \ \ \ \  a_{j,\ell}(z):=\ts\sum_{l=1}^k \{d_{l,\ell}(z)-d_{p,\ell}(z)\}(u_l\ttop e_j)^2.$$
By Lemma~\ref{LEM:STIELTJES}, we have the following bound for each $l=1,\dots,p$,
\begin{equation}\label{DIAGVALBOUND}
|d_{l,\ell}(z)|\leq \ts \frac{ \lambda_1(\Sigma_n)}{|\Im(z)|^{\ell}},
\end{equation}
 and so the numbers $a_{j,\ell}(z)$ satisfy 
\begin{equation}\label{EQ:KEYBOUND}
\ts\frac 1p \displaystyle \sum_{j=1}^p |a_{j,\ell}(z)| \ \leq \  \ts \frac{2\,\lambda_1(\Sigma)}{|\Im(z)|^{\ell}}\cdot \frac 1p\displaystyle \sum_{l=1}^k \sum_{j=1}^p (u_l\ttop e_j)^2 =\  \ts \frac{2\,\lambda_1(\Sigma)}{|\Im(z)|^{\ell}}\cdot \frac{k}{p}= o(1),
\end{equation}
and
\begin{equation}\label{EQ:KEYBOUND2}
 \max_{1\leq j\leq p} |a_{j,\ell}(z)| \leq  \ts \frac{2\,\lambda_1(\Sigma)}{|\Im(z)|^{\ell}}.
\end{equation}
Now, let $z_1,z_2\in\C\setminus\R$ be fixed, and consider the sum
\begin{equation*}
\begin{split}
\ts\frac 1p \displaystyle\sum_{j=1}^p \{U_n\Gamma_{n,\ell}(z_1) U_n\ttop\}_{jj}\{U_n\Gamma_{n,2}(z_2) U_n\ttop\}_{jj}&= \ts\frac 1p \displaystyle\sum_{j=1}^p \big\{d_{p,\ell}(z_1)+a_{j,\ell}(z_1)\big\}\big\{d_{p,2}(z_2)+a_{j,2}(z_2)\big\}\\[0.2cm]
&=d_{p,\ell}(z_1)d_{p,2}(z_2)+R_{n,\ell}(z_1,z_2),
\end{split}
\end{equation*}
where we define the remainder
$$R_{n,\ell}(z_1,z_2)=\ts \frac{d_{p,\ell}(z_1)}{p} \displaystyle \sum_{j=1}^p a_{j,2}(z_2) + \ts \frac{d_{p,2}(z_2)}{p}\displaystyle \sum_{j=1}^p a_{j,\ell}(z_1) +\ts\frac{1}{p}\displaystyle\sum_{j=1}^p a_{j,\ell}(z_1)a_{j,2}(z_2).$$
It follows that the bounds~\eqref{DIAGVALBOUND},~\eqref{EQ:KEYBOUND} and~\eqref{EQ:KEYBOUND2}, along with H\"older's inequality, imply 
$$R_{n,\ell}(z_1,z_2)=o(1).$$ 
Lastly, we must compare with the sum of the values $\{\Gamma_{n,\ell}(z_1)\}_{jj}\{\Gamma_{n,2}(z_2)\}_{jj}$. Observe that
\begin{equation*}
\begin{split}
\ts\frac 1p \displaystyle\sum_{j=1}^p \{\Gamma_{n,2}(z_1)\}_{jj}\{\Gamma_{n,\ell}(z_2)\}_{jj}&= \ts\frac 1p \displaystyle \sum_{j=1}^p d_{j,\ell}(z_1)d_{j,2}(z_2)\\
&= d_{p,\ell}(z_1)d_{p,2}(z_2)+\ts\frac 1p\displaystyle\sum_{j=1}^k \Big\{d_{j,\ell}(z_1)d_{j,2}(z_2)-d_{p,\ell}(z_1)d_{p,2}(z_2)\Big\}\\[0.2cm]
&=d_{p,\ell}(z_1)d_{p,2}(z_2)+\mathcal{O}(\ts\frac{k}{p}),
\end{split}
\end{equation*}
where we have again used the bound~\eqref{DIAGVALBOUND}. Altogether, this verifies desired limit.

%
%
%
%

%
%
%\paragraph{Proof of Proposition~\ref{pro:rankk}.} 
\emph{Proof of Proposition~\ref{pro:rankk}.} Let $z\in\C\setminus\R$ be fixed. As before, let the diagonal entries of $\Gamma_{n,\ell}(z)$ be denoted as $\{d_{1,\ell}(z),\dots,d_{p,\ell}(z)\}$.  Observe that
\begin{equation}\label{RANKKEXPAND}
\begin{split}
\{U_n\Gamma_{n,\ell}(z)U_n\ttop\}_{jj} &= e_j\ttop(I_p-2\Pi)\Gamma_{n,\ell}(z)(I_p-2\Pi)e_j\\
&= e_j\ttop \Gamma_{n,\ell}(z)e_j-4e_j\ttop \Pi \, \Gamma_{n,\ell}(z)e_j+4e_j\ttop \Pi \, \Gamma_{n,\ell}(z)\Pi e_j.
\end{split}
\end{equation}
Since $\Gamma_{n,\ell}(z)e_j=d_{j,\ell}(z) e_j$, the middle term on the right side satisfies the bound
$$|e_j\ttop \Pi \, \Gamma_{n,\ell}(z)e_j| \ = \ |d_{j,\ell}(z)|\cdot |e_j\ttop \Pi e_j| \leq \ts\frac{\lambda_1(\Sigma_n)}{|\Im(z)|^{\ell}} \Pi_{jj},$$
where we have used Lemma~\ref{LEM:STIELTJES} in the second step.
Note that $\Pi_{jj}=e_j\ttop \Pi e_j$ is non-negative because $\Pi$ is necessarily positive semidefinite.
Similarly, Lemma~\ref{LEM:STIELTJES} also implies
$$|e_j\ttop \Pi\ttop \,\Gamma_{n,\ell}(z)\Pi e_j|  \leq \ts\frac{\lambda_1(\Sigma_n)}{|\Im(z)|^{\ell}} |e_j\ttop \Pi\ttop \Pi e_j|=\frac{\lambda_1(\Sigma_n)}{|\Im(z)|^{\ell}} \Pi_{jj}.$$
Hence, viewing $z$ as fixed, equation~\eqref{RANKKEXPAND} gives
$$\{U_n\Gamma_{n,\ell}(z)U_n\ttop\}_{jj} =d_{j,\ell}(z)+\mathcal{O}\big(\Pi_{jj}\big).$$
Likewise, using the boundedness of the values $d_j(z)$, and the fact that $(\Pi_{jj})^2 \leq  \Pi_{jj}$,  it follows that for any fixed numbers $z_1,z_2\in\C\setminus\R$,
$$\{U_n\Gamma_{n,\ell}(z_1)U_n\ttop\}_{jj} \{U_n\Gamma_{n,2}(z_2)U_n\ttop\}_{jj}  =  d_{j,\ell}(z_1)d_{j,2}(z_2)+\mathcal{O}\big(\Pi_{jj}\big).$$
Consequently, averaging over $j$ leads to 
\begin{equation*}
\begin{split}
\ts\frac{1}{p}\displaystyle\sum_{j=1}^p \{U_n\Gamma_{n,\ell}(z_1)U_n\ttop\}_{jj} \{U_n\Gamma_{n,2}(z_2)U_n\ttop\}_{jj}
&= \ts\frac{1}{p}\tr\{\Gamma_{n,\ell}(z_1)\Gamma_{n,2}(z_2)\}+\ts\frac{1}{p}\mathcal{O}\{\tr(\Pi)\},\\[0.2cm]
&=\ts\frac1p\displaystyle\sum_{j=1}^p \{\Gamma_{n,\ell}(z_1)\}_{jj} \{\Gamma_{n,2}(z_2)\}_{jj}+\ts\frac{1}{p}\mathcal{O}(\text{rank}(\Pi)),
\end{split}
\end{equation*}
which proves the desired limit.

%%%%%%%%%%%%%%%%%%%%%%%%

%%%%%%%%%%%%%%%%%%%%%%%%
\section{Proof of Theorem \ref{THM:BOOTSTRAP_CONSIST}}
\label{sec:proofs:bootstrap}
%%%%%%%%%%%%%%%%%%%%%%%%

The whole proof in this section, as well as Section~D, is inspired by the arguments and results in~\cite{Najim:Yao:2016}. Likewise, familiarity with that paper is suggested for understanding the work here.
%The work in this section, as well as Section~\ref{THM:BOOTSTRAP_BIAS}, draws heavily from the arguments and results in~\citep{Najim:Yao:2016}. Likewise, familiarity with that paper is suggested for understanding the work here.

Let $\mathscr{C}_c^3(\R)$ denote the set of $3$-times continuously differentiable functions on $\R$ with compact support. For any function $f\in\mathscr{C}^3(\mathcal{I})$, there is another function $g\in\mathscr{C}_c^3(\R)$ such that $f=g$ on some open interval $\mathcal{I}'$ that satisfies $[a,b]\subset \mathcal{I}'\subset \mathcal{I}$.
Furthermore, due to the comments on page~\pageref{intervalcomments}, it is known that with probability 1, every eigenvalue of $\hat\Sigma_n$ lies in $\mathcal{I}'$ for all large $n$. It follows that $f$ and $g$ will asymptotically agree on all eigenvalues of $\hat{\Sigma}_n$. Hence, we may prove Theorem~\ref{THM:BOOTSTRAP_CONSIST} with the set $\mathscr{C}_c^3(\R)$ in place of  $\mathscr{C}^3(\mathcal{I})$.

For $z\in\mathbb{C}\setminus\R$, define the Stieltjes transforms
\begin{align*}
m_n(z) &=  \int \frac{1}{\lambda-z}dH_n(\lambda),\\[0.2cm]
\hat{m}_n(z) &= \ts\frac{1}{p}\mathrm{tr}\big\{(\hat{\Sigma}_n-zI_p)^{-1}\big\},\\[0.2cm]
\hat{m}_n^*(z;X) &=\ts\frac{1}{p}\mathrm{tr}\big\{(\hat{\Sigma}_n^* -zI_p)^{-1}\big\},
\end{align*}
where the matrix $X$ should be viewed as fixed when interpreting $\hat{m}_n^*(z;X)$ as the empirical Stieltjes transform of $\hat{\Sigma}_n^*$.
For any positive integer $r$,  define the operator $\Phi_r$ to act on a function $f\in \mathscr{C}_c^{r+1}(\R)$ according to
\begin{equation*}
\Phi_r(f)(z) =\sum_{l=0}^r \frac{(\sqrt{-1}y)^l}{l!}f^{(l)}(x)\chi(y),
\end{equation*}
where $z=x+\sqrt{-1}y$ and the function $\chi\colon\R\to\R^+$ is a particular cut-off function that is  smooth, compactly supported, and equal to 1 in a neighborhood of 0. In turn, for any fixed $f\in\mathscr{C}_c^{r+1}(\R)$, we formally define the linear functional $\phi_{f,r}$ to act on a test function $h\colon \C^+\to\C$ according to
$$\phi_{f,r}(h) =\frac{1}{\pi}\displaystyle \Re \int_{\C^+} \bar{\partial} \Phi_r(f)(z) h(z) d\ell_2(z),$$
where $\bar{\partial}=\frac{\partial}{\partial{x}}+\sqrt{-1}\frac{\partial}{\partial y}$, and $d\ell_2(z)$ refers to Lebesue measure on $\C^+$.  Below, we will write $\phi_f$ instead of $\phi_{f,r}$ to lighten notation, since the choice of $r$ will be clear from context.

A notable property of the functional $\phi_{f}$ is the so-called Helffer--Sj\"ostrand formula~\citep{Helffer:Sjostrand:1989}. For a suitable cut-off function $\chi$, this formula allows a generic linear spectral statistic $T_n(f)$ with $f\in\mathscr{C}_c^{r+1}(\R)$ to be represented in terms of the empirical Stieltjes transform $\hat{m}_n$,
$$T_n(f) = \phi_f(\hat{m}_n).$$
The importance of this formula in studying the fluctuations of eigenvalues has been recognized in several previous works; see for example \cite{Najim:Yao:2016} and the references therein. 

To describe the standardized statistic $p[T_n(\f)-E\{T_n(\f)\}]$, it will be convenient to define 
 the vector-valued functional $\phi_{\f}(h)=\{\phi_{f_1}(h),\dots,\phi_{f_m}(h)\}$, as well the following standardized versions of the Stieltjes transforms $\hat{m}_n(z)$ and  $\hat{m}_n^*(z;X)$,
 \begin{equation}\label{EQ:MUZDEF}
 \begin{split}
 \hat{\mu}_n(z) &=p[\hat{m}_n(z)-E\{\hat{m}_n(z)\}],\\[0.2cm]
 \hat{\mu}_n^*(z;X)&=p[\hat{m}_n^*(z;X)-E\{\hat{m}_n^*(z;X) \mid X\} ].
 \end{split}
 \end{equation}
 Consequently, linearity of the functional $\phi_{\f}$ implies the relations
\begin{equation*}
\begin{split}
\phi_{\f}(\hat{\mu}_n) &=p[T_n(\f)-E\{T_n(\f)\}], \\[0.2cm]
\phi_{\f}(\hat{\mu}_n^*) &= p[T_n^*(\f)-E\{T_n^*(\f) \mid X\} ].
\end{split}
\end{equation*}
%\]
%
In this notation, Theorem~\ref{THM:BOOTSTRAP_CONSIST} amounts to comparing the distributions $\mathcal{L}\{\phi_{\f}(\hat{\mu}_n)\}$ and $\mathcal{L}\{\phi_{\f}(\hat{\mu}_n^*) \mid X\}$ in the LP metric. To carry this out, each of these distributions will be compared separately with Gaussian processes evaluated under $\phi_{\f}$.
Specifically, let $G_n(z)$ denote the Gaussian process to be compared with $\hat{\mu}_n(z)$, and similarly, for a fixed realization of $X$, let $G_n^*(z;X)$ denote the Gaussian process to be compared with $\hat{\mu}_p^*(z;X)$. These processes will be defined precisely in Section~\ref{sec:Gaussian}. 
In turn, consider the bound
\begin{equation}\label{EQ:3TRIANGLE}
d_{\text{LP}}\Big[\mathcal{L}\big\{\phi_{\f}(\hat{\mu}_n)\big\}\, ,\,\mathcal{L}\{\phi_{\f}(\hat{\mu}_n^*) \mid X\big\}\Big] \leq \text{I}_n+\text{II}_n(X)+\text{III}_n(X),
\end{equation}
where we define the terms
\begin{align*}
\text{I}_n &= \;d_{\text{LP}}\Big[ \mathcal{L}\big\{\phi_{\f}(\hat{\mu}_n)\big\} ,\,\mathcal{L}\big\{\phi_{\f}(G_n)\big\}\Big],\\[0.2cm]
\text{II}_n(X)& = d_{\text{LP}}\Big[\mathcal{L}\big\{\phi_{\f}(G_n)\big\}\, , \, \mathcal{L}\big\{\phi_{\f}(G_n^*) \mid X\big\}\Big],   \\[0.2cm]
\text{III}_n(X) &  = d_{\text{LP}}\Big[ \mathcal{L}\big\{\phi_{\f}(G_n^*) \mid X\big\} \, , \,   \mathcal{L}\big\{\phi_{\f}(\hat{\mu}_n^*)\mid X\big\}\Big].
\end{align*}
It remains to show that $\text{I}_n+\text{II}_n(X)+\text{III}_n(X)$ converges to 0 in probability, which is handled in Section~\ref{sec:finish}.

%%%%%%%%%%%%%%%%%%%%%%%%%%%%%%%
\subsection{Defining the Gaussian processes $G_n(z)$ and $G_n^*(z;X)$}\label{sec:Gaussian}
Several pieces of notation will be needed to define the processes $G_n(z)$ and $G_n^*(z;X)$. First, let $b_0$ be any constant strictly greater than $(1+\sqrt{\gamma})\sup_{n}\lambda_1(\Sigma_n)$, and define the following domain in $\C^+$,
\begin{equation*}
\begin{split}
D^+&=[0,b_0]+\sqrt{-1}(0,1],
\end{split} 
\end{equation*}
as well as the symmetrized version 
$$D_{\text{sym}}=D^+\cup \overline{D^+},$$
 where the domain $\overline{D^+}$ consists of the complex conjugates of the points in $D^+$. For future reference, it is convenient to define
\begin{equation*}
D_{\ve}=[0,b_0]+\sqrt{-1}(\ve,1],
\end{equation*}
where $\ve\in(0,1)$ is fixed.

Next, recall that $t_n(z)$ denotes the Stieltjes transform of the distribution $\mathcal{F}(H_n,\gamma_n)$. 
It will also be convenient to use a modified version of $t_n(z)$, denoted
\begin{equation*}
\underline{t}_n(z)=-\ts\frac{1-\gamma_n}{z}+\gamma_nt_n(z).
\end{equation*}
We define bootstrap analogues of $t_n(z)$ and $\underline{t}_n(z)$, which should be viewed conditionally on a realization of the matrix $X$. Specifically, for a given realization of the estimator $\tilde\Lambda_n$ obtained from $X$, we define $\hat{t}_n(z)$ as the Stieltjes transform of the distribution $\mathcal{F}(\tilde H_n,\gamma_n)$. 
Likewise, we define
\begin{equation}\label{EQ:TUNDERHAT}
\underline{\hat{t}}_n(z)=-\ts\frac{1-\gamma_n}{z}+\gamma_n \hat{t}_n(z).
\end{equation}
We are now in position to define some parameters needed for describing the processes $G_n(z)$ and $G_n^*(z;X)$. Letting the complex derivative of $\underline{t}_n(z)$ be written as $\underline{t}'_n(z)$, and letting $z_1,z_2\in D_{\text{sym}}$, define the functions
\begin{align}
\Theta_{n,0}(z_1,z_2)&= \frac{\underline{t}_n'(z_1)\underline{t}_n'(z_2)}{\{\underline{t}_n(z_1)-\underline{t}_n(z_2)\}^2}-\frac{1}{(z_1-z_2)^2},\label{EQ:THETA0DEF}\\[0.3cm]
\Theta_{n,2}(z_1,z_2)&=\ts\frac{z_1^2z_2^2 \underline{t}_n'(z_1)\underline{t}_n'(z_2)}{n} \cdot \displaystyle\sum_{j=1}^p \{U_n\Gamma_{n,2}(z_1)U_n\ttop\}_{jj}\{U_n\Gamma_{n,2}(z_2)U_n\ttop\}_{jj},\label{EQ:THETA2DEFf}
\end{align}
where $\Gamma_{n,2}$ was defined in line~\eqref{EQ:GAMMADEF}. The above notation is drawn from the paper~\cite{Najim:Yao:2016}, and we omit another function $\Theta_{n,1}$ defined there, since it matches $\Theta_{n,0}$ in the context of real-valued data. The counterpart of $\Theta_{n,0}(z_1,z_2)$ in the bootstrap world is denoted $\hat{\Theta}_{n,0}(z_1,z_2)$, and is defined for each \smash{$z_1,z_2\in D_{\text{sym}}$} by replacing $\underline{t}_n$ with $\underline{\hat{t}}_n$. Meanwhile, the counterpart of $\Theta_{n,2}(z_1,z_2)$ in the bootstrap world is defined in terms of the matrix
\begin{equation}\label{EQ:GAMMAHATDEF}
\hat{\Gamma}_{n,\ell}(z)=\tilde\Lambda_n^{1/2}\Big[-zI_p +\big\{(1-\gamma_n)-z\gamma_n \hat{t}_n(z;X)\big\}\tilde\Lambda_n\Big]^{\!-\ell}\, \tilde \Lambda_n^{1/2}
\end{equation}
where $\ell\in\{1,2\}$,
and specifically
\begin{equation*}
 \hat{\Theta}_{n,2}(z_1,z_2;X)=\ts\frac{z_1^2z_2^2 \,\underline{\hat{t}}_n'(z_1)\underline{\hat{t}}_n'(z_2)}{n} \cdot \displaystyle\sum_{j=1}^p \{ \hat{\Gamma}_{n,2}(z_1;X)\}_{jj}\{\hat{\Gamma}_{n,2}(z_2;X)\}_{jj}.
\end{equation*}
With this notation in place, the following lemma defines the processes $G_n(z)$ and $G_n^*(z;X)$, and can be obtained as a reformulation of Proposition 5.2 in~\cite{Najim:Yao:2016}. Also, as a small clarification, for a generic complex-valued stochastic process indexed by $z$, say $W(z)\in\C$, we write its ordinary covariance function using the notation
$\text{cov}\{W(z_1),W(z_2)\}=E\big[(W(z_1)-E[W(z_1)])\,(W(z_2)-E[W(z_2))\big].$

\begin{lemma}\label{lem:gaussianexist}
Suppose that Assumptions~\ref{assumption:data} and~\ref{assumption:esd} hold. Then, for each $n\geq 1$, there exists a zero-mean complex-valued continuous Gaussian process $\{G_n(z)\}_{z\in D_{\text{\emph{sym}}}}$ with the covariance function
\begin{equation}\label{EQ:COVFORMULA1}
\text{\emph{cov}}\{G_n(z_1),G_n(z_2)\}=2\Theta_{n,0}(z_1,z_2)+(\kappa-3) \Theta_{n,2}(z_1,z_2).
\end{equation}
Also, for each $n\geq 1$, and almost every realization of $X$, there exists a zero-mean complex-valued continuous Gaussian process
$\{G_n^*(z;X)\}_{z\in D_{\text{\emph{sym}}}}$ with the conditional covariance function
\begin{equation}\label{EQ:COVFORMULA2}
\text{\emph{cov}}\{G_n^*(z_1;X),G_n^*(z_2;X) \mid X\}=2\hat{\Theta}_{n,0}(z_1,z_2;X)+(\hat{\kappa}_n-3) \hat{\Theta}_{n,2}(z_1,z_2;X),
\end{equation}
where $\hat{\kappa}_n$ is the kurtosis estimator~\eqref{EQ:EST-KURT} obtained from $X$. 
\end{lemma}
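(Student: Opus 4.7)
My plan is to reduce the construction of both Gaussian processes to Proposition 5.2 of~\cite{Najim:Yao:2016}, which handles precisely this kind of covariance structure at the level of each fixed $n$. The non-bootstrap statement is obtained by a direct invocation, and the bootstrap statement is obtained by applying the same proposition in the bootstrap ``world'' after conditioning on a suitable event of probability one.

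For the process $G_n(z)$, I would appeal directly to~\citep[Proposition 5.2]{Najim:Yao:2016}. Under Assumptions~\ref{assumption:data} and~\ref{assumption:esd}, that proposition constructs, for each $n$, a zero-mean continuous complex Gaussian process on $D_{\text{sym}}$ whose covariance function is exactly $2\Theta_{n,0}(z_1,z_2)+(\kappa-3)\Theta_{n,2}(z_1,z_2)$. Two things have to be checked for this construction to go through: Hermitian positive semi-definiteness of the kernel, so that a Gaussian process with this covariance exists at all, and continuity / H\"older-type regularity of the kernel, so that Kolmogorov's criterion produces a continuous modification. Both are handled in~\cite{Najim:Yao:2016}: positive semi-definiteness follows from recognizing the kernel as the leading-order asymptotic covariance of the centered Stieltjes-transform process $\hat{\mu}_n(z)$ defined in~\eqref{EQ:MUZDEF}, combined with a Gaussianization argument that passes positive semi-definiteness from the (truly random, hence automatically PSD) finite-$n$ covariance to the limit expression; while the regularity of the kernel follows from the analyticity and uniform boundedness of $\underline{t}_n$ and its derivatives on $D_{\text{sym}}$, together with bounds of the type in Lemma~\ref{LEM:STIELTJES} applied to the diagonal entries of $U_n\Gamma_{n,2}(z)U_n\ttop$.

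For $G_n^*(z;X)$, I would apply Proposition 5.2 of~\cite{Najim:Yao:2016} conditionally on $X$, treating the bootstrap mechanism as a population model in its own right. By~\eqref{EQ:SPECBOUND} in Theorem~\ref{THM:CONSIST}, for almost every realization of $X$ the truncation~\eqref{EQ:LAMBDAEST} gives a sequence $\tilde{\Lambda}_n$ whose operator norms are uniformly bounded in $n$, and by construction the estimator~\eqref{EQ:EST-KURT} satisfies $\hat{\kappa}_n\geq 1$. On this full-probability event, the bootstrap generative model $Z^*\tilde{\Lambda}_n^{1/2}$ falls exactly within the structural hypotheses of Assumption~\ref{assumption:data}, with the matrix $\Sigma_n$ replaced by the diagonal matrix $\tilde{\Lambda}_n$ (so that its eigenvector matrix is the identity $I_p$) and with the kurtosis parameter replaced by $\hat{\kappa}_n$. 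Invoking Proposition 5.2 in this setting produces a Gaussian process with the claimed conditional covariance; the substitution $U_n\to I_p$ collapses the general expression $\{U_n\hat{\Gamma}_{n,2}(z;X)U_n\ttop\}_{jj}$ to $\{\hat{\Gamma}_{n,2}(z;X)\}_{jj}$, matching formula~\eqref{EQ:COVFORMULA2} exactly.

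The main (and only mildly delicate) obstacle in the bootstrap step is that the ``parameters'' $\tilde{\Lambda}_n$ and $\hat{\kappa}_n$ are themselves random, and Proposition 5.2 is a statement for deterministic population parameters. This is reconciled by a pathwise argument: one fixes a measurable event of full probability on which $\tilde{\Lambda}_n$ is uniformly bounded and $\hat{\kappa}_n \geq 1$, and then constructs $G_n^*(\cdot;X)$ for each such realization of $X$ as the Gaussian process supplied by Najim--Yao applied to the now-deterministic bootstrap parameters. Edge cases such as $\hat{\kappa}_n=1$, where the Pearson system collapses to the two-point Rademacher law, cause no difficulty, since the Najim--Yao construction relies only on moment conditions that are trivially satisfied by bounded distributions. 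With this pathwise viewpoint, the conclusion of the lemma follows.
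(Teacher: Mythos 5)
Your proposal takes essentially the same route as the paper. The paper's entire justification for this lemma is the single sentence that it ``can be obtained as a reformulation of Proposition 5.2 in Najim and Yao (2016),'' and you correctly identify that result as the engine: a direct application yields $G_n(z)$, and a conditional application (with $\tilde\Lambda_n$ playing the role of the population spectrum, $I_p$ the eigenvector matrix, and $\hat\kappa_n$ the kurtosis) yields $G_n^*(z;X)$. Your extra discussion of the pathwise conditioning argument, the role of \eqref{EQ:SPECBOUND} from Theorem~\ref{THM:CONSIST} in guaranteeing bounded bootstrap spectra, and the collapse of $U_n\hat\Gamma_{n,2}U_n^\top$ to $\hat\Gamma_{n,2}$ under $U_n=I_p$ is a sound elaboration of details the paper leaves implicit.

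One small caution worth flagging, though it does not change the verdict: you assert that the bootstrap model ``falls exactly within the structural hypotheses of Assumption~\ref{assumption:data},'' but Assumption~\ref{assumption:data} demands $\kappa>1$ strictly and $E(Z_{11}^8)<\infty$. The estimator $\hat\kappa_n$ can hit the boundary value $1$, and for $\hat\kappa_n$ large enough the symmetric Pearson system produces a scaled Student-$t$ (Type VII) with too few degrees of freedom to have a finite eighth moment. Your remark about bounded distributions covers only the Rademacher edge case; it does not address the heavy-tailed end. Since the paper itself glosses over this when invoking Proposition 5.2 of Najim and Yao in the bootstrap world, this is a shared omission rather than a defect unique to your argument, but it would be worth noting explicitly that what is actually needed from Najim and Yao for this lemma is only the well-definedness and Hermitian positive semi-definiteness of the covariance kernel --- properties that follow from the form of $\Theta_{n,\ell}$ and the bounds in Lemma~\ref{LEM:STIELTJES} and do not themselves require eighth moments of the underlying entries.
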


\subsection{Completing the proof of Theorem~\ref{THM:BOOTSTRAP_CONSIST}}\label{sec:finish}
Under the assumptions of Theorem~\ref{THM:BOOTSTRAP_CONSIST}, we will show that $\tI_n+\tII_n(X)+\tIII_n(X)$ converges to 0 in probability by applying the following core lemma from~\cite{Najim:Yao:2016} to each of the three terms separately. The details of applying the lemma are somewhat different for each term, and these details are addressed in separate paragraphs below. 
%We use the \citet[Lemma 6.3]{Najim:Yao:2016}
%\begin{lemma}
\begin{lemma}{\citep[Lemma 6.3]{Najim:Yao:2016}}. 
Let  $\{\hat{\varphi}_n(z)\}_{n\geq 1}$ and $\{\hat{\psi}_n(z)\}_{n\geq 1}$ be two sequences of centered complex-valued continuous stochastic processes indexed by $z\in D_{\text{\emph{sym}}}$. Assume the following conditions (i)--(v) hold.
\begin{enumerate}[(i)]
\item  For every $n\geq1$, and each $z\in D_{\text{sym}}$, the processes satisfy $\hat \varphi_n(\bar{z})=\overline{\hat\varphi_n(z)}$ and $\hat\psi_n(\bar{z})=\overline{\hat\psi_n(z)}$. 
\item  For every $\ve\in(0,1)$, both  sequences of processes $\{\hat\varphi_n(z)\}_{n\geq 1}$ and $\{\hat\psi_n(z)\}_{n\geq 1}$ are tight on $D_{\ve}$.
\item For every $n\geq 1$, the process $\hat\psi_n(z)$ is a complex-valued Gaussian process on $D_{\text{\emph{sym}}}$.
\item There are polynomial functions $\pi_{1}$ and $\pi_{2}$, not depending on $n$, such that the following bounds hold for every $n\geq 1$ and $z\in D^+$,
\begin{equation*}
\text{\emph{var}}(\hat\varphi_n(z))\leq \frac{\pi_{1}(|z|)}{\Im(z)^4}, \ \ \ \  \ \ \ \  \text{\emph{var}}(\hat\psi_n(z))\leq \frac{\pi_{2}(|z|)}{\Im(z)^4}.
\end{equation*}
\item For every fixed $d\geq 1$, and every $\{z_1,\dots,z_d\}\subset D^+$, the finite-dimensional distributions of $\{\hat\varphi_n(z)\}_{n\geq 1}$ and $\{\hat\psi_n(z)\}_{n\geq 1}$ satisfy
\begin{equation*}
d_{\text{\emph{LP}}}\big[\mathcal{L}\{\hat\varphi_n(z_1),\dots,\hat\varphi_n(z_d) \} \, , \, \mathcal{L}\{ \hat\psi_n(z_1),\dots,\hat\psi_n(z_d) \}\big]\to 0,
\end{equation*}
as  $n \to\infty$.
\end{enumerate}
Then, for any fixed collection of functions $\f=(f_1,\dots,f_m)$ lying in $\mathscr{C}_c^3(\R)$,
\begin{equation*}
d_{\text{\emph{LP}}}\big[\mathcal{L}\{\phi_{\f}( \hat\varphi_n)\} \, , \, \mathcal{L}\{\phi_{\f}(\hat\psi_n)\}\big] \to 0, 
\end{equation*}
as $n\to\infty$.
\end{lemma}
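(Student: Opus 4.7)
This lemma is a continuous-mapping principle tailored to the Helffer--Sj\"ostrand representation. The linear functional $\phi_f$ integrates a deterministic kernel $\bar\partial\Phi_r(f)(z)$ against the random process; by construction of the almost-analytic extension, this kernel is compactly supported in $\C$ and satisfies the pointwise bound $|\bar\partial\Phi_r(f)(z)|\lesssim |\Im z|^{r}$ near the real axis (here $r=2$ suffices since $f\in\mathscr{C}_c^3$). This order of vanishing exactly offsets the $|\Im z|^{-4}$ blow-up of the variances allowed by condition (iv), making $\phi_f(\hat\varphi_n)$ and $\phi_f(\hat\psi_n)$ well-defined random variables with finite second moments, uniformly in $n$.

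I would carry out a standard truncate-and-approximate argument in two pieces. Fix $\ve\in(0,1)$ and split $\phi_f=\phi_f^{(\ve)}+R_f^{(\ve)}$, where $\phi_f^{(\ve)}$ restricts the integration to $D_\ve\cup\overline{D_\ve}$ (using condition (i) to relate the integrand on $\overline{D^+}$ to $D^+$), while $R_f^{(\ve)}$ collects the contribution from the strip $\{0<|\Im z|<\ve\}$. Using Jensen's inequality, the pointwise kernel bound above, and condition (iv),
\[
\E\,|R_f^{(\ve)}(\hat\varphi_n)|\ \leq\ \int |\bar\partial\Phi_r(f)(z)|\sqrt{\var(\hat\varphi_n(z))}\,d\ell_2(z)\ \lesssim\ \int_0^{\ve} y^{r-2}\,dy\ \lesssim\ \ve^{r-1},
\]
uniformly in $n$, with the same bound for $\hat\psi_n$ via $\pi_2$. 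Hence Markov's inequality makes both remainder contributions $o_\ve(1)$ in probability, uniformly in $n$.

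On the compact domain $D_\ve\cup\overline{D_\ve}$, tightness of $\{\hat\varphi_n\}$ and $\{\hat\psi_n\}$ in $C(D_\ve\cup\overline{D_\ve})$ is given by (ii), and LP convergence of all finite-dimensional distributions is given by (v). Since $C(D_\ve\cup\overline{D_\ve})$ is Polish and its Borel $\sigma$-algebra is generated by the cylinder sets, any joint subsequential weak limit of the two sequences has marginals with identical finite-dimensional laws and hence identical laws; condition (iii) identifies this common limit as a Gaussian process determined by the limiting covariance structure from Lemma~\ref{lem:gaussianexist}. The map $h\mapsto \phi_f^{(\ve)}(h)$ is a bounded linear functional on $C(D_\ve\cup\overline{D_\ve})$ (operator norm at most $\|\bar\partial\Phi_r(f)\|_{L^1(D_\ve\cup\overline{D_\ve})}<\infty$), so the continuous mapping theorem yields
\[
d_{\rm LP}\bigl(\mathcal{L}\{\phi_f^{(\ve)}(\hat\varphi_n)\},\ \mathcal{L}\{\phi_f^{(\ve)}(\hat\psi_n)\}\bigr)\longrightarrow 0,\qquad n\to\infty.
\]

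A triangle-inequality argument for $d_{\rm LP}$ then closes the proof: given $\delta>0$, first pick $\ve$ so small that both remainders are below $\delta/3$ with probability $1-\delta/3$ uniformly in $n$, and then pick $n$ large so that the truncated LP distance above is also below $\delta/3$. Extension to a vector $\f=(f_1,\dots,f_m)$ follows by the Cram\'er--Wold device, since for any $t\in\R^m$, linearity gives $\sum_k t_k \phi_{f_k}(h)=\phi_{\sum_k t_k f_k}(h)$, and $\sum_k t_k f_k$ again lies in $\mathscr{C}_c^3(\R)$. The main obstacle is the Polish-space identification step: making precise that marginal tightness together with finite-dimensional LP convergence forces closeness of the two laws on $C(D_\ve\cup\overline{D_\ve})$ without a literal coupling. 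The clean way is to extract subsequential joint limits via Prohorov, observe that all finite-dimensional marginals of the two limits coincide by (v), and conclude that the full laws coincide as Borel probability measures on the Polish space $C(D_\ve\cup\overline{D_\ve})$.
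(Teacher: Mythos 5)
Note first that the paper does not prove this lemma at all: it is imported verbatim from Najim and Yao (2016, Lemma 6.3), so there is no in-paper argument to compare against. Your proposal is, in substance, a correct reconstruction of the standard proof strategy behind that result: split $\phi_f$ into the contribution of the strip $\{0<\Im z<\ve\}$, where the almost-analytic kernel satisfies $|\bar\partial\Phi_2(f)(z)|\lesssim(\Im z)^2$ and, by (iv), the centered processes have standard deviation $\lesssim(\Im z)^{-2}$, so the expected remainder is $O(\ve)$ uniformly in $n$ and translates into an LP bound via Markov; and the contribution of the compact set $D_\ve$, where tightness (ii) plus the finite-dimensional LP merging (v) and the fact that laws of continuous processes are determined by their finite-dimensional distributions give, through Prohorov and the sub-subsequence principle, merging of the laws of the bounded linear functionals $\phi_f^{(\ve)}$. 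Two small points would tighten the write-up: (1) for the vector case, Cram\'er--Wold gives merging of one-dimensional projections, and to upgrade this to LP merging of the $\R^m$-valued vectors you still need tightness of those vectors (which your uniform moment bounds do supply); it is cleaner to skip this entirely by applying the continuous mapping step to the map $h\mapsto\big(\phi_{f_1}^{(\ve)}(h),\dots,\phi_{f_m}^{(\ve)}(h)\big)$ directly. (2) The aside that condition (iii) ``identifies this common limit as a Gaussian process determined by the covariance from Lemma~\ref{lem:gaussianexist}'' is out of place in this abstract setting --- the subsequential limit need not be tied to that covariance, and in fact your argument never uses (iii); equality of the subsequential limit laws follows from (v) alone. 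Neither point is a genuine gap in the argument.
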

~\\
\noindent \emph{The term $\tI_n$.}
%We now show the result for the term $\tI_n$. 
Consider the choices $\hat\varphi_n(z)=\hat{\mu}_n(z)$ and $\hat\psi_n(z)=G_n(z)$. Recall the definition $\hat\mu_n(z)=p[\hat m_n(z)-E\{\hat m_n(z)\}]$ from line~\eqref{EQ:MUZDEF}. Due to the fact that any Stieltjes transform $s(z)$ satisfies $\overline{s(z)}=s(\bar{z})$, it follows that property (i) holds for $\{\hat\mu_n(z)\}_{n\geq 1}$. Using this property of Stieltjes transforms again, the sequence $\{G_n(z)\}_{n\geq 1}$ can be verified to satisfy (i) by applying the meta-model argument in the proof of Proposition 5.2 in the paper~\cite{Najim:Yao:2016}, which implies that for each $n$, the process $G_n(z)$ arises as a limit of Stieltjes transforms. Our notation for $G_n(z)$ differs from that in~\cite{Najim:Yao:2016}, since the process $G_n(z)$ has mean zero here. Lastly, the fact that both sequences of processes satisfy conditions (ii)-(v) follow directly from Theorem 1 and Proposition 6.4 in~\cite{Najim:Yao:2016}; see also the comments preceding Proposition 6.4. Therefore, $\tI_n\to 0$.\\

\emph{Remark.}
\normalfont
In the remaining paragraphs, we will write $X_n$ instead of $X$, in order to emphasize the fact that each realization of $X$ lies within the sequence of matrices $\{X_n\}_{n\geq 1}$.\\

\noindent \emph{The term $\tII_n(X_n)$.}
%We aim to show that $\tII_n(X_n)\xrightarrow{ \ \P \ }0$. 
Consider the choices $\hat\varphi_n(z)=G_n(z)$ and $\hat\psi_n(z)=G_n^*(z;X_n)$, where we view the second process from the viewpoint of the bootstrap world, conditionally on a fixed realization of $X_n$. In the previous paragraph, we already explained why $\{G_n(z)\}_{n\geq 1}$ satisfies conditions (i)--(iv). Hence, we will first verify the conditions (i)--(iv) for $\{G_n^*(z;X_n)\}_{n\geq 1}$, and then condition (v) involving both sequences of processes will be verified later. 
To handle the first task, it is enough to show that for any subsequence $\mathcal{N}\subset\{1,2,\dots\}$, there is a sub-subsequence $\mathcal{N}'\subset \mathcal{N}$, such that $\{G_{n}^*(z;X_{n})\}_{n\in \mathcal{N}'}$ satisfies conditions (i)-(iv) for almost every realization of $\{X_{n}\}_{n\in \mathcal{N}'}$.
For the conditions (i)--(iv), the arguments used in the previous paragraph may be applied almost directly to $\{G_{n}^*(z;X_{n})\}_{n\in \mathcal{N}'}$, again using Theorem 1 and Proposition 6.4 in~\cite{Najim:Yao:2016}. However, there is one detail to notice, which is that if we view $\{G_{n}^*(z;X_{n})\}_{n\in \mathcal{N}'}$ from the perspective of the bootstrap world, then the population kurtosis is $\hat\kappa_n$, which varies with $n$, whereas $\kappa$ is  fixed with respect to $n$. Nevertheless, this does not create any difficulty when using Theorem 1 and Proposition 6.4 in~\cite{Najim:Yao:2016}. The reason is that the proofs underlying these results allow the 
population kurtosis to vary with $n$ as long as it remains bounded, and since we know $\hat\kappa_n\xrightarrow{ \ \P \ } \kappa$, it follows that we can find a sub-subsequence $\mathcal{N}'$ such that almost every realization of $\{\hat\kappa_n\}_{n\in\mathcal{N}'}$ is bounded.

We now verify condition (v) almost surely along subsequences by showing that for any fixed set $\{z_1,\dots,z_d\}\subset D^+$,  the following limit holds
as $n\to\infty$, 
%the random variable $\mathfrak{d}_n(z_1,\dots,z_d;X_n)$ converges to 0 in probability, where we define
\begin{equation*}
d_{\text{LP}}\Big[\mathcal{L}\big\{G_n(z_1),\dots,G_n(z_d)\big\} \, , \, \mathcal{L}\big\{G_n^*(z_1;X_n),\dots,G_n^*(z_d;X_n) \mid X_n\big\}\Big] \ \ \xrightarrow{ \ \ \P \ \ } \ \ 0.
\end{equation*}
Here, it is important to keep in mind that  $G_n(z)$ and $G_n^*(z;X_n)$ are centered complex-valued Gaussian processes. Unlike the case of real-valued Gaussian processes, there is a small subtlety, because in general, if $W(z)$, say, is a centered complex-valued Gaussian process, then its finite dimensional distributions depend on both the ordinary covariance function $E\{W(z_1)W(z_2)\}$, as well as the conjugated version $E \{W(z_1)\overline{W(z_2)} \}$. However, since the processes $G_n(z)$ and $G_n^*(z;X_n)$ satisfy condition (i), and since the domain $D_{\text{sym}}$ is closed under complex conjugation, the finite-dimensional distributions of $G_n(z)$ and $G_n^*(z;X_n)$ are completely determined by their ordinary covariance functions on $D_{\text{sym}}^2$.

Due to the comments just given, the task of verifying (v) reduces to showing that there is a limiting covariance function $C(z_1,z_2)$  such that the following limits hold for all $(z_1,z_2)\in D_{\text{sym}}^2$,
 \begin{equation}\label{EQ:COVLIMIT}
 \text{cov}\{G_n(z),G_n(z_2)\} \xrightarrow{ \ \ } C(z_1,z_2),
 \end{equation}
 and
  \begin{equation}\label{EQ:COVLIMITSTAR}
 \text{cov}\{G_n^*(z;X_n),G_n^*(z_2;X_n) \mid X_n \} \xrightarrow{ \ \P \ } C(z_1,z_2).
 \end{equation}
 By inspecting the covariance formulas~\eqref{EQ:COVFORMULA1} and~\eqref{EQ:COVFORMULA2}, and using $\hat\kappa_n\xrightarrow{\P} \kappa $, it follows that the above limits~\eqref{EQ:COVLIMIT} and~\eqref{EQ:COVLIMITSTAR} will hold if we can show that for each $\ell\in\{0,2\}$, there is a deterministic function $\Theta_{\ell}(z_1,z_2)$ on  the domain $D_{\text{sym}}^2$ such that 
 \begin{equation}\label{EQ:THETALIM}
 \Theta_{\ell,n}(z_1,z_2) \xrightarrow{ \ \ } \Theta_{\ell}(z_1,z_2),
 \end{equation}
 and
 \begin{equation}\label{EQ:THETASTARLIM}
 \hat{\Theta}_{\ell,n}(z_1,z_2;X_n) \xrightarrow{ \P } \Theta_{\ell}(z_1,z_2).
 \end{equation}

To handle the limit~\eqref{EQ:THETALIM}, let $t(z)$ denote the Stieltjes transform of $\mathcal{F}(H,\gamma)$, and let $\underline{t}(z)=-\frac{1-\gamma}{z}+\gamma t(z)$. In the special situation when $\Sigma_n$ is diagonal for every $n\geq 1$, the calculations in~\citet[Section 3.4, see also p.\,1845]{Najim:Yao:2016} show that under Assumptions~\ref{assumption:data} and~\ref{assumption:esd}, the limit~\eqref{EQ:THETALIM} exists for each $\ell\in\{0,2\}$, and each function $\Theta_{\ell}(z_1,z_2)$ is determined by $\gamma$ and $H$.
When $\Sigma_n$ is not diagonal, Assumption~\ref{assumption:eigenvec} may be used, since it implies that $\Theta_{n,2}(z_1,z_2)$, as defined in~\eqref{EQ:THETA2DEFf}, still behaves asymptotically as if $\Sigma_n$ were diagonal. As a side note, observe that if $\kappa=3$, then the  formula~\eqref{EQ:COVFORMULA1} shows that  $\Theta_{n,2}(z_1,z_2)$  does not affect the limiting covariance function $C(z_1,z_2)$. This explains why Assumption~\ref{assumption:eigenvec} is not needed when $\kappa=3$.

Next, we handle the bootstrap limit~\eqref{EQ:THETASTARLIM}, and in fact, we show that it holds almost surely. The idea is to check that the same conditions giving rise to $\Theta_{\ell}(z_1,z_2)$ in the limit~\eqref{EQ:THETALIM} also hold in the bootstrap world.  Specifically, the calculations in \citet[Section 3.4]{Najim:Yao:2016} that establish the limit~\eqref{EQ:THETALIM} are based on four conditions: that $\Sigma_n$ is diagonal, that $\sup_{n}\lambda_1(\Sigma_n)<\infty$, that $H_n\Rightarrow H$, and that $\gamma_n\to \gamma$. In light of these conditions, we proceed by viewing  $\tilde{\Lambda}_n$ as a diagonal population covariance matrix in the bootstrap world, and by viewing $\tilde{H}_n$ as  the analogue of $H_n$ in the bootstrap world. It follows from our Theorem~\ref{THM:CONSIST} that for almost every realization of the matrices $\{X_n\}_{n\geq 1}$, the conditions $\sup_n \lambda_1(\tilde\Lambda_n)<\infty$ and $\tilde H_n\Rightarrow H$ are satisfied. Therefore, we conclude that for each $\ell\in\{0,2\}$, the limit  $\hat{\Theta}_{\ell,n}(z_1,z_2;X)\to \Theta_{\ell}(z_1,z_2)$ holds almost surely. The completes the verification of the limit $\tII_n(X)\xrightarrow{ \ \P \ }0$.\\

\noindent \emph{The term $\tIII_n(X_n)$.} Consider the choices $\hat\varphi_n(z)=\hat{\mu}_n^*(z;X_n)$ and $\hat\psi_n(z)=G_n^*(z;X_n)$. The conditions (i)-(v) can be verified using the same reasoning described for $\tI$ and $\tII_n(X_n)$ above. It follows that \smash{$\tIII_n(X_n)\xrightarrow{ \P } 0$.}

\section{Proof of Theorem~\ref{THM:BOOTSTRAP_BIAS}}\label{sec:biasproof}
\subsection{The limit~\eqref{EQ:BIASMETRIC}} Here we explain how the second limit~\eqref{EQ:BIASMETRIC} follows quickly from the first limit~\eqref{EQ:BIASRESULT}, in conjunction with Theorem~\ref{THM:BOOTSTRAP_CONSIST}. For ease of notation we define the random vectors
\begin{equation}
\begin{split}
U_n &= p[T_n(\f)-E\{T_n(\f)\}]\\[0.2cm]
U_n^* &=p[T_{n,1}^*(\f)-E\{T_{n,1}^*(\f) \mid X\}].
\end{split}
\end{equation}
By the triangle inequality, 
\begin{equation}
 \begin{split}
 d_{\rm LP}\Big[\mathcal{L}\big\{p\{T_n(\f)-\vartheta_n(\f)\}\big\} \ , \ \mathcal{L}\big\{p\{T_{n,1}^*(\f)-\tilde \vartheta_n(\f)\} \mid X\big\}\Big] & \ \leq \ \Delta_{n,1}(X)+\Delta_{n,2}(X),
 \end{split}
\end{equation} 
where
\begin{equation}
\begin{split}
\Delta_{n,1}(X) &=d_{\text{LP}}\Big[\mathcal{L}\{U_n+pb_n(\f)\}\, , \, \mathcal{L}\{U_n^*+pb_n(\f)\mid X \}\Big] \\[0.2cm]
\Delta_{n,2}(X)&=d_{\text{LP}}\Big[\mathcal{L}\{U_n^*+pb_n(\f) \mid X\} \, , \, \mathcal{L}\{ U_n^*+p\hat b_n(\f) \mid X\}\Big].
\end{split}
\end{equation}
Due to the translation-invariance of the LP metric, Theorem~\ref{THM:BOOTSTRAP_CONSIST} implies $\Delta_{n,1}(X)\xrightarrow{ \ \P \ } 0$.  To handle $\Delta_{n,2}(X)$, it is a basic fact that if two random vectors are related by a constant translation, then the LP distance between them is at most the length of the translation. Therefore,
$$\Delta_{n,2}(X) \ \leq \ p\|b_n(\f)- \hat b_n(\f)\|_2,$$
and this bound tends to 0 in probability by the first limit~\eqref{EQ:BIASRESULT}.\qed

\subsection{The limit~\eqref{EQ:BIASRESULT}}
The proof is decomposed into two results below, Propositions~\ref{prop:bias1} and~\ref{prop:bias2}, which directly imply the limit~\eqref{EQ:BIASRESULT}. Before stating these results, a fair bit of notation is needed. The first proposition shows that the bias $b_n(\f)$ is asymptotically equivalent to another vector $\mathfrak{b}_n(\f)$, in the sense that $p\{b_n(\f)-\mathfrak{b}_n(\f)\}\to 0$.  When the components of $\f=(f_1,\dots,f_m)$ lie in
%\footnote{Similarly to the proof of Theorem~\ref{THM:BOOTSTRAP_CONSIST}, we may work in the case when the functions $\f=(f_1,\dots,f_m)$ lie in $\mathscr{C}_c^{18}(\R)$ rather than $\mathscr{C}_c^{18}(\mathcal{I})$, and this is explained at the beginning of Section~\ref{sec:proofs:bootstrap}.}
$\mathscr{C}_c^{18}(\R)$, the vector $\mathfrak{b}_n(\f)$ is defined by
\begin{equation}\label{EQ:FRAKBDEF}
\mathfrak{b}_n(\f)=\frac{1}{\pi p}\Re\int_{\C^+}\bar\partial \Phi_{17}(\f)(z)\mathcal{B}_n(z)d\ell_2(z),
\end{equation}
where for any $z\in\C^+$, the function $\B_n(z)$ is set to
\begin{equation}\label{EQ:CALBDEF}
\B_n(z)=\B_{n,1}(z)+(\kappa -3)\B_{n,2}(z),
\end{equation}
with the terms being defined as
\begin{align}
 \B_{n,1}(z) &=\frac{-z^3\underline{t}_n^3(z)\ts\frac 1n \tr\big\{ \Gamma_{n,2}(z)\Gamma_{n,1}(z)\big\}}{\Big[1-z^2\underline{t}_n^2(z)\ts\frac{1}{n}\tr\big\{\Lambda_n \Gamma_{n,2}(z)\big\}\Big]\cdot\Big[1-z^2\underline{t}_n^2(z)\ts\frac{1}{n}\tr\big\{\Gamma_{n,1}^2(z)\big\}\Big]}\label{EQ:CALB1DEF},\\[0.5cm]
   \B_{n,2}(z)&=-z^3\underline{t}_n^3(z)\frac{\ts\frac{1}{n}\sum_{j=1}^p \{U_n\Gamma_{n,1}(z)U_n\ttop\}_{jj}\{U_n\Gamma_{n,2}(z)U_n\ttop\}_{jj}}{1-z^2\underline{t}_n^2(z)\ts\frac{1}{n}\tr\{\Lambda_n\Gamma_{n,2}(z)\}}\label{EQ:CALB2DEF}.
   \end{align}
Proposition~\ref{prop:bias1} also shows that the bias estimate $\hat b_n(\f)$ is asymptotically equivalent to a vector  $\hat{\mathfrak{b}}_n(\f)$ given by
\begin{equation}\label{EQ:BFRAKHATDEF}
 \hat{\mathfrak{b}}_n(\f)=\frac{1}{\pi p}\Re\int_{\C^+}\bar\partial \Phi_{17}(\f)(z)\hat{\mathcal{B}}_n(z)d\ell_2(z),
\end{equation}
where
\begin{equation}\label{EQ:BHATDEF}
 \hat{\B}_n(z)=\hat{\B}_{n,1}(z)+(\hat{\kappa}_n-3) \hat{\B}_{n,2}(z),
\end{equation}
and the terms $\hat{\B}_{n,1}(z)$ and $\hat{\B}_{n,2}(z)$ are defined in analogy with~\eqref{EQ:CALB1DEF} and~\eqref{EQ:CALB2DEF}. Specifically, the terms $\hat{\B}_{n,1}(z)$ and $\hat{\B}_{n,2}(z)$ are defined by replacing $\underline{t}_n(z)$, $\Gamma_{n,\ell}(z)$, and $\Lambda_n$ respectively with the counterparts $\hat{\underline{t}}_n(z)$, $\hat{\Gamma}_{n,\ell}(z)$, and $\tilde\Lambda_n$.
% which were defined at lines~\eqref{EQ:TUNDERHAT},~\eqref{EQ:GAMMAHATDEF}, and~\eqref{EQ:LAMBDAEST}. 
In addition, the matrix $U_n$ in the formula for $\hat{\B}_{n,2}(z)$ is  replaced with $I_p$.
   %
   
   %All of the notation for this section is now in place, but one additional technical point must be mentioned: With these  preparations out of the way, we may now state the following result, which is
   
   All of the notation for this section is now in place, and so we may state the following result, which is an adaptation of~Theorem 3 in \citep{Najim:Yao:2016}.

%%%%%%%%%%%%%%%%%%%%%%%% 

%%%%%%%%%%%%%%%%%%%%%%%%%%%%%%%%%%%%
\begin{proposition}\label{prop:bias1}
Suppose that Assumptions~\ref{assumption:data} and~\ref{assumption:esd} hold. Let $\f=(f_1,\dots,f_m)$ be fixed functions lying in $\mathscr{C}_c^{18}(\R)$.  Then, as $n\to\infty$,
\begin{equation}\label{EQ:BFRAKLIM}
 p\{b_n(\f)-\mathfrak{b}_n(\f)\}\xrightarrow{ \ \ } 0.
\end{equation}
and
\begin{equation}\label{EQ:BOOTFRAKLIM}
 p\{\hat{b}_n(\f)-\hat{\mathfrak{b}}_n(\f)\}\xrightarrow{ \ \P \ } 0.
\end{equation}
\end{proposition}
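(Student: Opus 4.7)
\textbf{Proof proposal for Proposition~\ref{prop:bias1}.}

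The plan is to identify both limits as applications of the bias expansion for linear spectral statistics established in Theorem~3 of~\cite{Najim:Yao:2016}: in the real-world case directly, and in the bootstrap world conditionally on~$X$. As a first step, I would use the Helffer--Sj\"ostrand representation with $r=17$ (which is the reason $\f\in\mathscr{C}_c^{18}(\R)^m$ is required; see Remark~4.4 of~\cite{Najim:Yao:2016}). Letting $m_{n,\gamma_n}(z)$ denote the Stieltjes transform of $H_{n,\gamma_n}=\F(H_n,\gamma_n)$, linearity of $\phi_f$ gives
\begin{equation*}
b_n(f)\;=\;E\{T_n(f)\}-\vartheta_n(f)\;=\;\phi_f\bigl(E[\hat m_n]-m_{n,\gamma_n}\bigr),
\end{equation*}
and comparison with the definition~\eqref{EQ:FRAKBDEF} of $\mathfrak{b}_n(f)$ shows that both $p\,b_n(f)$ and $p\,\mathfrak{b}_n(f)$ are $\phi_f$ applied, respectively, to $p(E[\hat m_n]-m_{n,\gamma_n})$ and to the deterministic function $\B_n$ of~\eqref{EQ:CALBDEF}. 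So the first limit reduces to showing that the two integrands agree up to an $o(1)$ error in a sense strong enough to pass through $\phi_f$.

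For the first limit~\eqref{EQ:BFRAKLIM}, I would invoke Theorem~3 of~\cite{Najim:Yao:2016}, which under our Assumptions~\ref{assumption:data} and~\ref{assumption:esd} gives the pointwise expansion
\begin{equation*}
p\bigl(E[\hat m_n(z)]-m_{n,\gamma_n}(z)\bigr)\;=\;\B_n(z)+o(1),\qquad z\in\C\setminus\R,
\end{equation*}
with the $o(1)$ term controlled by a polynomial in $|z|$ divided by a fixed power of $|\Im(z)|$. The bound~\eqref{EQ:KURTOSISFORMULA} and standard resolvent estimates (e.g.\ $\|(\hat\Sigma_n-zI_p)^{-1}\|\le1/|\Im(z)|$) give a polynomial-in-$|z|$, inverse-polynomial-in-$|\Im(z)|$ envelope for the left-hand side that is uniform in $n$. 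Since $\Phi_{17}(\f)(z)$ has compact support in $z$ and $\bar\partial\Phi_{17}(\f)(z)=O(|\Im(z)|^{17})$, dominated convergence and integration against the Helffer--Sj\"ostrand density yield $p\{b_n(\f)-\mathfrak{b}_n(\f)\}\to0$. The high derivative count $r=17$ is precisely what compensates for the inverse powers of $|\Im(z)|$ that appear when differentiating the bias and its remainder near the real axis.

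For the bootstrap limit~\eqref{EQ:BOOTFRAKLIM}, the idea is to apply the same deterministic expansion in the bootstrap world, conditionally on~$X$. In that world the ``population'' covariance is $\tilde\Lambda_n$, which is diagonal, so Assumption~\ref{assumption:eigenvec} is automatic (the $U_n=I_p$ in the definition~\eqref{EQ:BHATDEF} of $\hat\B_{n,2}$ reflects exactly this); moreover, Theorem~\ref{THM:CONSIST} provides $\tilde H_n\Rightarrow H$ and $\sup_n\lambda_1(\tilde\Lambda_n)<\infty$ almost surely, and $\hat\kappa_n\xrightarrow{\P}\kappa$ with $\hat\kappa_n\ge1$ by construction. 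Passing to an arbitrary subsequence and extracting a sub-subsequence on which $\hat\kappa_n\to\kappa$ almost surely as well, we obtain a full set of almost-sure realizations of $\{X_n\}$ on which the hypotheses of Theorem~3 of~\cite{Najim:Yao:2016} are satisfied by $(\tilde\Lambda_n,\hat\kappa_n)$. Applying that theorem conditionally on $X$ along this sub-subsequence yields
\begin{equation*}
p\bigl(E[\hat m_n^*(z;X)\mid X]-\tilde m_{n,\gamma_n}(z)\bigr)\;=\;\hat\B_n(z)+o(1),
\end{equation*}
where $\tilde m_{n,\gamma_n}$ is the Stieltjes transform of $\F(\tilde H_n,\gamma_n)$, and the same Helffer--Sj\"ostrand integration gives $p\{\hat b_n(\f)-\hat{\mathfrak b}_n(\f)\}\to0$ on the sub-subsequence. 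Since the original subsequence was arbitrary, the full convergence in probability follows.

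The main obstacle, in my view, is obtaining an envelope for the bootstrap remainder that is uniform along almost-every realization of $X$ and strong enough for dominated convergence in the Helffer--Sj\"ostrand integral. Concretely, one needs bounds on $\hat\B_n(z)$ and on the bootstrap remainder $p(E[\hat m_n^*(z;X)\mid X]-\tilde m_{n,\gamma_n}(z))-\hat\B_n(z)$ that depend on $X$ only through $\sup_n\lambda_1(\tilde\Lambda_n)$ and $\hat\kappa_n$; both of these are almost-surely bounded by Theorem~\ref{THM:CONSIST}, so on the almost-sure event where they are finite the deterministic bounds from~\cite{Najim:Yao:2016} apply uniformly. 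Verifying that the explicit polynomial-in-$|z|$, inverse-polynomial-in-$|\Im(z)|$ structure of those bounds is indeed preserved in the bootstrap world — which amounts to checking Lemma~\ref{LEM:STIELTJES}-type estimates for $\hat{\underline t}_n(z)$ and $\hat\Gamma_{n,\ell}(z)$ using $\sup_n\lambda_1(\tilde\Lambda_n)<\infty$ — is the most technical step, but it is a routine adaptation of the arguments in Section~6 of~\cite{Najim:Yao:2016}.
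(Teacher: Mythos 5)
Your proof takes essentially the same route as the paper: invoke Theorem~3 of Najim and Yao (2016) for \eqref{EQ:BFRAKLIM}, and for \eqref{EQ:BOOTFRAKLIM} rerun the same argument conditionally on $X$ along almost-sure sub-subsequences, using $\hat\kappa_n\xrightarrow{\P}\kappa$, $\tilde H_n\Rightarrow H$, and $\sup_n\lambda_1(\tilde\Lambda_n)<\infty$ supplied by Theorem~\ref{THM:CONSIST}. The only cosmetic difference is that you treat Theorem~3 of Najim--Yao as a pointwise Stieltjes-transform expansion still needing the Helffer--Sj\"ostrand integration and dominated-convergence step, whereas the paper cites it as already delivering the integrated conclusion $p\{b_n(\f)-\mathfrak{b}_n(\f)\}\to 0$ directly.
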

%
%
%\paragraph{Remark.} Note that the eigenvector regularity condition (Assumption~\ref{assumption:eigenvec})  is not needed here.\\

\begin{proof} The first limit~\eqref{EQ:BFRAKLIM} is the conclusion of~\citep[Theorem 3]{Najim:Yao:2016}, and the assumptions for that result are immediately implied by Assumptions~\ref{assumption:data} and~\ref{assumption:esd} here. To obtain the second limit~\eqref{EQ:BOOTFRAKLIM}, the proof of~\citep[Theorem 3]{Najim:Yao:2016} can be carried out analogously in the bootstrap world. In particular, the conditions used in that proof hold almost surely along subsequences, since $\hat\kappa_n\xrightarrow{\P} \kappa$, and $\sup_n\lambda_1(\tilde \Lambda_n)<\infty$ almost surely. 
\end{proof}

%%%%%%%%%%%%%%%%%%%%%%%%%%%%%%%%%%%%

\begin{proposition}\label{prop:bias2}
Suppose that Assumptions~\ref{assumption:data} and~\ref{assumption:esd} hold, and that either $\kappa=3$ or Assumption~\ref{assumption:eigenvec} hold. Let $\f=(f_1,\dots,f_m)$ be fixed functions lying in $\mathscr{C}_c^{8}(\R)$.  Then, there exists a fixed vector $\beta(\f)\in\R^m$ such that as $n\to\infty$,
\begin{equation}\label{EQ:BETALIM1}
 p\mathfrak{b}_n(\f)\xrightarrow{ \  \ } \beta(\f),
\end{equation}
and
\begin{equation}\label{EQ:BETALIM2}
 p\hat{\mathfrak{b}}_n(\f)\xrightarrow{ \ \P  \ } \beta(\f)
\end{equation}
\end{proposition}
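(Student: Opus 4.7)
\textbf{Proof plan for Proposition~\ref{prop:bias2}.} The plan is to identify a single deterministic limit function $\B(z)$ for both $\B_n(z)$ and $\hat\B_n(z)$ on $\C^+$, establish pointwise convergence to $\B(z)$ together with an integrable envelope, and then pass the limit through the Helffer--Sj\"ostrand-type integrals defining $p\mathfrak{b}_n(\f)$ and $p\hat{\mathfrak{b}}_n(\f)$ via dominated convergence.

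First, I would rewrite every building block of $\B_{n,\ell}(z)$ as an integral against $H_n$. Using the identity $\Gamma_{n,\ell}(z)=(-z)^{-\ell}\Lambda_n^{1/2}(I_p+\underline{t}_n(z)\Lambda_n)^{-\ell}\Lambda_n^{1/2}$, every diagonal trace involved collapses to an integral of $\lambda\mapsto \lambda^{a}(1+\underline{t}_n(z)\lambda)^{-b}$ against $H_n$. For $\B_{n,2}(z)$, Assumption~\ref{assumption:eigenvec} permits the replacement
\begin{equation*}
\tfrac{1}{n}\sum_{j=1}^p\{U_n\Gamma_{n,1}(z)U_n\ttop\}_{jj}\{U_n\Gamma_{n,2}(z)U_n\ttop\}_{jj} \ = \ \tfrac{1}{n}\sum_{j=1}^p \{\Gamma_{n,1}(z)\}_{jj}\{\Gamma_{n,2}(z)\}_{jj}+o(1),
\end{equation*}
which is precisely the diagonal expression that already appears in the definition of $\hat\B_{n,2}(z)$ (where $U_n$ is replaced by $I_p$). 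When $\kappa=3$ the factor $(\kappa-3)$ kills this term and no use of Assumption~\ref{assumption:eigenvec} is needed.

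Second, I would define the candidate limit $\beta(\f)=\frac{1}{\pi}\Re\int_{\C^+}\bar\partial\Phi_{17}(\f)(z)\B(z)d\ell_2(z)$, with $\B(z)=\B_1(z)+(\kappa-3)\B_2(z)$ obtained from $\B_{n,\ell}(z)$ by substituting $\gamma_n\mapsto\gamma$, $H_n\mapsto H$, and $\underline{t}_n\mapsto\underline{t}$ (the Stieltjes transform of $\F(H,\gamma)$). Pointwise convergence $\B_n(z)\to\B(z)$ follows from $H_n\Rightarrow H$, the uniform boundedness of the supports of $H_n$, and the standard fact that $\underline{t}_n(z)\to\underline{t}(z)$ for every fixed $z\in\C^+$. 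The denominators of $\B_{n,1}(z)$ remain bounded away from zero because the Silverstein equation implies $1-\gamma\underline{t}^2(z)\int\lambda^2(1+\underline{t}(z)\lambda)^{-2}dH(\lambda)=\underline{t}^2(z)/\underline{t}'(z)\neq 0$ on $\C^+$, and this nonvanishing transfers to the finite-$n$ approximations uniformly on compact subsets of $\C^+$. The bootstrap convergence $\hat\B_n(z)\xrightarrow{\P}\B(z)$ is handled by the same recipe: Theorem~\ref{THM:CONSIST} gives $\tilde H_n\Rightarrow H$ and $\sup_n\lambda_1(\tilde\Lambda_n)<\infty$ almost surely, continuity of the Mar\v{c}enko--Pastur map yields $\hat{\underline{t}}_n(z)\xrightarrow{\P}\underline{t}(z)$, and $\hat\kappa_n\xrightarrow{\P}\kappa$. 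Extracting an almost-sure subsequence lets the deterministic argument be replayed verbatim.

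Third, to pass the limit through the integrals I would apply dominated convergence. The integrand $\bar\partial\Phi_{17}(\f)(z)$ is compactly supported in $\C$ and vanishes like $|\Im(z)|^{17}$ as $\Im(z)\to 0$. On the other hand, Lemma~\ref{LEM:STIELTJES} gives $|\{\Gamma_{n,\ell}(z)\}_{jj}|\leq \lambda_1(\Sigma_n)/|\Im(z)|^\ell$, with the analogous bound in the bootstrap world holding almost surely on account of $\sup_n\lambda_1(\tilde\Lambda_n)<\infty$. Combined with the lower bounds on the denominators, these yield $|\B_n(z)|+|\hat\B_n(z)|\lesssim |\Im(z)|^{-C}$ for a fixed constant $C\ll 17$ on the support of $\bar\partial\Phi_{17}(\f)$, giving an integrable envelope and completing the proof of both~\eqref{EQ:BETALIM1} and~\eqref{EQ:BETALIM2}.

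The main obstacle I anticipate is the uniform-in-$n$ lower bound on the denominator $|1-z^2\underline{t}_n^2(z)\tfrac{1}{n}\tr\{\Lambda_n\Gamma_{n,2}(z)\}|$ (and its hat counterpart) across the full support of $\bar\partial\Phi_{17}(\f)$, especially near the real axis. The limiting quantity $\underline{t}^2(z)/\underline{t}'(z)$ is explicitly nonzero on $\C^+$, but to transfer this to $\B_n$ and $\hat\B_n$ uniformly one must control how close $\Im(\underline{t}_n(z))$ and $\Im(\hat{\underline{t}}_n(z))$ stay to their limit uniformly on compact subsets of $\C^+$, which in the bootstrap case requires working along almost-sure subsequences and carefully tracking the Stieltjes transform lower bounds used in the proof of Lemma~\ref{LEM:STIELTJES}.
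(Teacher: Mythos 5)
Your proposal is correct and follows essentially the same route as the paper's proof: pointwise convergence $\B_n(z)\to\B(z)$ via $H_n\Rightarrow H$ (with Assumption~\ref{assumption:eigenvec} reducing $\B_{n,2}$ to its diagonal form, and $\kappa=3$ rendering that term irrelevant), followed by dominated convergence through the Helffer--Sj\"ostrand integral, with the bootstrap version handled via $\tilde H_n\Rightarrow H$ and $\sup_n\lambda_1(\tilde\Lambda_n)<\infty$ almost surely and $\hat\kappa_n\xrightarrow{\P}\kappa$ from Theorem~\ref{THM:CONSIST}, passing to almost-sure subsequences. The only difference is in bookkeeping: the paper outsources both the pointwise calculations and the integrable envelope to~\cite{Najim:Yao:2016} (p.~1852 for the limit, Proposition~6.2 and line~7.4 for the dominating function), whereas you build the envelope more explicitly from Lemma~\ref{LEM:STIELTJES} and the $|\Im(z)|^{r}$ decay of $\bar\partial\Phi_r(\f)$ --- and you correctly flag the uniform-in-$n$ denominator lower bound as the technical crux, which the paper resolves by that same citation.
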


\begin{proof} Under the spectrum regularity condition (Assumption~\ref{assumption:esd}), the calculations on p.1852 of~\citep{Najim:Yao:2016} show that as $n\to\infty$, the quantity $\B_{n,1}(z)$ converges to the following limit for any fixed $z\in\C^+$,
\begin{equation}
 \B_{n,1}(z) \xrightarrow{ \  \ } \B_1(z):=\frac{-\gamma z^3 \underline{t}^3(z)}{(1-\mathcal{A}(z))^2}\int\frac{\lambda^2}{(1+\lambda \underline{t}(z))^3}dH(\lambda),
\end{equation}
where we define
\begin{equation}
 \mathcal{A}(z)=\gamma \,\underline{t}^2(z)\int\frac{\lambda^2}{(1+\lambda\underline{t}(z))^2}\, dH(\lambda).
\end{equation}
Furthermore, if the eigenvector regularity condition,~Assumption~\ref{assumption:eigenvec}, also holds, then the same set of calculations~\citep[p.1852]{Najim:Yao:2016} gives the limit
\begin{equation}
\B_{n,2}(z)\xrightarrow{ \ \ } \B_2(z):=\{1-\mathcal{A}(z)\}\mathcal{B}_{1}(z).
%\frac{-\gamma z^3 \underline{t}^3(z)}{1-\mathcal{A}(z)}\int \frac{\lambda^2}{1+\lambda\underline{t}(z))^3}dH(\lambda).
\end{equation}
The reason that Assumption~\ref{assumption:eigenvec} is needed here is that it ensures that $\B_{n,2}(z)$ behaves as if $\Sigma_n$ is diagonal, which is required in the calculations just mentioned. Likewise, if we put $\B(z)=\B_1(z)+(\kappa-3)\B_2(z)$, then
$$\B_n(z)\to \B(z).$$
In the case $\kappa=3$, the term $\B_2(z)$ becomes irrelevant, and then Assumption~\ref{assumption:eigenvec} is no longer needed for handling the limit $\B_{n,2}(z)\to\B_2(z)$. 

To apply the work above, recall that $\mathfrak{b}_n(\f)=\frac{1}{\pi p}\Re\int_{\C^+}\bar\partial \Phi_{7}(\f)(z)\mathcal{B}_n(z)d\ell_2(z)$. Hence, if we define
\begin{equation}
\beta(\f)=\frac{1}{\pi}\Re \displaystyle\int_{\C^+}\bar\partial \Phi_{7}(\f)(z)\B(z)d\ell_2(z),
\end{equation}
then the dominated convergence theorem will give the claimed limit~\eqref{EQ:BETALIM1}, provided that $|\bar\partial \Phi_{7}(\f)(z)\B_n(z)|$ is dominated by a fixed integrable function on $\C^+$.
%if the following bound holds for some integrable function $g$ on $\C^+$,
For this purpose, the proof of~Proposition 6.2 and the bound in line~7.4 of~\citep{Najim:Yao:2016} show that if the functions $\f=(f_1,\dots,f_m)$ lie in $\mathscr{C}_c^8(\R)$, then there is an integrable function $g$ on $\C^+$ such that
$$\sup_n |\bar\partial \Phi_{7}(\f)(z)\B_n(z)| \ \leq \ |g(z)|,$$
for every $z\in\C^+$.
This completes the proof of the limit~\eqref{EQ:BETALIM1}.

The proof of the limit~\eqref{EQ:BETALIM2} is largely similar, but with a few minor differences. The main points to notice are that in the bootstrap world, the diagonal matrix $\tilde \Lambda_n$ plays the role of the population covariance matrix, and the associated spectral distribution satisfies $\tilde H_n\Rightarrow H$ almost surely, by Theorem~\ref{THM:CONSIST}. Therefore, the calculations from~\citep[p.~1852]{Najim:Yao:2016} may be re-used to show that for each $z\in\C^+$, and each $l\in\{1,2\}$, the following limit holds
\begin{equation}
\hat \B_{n,l}(z) \to \B_l(z)  \ \ \text{ almost surely.}
\end{equation}
Likewise, since Theorem~\ref{THM:CONSIST} gives $\hat\kappa_n\xrightarrow{ \ \P \ } \kappa$, it follows that the quantity $\hat\B_n(z)$ (defined in line~\eqref{EQ:BHATDEF}) satisfies $\hat\B_n(z)\xrightarrow{ \ \P \ } \B(z)$. Furthermore, the previous dominated convergence argument for $|\bar\partial \Phi_{7}(\f)(z)\B_n(z)|$ can be essentially repeated for $|\bar\partial \Phi_{7}(\f)(z)\hat\B_n(z)|$, which leads to the limit~\eqref{EQ:BETALIM2}.
\end{proof}

\end{document}